\newtheorem{lemma}{Lemma}
\newtheorem{thm}{Theorem}
\newtheorem{assum}{Assumption}
\newtheorem{remark}{Remark}
\DeclareMathOperator*{\sgn}{sgn}
\DeclareMathOperator*{\diag}{diag}
\DeclareMathOperator*{\cov}{cov}
\DeclareMathOperator*{\var}{var}
\newcommand{\bm}[1]{\mbox{\boldmath{$#1$}}}
\numberwithin{equation}{section}
\title{Asymmetric linear double autoregression}
\author{Songhua Tan and Qianqian Zhu\\
\emph{School of Statistics and Management}\\
\emph{Shanghai University of Finance and Economics}
}
\date{}
\begin{document}
	\maketitle
	\begin{abstract}
		This paper proposes the asymmetric linear double autoregression, which jointly models the conditional mean and conditional heteroscedasticity characterized by asymmetric effects. 
		A sufficient condition is established for the existence of a strictly stationary solution. 
		With a quasi-maximum likelihood estimation (QMLE) procedure introduced, a Bayesian information criterion (BIC) and its modified version are proposed for model selection. 
		To detect asymmetric effects in the volatility, the Wald, Lagrange multiplier and quasi-likelihood ratio test statistics are put forward, and their limiting distributions are established under both null and local alternative hypotheses. 
		Moreover, a mixed portmanteau test is constructed to check the adequacy of the fitted model. 
		All asymptotic properties of inference tools including QMLE, BICs, asymmetric tests and the mixed portmanteau test, are established without any moment condition on the data process, which makes the new model and its inference tools applicable for heavy-tailed data. 
		Simulation studies indicate that the proposed methods perform well in finite samples, 
		and an empirical application to S\&P500 Index illustrates the usefulness of the new model. 
	\end{abstract}
	{\it Keywords:} Asymmetry tests; Autoregressive time series model; Model selection; Portmanteau test; Quasi-maximum likelihood estimation; Stationary solution. %
	
	\newpage
	
	\section{Introduction} \label{Introduction}
	
	Volatility clustering is a major feature of financial time series, and the ability to forecast volatility is of vital importance for the pricing and risk management of financial assets. To capture the time-varying volatility, many conditional heteroscedastic models are proposed, and among them, the autoregressive conditional heteroscedastic (ARCH) and the generalized autoregressive conditional heteroscedastic (GARCH) models \citep{Engle1982,Bollerslev1986} are very successful specifications.  
	However, empirical facts indicate that the autocorrelation and volatility dynamics usually coexist in time series; see, for example, the daily returns of NASDAQ Composite Index in \cite{kuester2006value} and the weekly or monthly returns of S\&P500 index in \cite{linton2005estimating}. 
	As a result, to better capture the volatility dynamics in the presence of data autocorrelations, it is necessary to jointly model the conditional mean and volatility \citep{Li_Ling_McAleer2002}.   
	The autoregressive moving average models with GARCH errors (ARMA-GARCH) and double autoregressive (DAR) models are popular specifications for this purpose. 
	
	In financial applications, ARMA-GARCH models are commonly used to fit return series \citep{Francq2019}. 
	Many researchers have studied the estimation for ARMA-GARCH models; see among others, \cite{Francq_Zakoian2004} and \cite{Zhu_Ling2011}. \cite{Francq_Zakoian2004} has shown that, a finite fourth moment on the data process is required to establish the asymptotic normality of the Gaussian quasi-maximum likelihood estimator (QMLE) for the ARMA-GARCH model. This largely narrows down the application of ARMA-GARCH models since the heavy-taildeness is very common for financial data.    
	Meanwhile, the DAR model proposed by \cite{Ling2007} has recently attracted growing attention. The DAR model  of order $p$ is defined as 
	\begin{equation}\label{DAR}
	y_{t}=\sum_{i=1}^{p} \phi_{i} y_{t-i}+\varepsilon_{t}\sqrt{\omega+\sum_{i=1}^{p}\beta_{i} y_{t-i}^2},
	\end{equation}
	where $\omega>0$, $\beta_i\geq 0$ for $1\leq i\leq p$, and $\{\varepsilon_t\}$ are independent and identically distributed ($i.i.d.$) innovations with zero mean and unit variance. In contrast to the ARMA-GARCH model, the Gaussian QMLE of model \eqref{DAR} is asymptotically normal provided that $y_t$ has a fractional moment \citep{Ling2007}. 
	This important property makes model \eqref{DAR} suitable for handling heavy-tailed data in application. 
	In addition, the DAR model with order $p=1$ can still be stationary even if $|\phi_1|>1$ or $\beta_1>1$, thus it enjoys a larger parameter space than conventional AR and AR-ARCH models.
	
	Many variants of DAR models have been widely proposed and studied, such as the threshold DAR \citep{Li2016_TDAR}, the mixture DAR \citep{Li_Zhu_Liu_Li2017}, the linear DAR \citep{Zhu2018_LDAR} and the augmented DAR \citep{Jiang2020} models.  
	Specifically, the linear DAR model of order $p$ has the form of  
	\begin{equation}\label{LDAR}
	y_{t}=\sum_{i=1}^{p} \phi_{i} y_{t-i}+\varepsilon_{t}\left(\omega+\sum_{i=1}^{p}\beta_{i} |y_{t-i}|\right),
	\end{equation}
	where the innovations $\{\varepsilon_t\}$ and parameters are defined as in model \eqref{DAR}.
	Model \eqref{LDAR} assumes that the conditional standard deviation rather than the conditional variance of $y_t$ is in a linear structure, which can lead to more robust inference than model \eqref{DAR}; see \cite{Taylor2008} and \cite{Zhu2018_LDAR}. 
	As shown by \cite{Zhu2018_LDAR}, the linear DAR model has a larger parameter space than conventional AR and AR-ARCH models as for DAR models. 
	Moreover, the asymptotic normality of the Gaussian QMLE can also be established for model \eqref{LDAR} without any moment restrictions on $y_t$; see \cite{liuhua2020QMLI}. 
	As a result, the linear DAR model enjoys the important property of DAR models and hence can also be used to fit heavy-tailed data. 
	
	It is well known that financial time series are usually characterized by asymmetry (leverage) effects, in the sense that the volatility of financial returns tends to be higher after a decrease than an equal increase. The leverage effect was documented by many authors as a stylized fact of stock returns; see, for example, \cite{black1976studies}, \cite{rabemananjara1993thresholdARCH} and \cite{Francq2013_nonstationary_agarch}. 
	To account for the leverage phenomenon, many variants of classical GARCH models are introduced and studied, such as the exponential GARCH \citep{nelson1991conditional}, the threshold GARCH \citep{zakoian1994threshold} and the power GARCH \citep{Pan_Wang_Tong2008} models. 
	Specifically, \cite{engle1993measuring} defined the news impact curve to measure how new information is incorporated into volatility estimates, and based on this curve they provided diagnostic tests to detect asymmetric effects of news on volatility. 
	However, limited literatures investigate the leverage effect in the presence of the conditional mean structure, even rare in the framework of DAR models. To fill this gap, we propose an asymmetric linear DAR model, which can be regarded as a modification of the linear DAR model along the lines of the threshold GARCH model. 
	Hopefully, the new model can preserve the advantages of DAR type models in handling with heavy-tailed data and meanwhile be able to capture the asymmetric effect successfully. 
	The main contributions of this paper are listed as follows. 
	
	First, Section \ref{ALDAR} introduces an asymmetric linear DAR model to capture leverage effects, where the coefficients for positive and negative parts of $y_t$ in the conditional standard deviation can be different. 
	We establish a sufficient condition for the strict stationarity and ergodicity of the new model by showing that the Markov chain $\bm Y_t=(y_t, \ldots, y_{t-p+1})^{\prime}$ is $\nu_p$-irreducible and satisfies Tweedie's drift criterion \citep{Tweedie1983}; see also \cite{Zhu2018_LDAR}. It is shown that the stationary region of the proposed process depends on the moment condition of innovations and the degree of asymmetry. Moreover, the proposed process can still be stationary even if some AR coefficient is greater than one, leading to a large stationary region as for the DAR and linear DAR models.

	Secondly, Section \ref{estimation} proposes the Gaussian QMLE for the new model and establishes its consistency and asymptotic normality. 
	Particularly, the consistency is carefully considered to avoid the identification problem due to the coexistence of the positive and negative parts of $y_t$, and the asymptotic normality is established without any moment condition on $y_t$. Hence, the new model can be used to fit heavy-tailed data. 
	Moreover, based on the QMLE, a Bayesian information criterion (BIC) is proposed for model selection, and a modified BIC is introduced to improve the finite-sample performance. It is shown that both BICs enjoy the selection consistency without any moment condition on the process $\{y_t\}$, and the modified BIC usually outperforms the unmodified BIC especially for small and moderate samples. 
	As a result, the first two stages of Box-Jenkins' procedure including model specification and estimation are constructed for the new model in a robust way, which facilitates its application to financial time series. 
	
	Thirdly, to detect the asymmetry in volatility, the Wald, Lagrange multiplier (LM) and quasi-likelihood ratio (QLR) tests are constructed in Section \ref{test}. Under the null and local alternative hypotheses, the Wald and LM test statistics are shown to have the same limiting distributions, while the QLR test statistic converges to weighted sums of $i.i.d.$ central and noncentral chi-squared random variables, respectively. 
	It is noteworthy that, to show the asymptotic distributions of three test statistics under local alternatives, we need to verify the local asymptotic normality (LAN) of the proposed model. 
	However, without the normal assumption on the innovation term, the Le Cam's third lemma cannot be employed to show the LAN property \citep{Vaart2000_LAN}. Alternatively, we show that the new model satisfies the LAN in a direct way; see also \cite{Jiang2020}. 
	Moreover, we conduct simulation experiments to compare the local power of all three tests in finite samples, and simulation results further support the theoretical findings. 
	
	Finally, we investigate diagnostic checking for fitted models using a mixed portmanteau test in Section \ref{checking}. 
	The portmanteau test for pure mean models \citep{Ljung_Box1978} is constructed using the sample autocorrelation functions (ACFs) of residuals, while that for volatility models \citep{Li_Li2008} employs the ACFs of squared or absolute residuals. 
	As validated by simulation findings of \cite{Li_Li2005}, the portmanteau test based on absolute residuals is less powerful than that based on squared residuals under heavy-tailed situations. As a result, 
	this paper proposes a mixed portmanteau test \citep{Wong2005mixed_portmanteau} using the ACFs of residuals and absolute residuals to detect inadequacy in the conditional mean and volatility of the fitted new model. The joint limiting distribution for ACFs of residuals and absolute residuals is established without any moment condition on the process $\{y_t\}$. Simulation results validate that the mixed test can detect inadequacy of fitted models due to either the conditional mean or the volatility. Therefore, as the last stage of Box-Jenkins' procedure, the diagnostic checking tool is successfully constructed in a robust way as well. 

	In addition, Section \ref{simulation} conducts simulation studies to evaluate the finite-sample performance of all inference tools for the proposed model. 
	Section \ref{real_data} illustrates the usefulness of the new model by analyzing the S\&P500 Index and demonstrates the forecasting superiority over its counterparts especially for the skewed and heavy-tailed data. 
	The conclusion and discussion appear in Section \ref{conclusion}. 
	All technical details are relegated to the Appendix. 
	Throughout the paper, $\mathcal{N}$ denotes the integer, $\rightarrow_p$ and $\rightarrow_{\mathcal{L}}$ denote the convergences in probability and in distribution, respectively, and $o_p(1)$ denotes a sequence of random variables converging to zero in probability.
	
	\section{Asymmetric linear double autoregression} \label{ALDAR}
	
	Consider the asymmetric linear double autoregressive (DAR) model  of order $p$,
	\begin{equation}
	y_{t}=\sum_{i=1}^{p} \alpha_{i} y_{t-i}+\eta_{t}\left(\omega+\sum_{i=1}^{p}\left(\beta_{i+} y_{t-i}^{+}-\beta_{i-} y_{t-i}^{-}\right)\right),
	\label{model}
	\end{equation}
	where $\omega>0,\beta_{i+},\beta_{i-}\geq0$ for $1\leq i\leq p$, $y^+_t = \max\left\lbrace 0,y_t\right\rbrace$ and $y^-_t = \min\left\lbrace 0,y_t\right\rbrace$ are positive and negative parts of $\left\lbrace y_t\right\rbrace$, respectively, and $\{\eta_{t}\}$ is a sequence of $i.i.d.$ random variables with mean zero and variance one.
	The asymmetric linear DAR model in \eqref{model} is an extension of the linear DAR model \citep{Zhu2018_LDAR} along the lines of the threshold GARCH model \citep{zakoian1994threshold}.
	Although the linear DAR model can be extended to allow for asymmetries in both the conditional mean and conditional heteroscedasticity, this paper focuses on model \eqref{model} to take account for the asymmetry in volatilities. The real example of stock index returns in Section \ref{real_data} provides evidence for this motivation.
	
	For general distributions of $\eta_{t}$, it is difficult to derive a necessary and sufficient condition for the strict stationarity due to the nonlinearity of model \eqref{model}; see also \cite{Li2016_TDAR} and \cite{Zhu2018_LDAR}.
	Alternatively, a sufficient condition is provided below.
	\begin{assum}
		The density function of $\eta_{t}$ is continuous and positive everywhere on $\mathbb{R}$, and $E(|\eta_{t}|^\kappa)<\infty$ for some $\kappa>0$.
		\label{assum1}
	\end{assum}
	\begin{thm}
		Under Assumption \ref{assum1}, if either of the following conditions holds: \\
		(i) for $0<\kappa\leq 1$, $\sum^p_{i=1}\max\left\lbrace E\left( |\alpha_i-\beta_{i-}\eta_t|^\kappa \right), E\left( |\alpha_i+\beta_{i+}\eta_t|^\kappa \right) \right\rbrace<1$; \\
		(ii) for $\kappa \in \{2,3,4,\ldots\}$, $E\left[\left( \sum_{i=1}^{p}\max\left\lbrace|\alpha_i+\beta_{i+}\eta_{t}|,|\alpha_i-\beta_{i-}\eta_{t}|\right\rbrace \right)^\kappa\right]<1$;\\
		then there exists a strictly stationary solution $\{ y_t \}$ to model \eqref{model}, and this solution is unique and geometrically ergodic with $E\left(|y_t|^\kappa \right)<\infty$.
		\label{thm1Stationarity}
	\end{thm}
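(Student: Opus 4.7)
The plan is to study the Markov chain $\bm Y_t = (y_t, y_{t-1}, \ldots, y_{t-p+1})'$ on $\mathbb{R}^p$ and invoke Tweedie's drift criterion (\cite{Tweedie1983}) to obtain a unique, geometrically ergodic invariant distribution, from which the claimed strictly stationary solution with $E|y_t|^\kappa<\infty$ follows. I would follow the line of argument used for the linear DAR model in \cite{Zhu2018_LDAR}, modifying the Lyapunov (test) function to accommodate the separate coefficients $\beta_{i+}$ and $\beta_{i-}$ on the positive and negative parts.

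First I would establish the $\nu_p$-irreducibility, aperiodicity, and small-set property of compact sets. Conditional on $\bm Y_{t-1}$, the law of $y_t$ is an affine transformation of $\eta_t$ whose scale $\omega+\sum_i(\beta_{i+}y_{t-i}^{+}-\beta_{i-}y_{t-i}^{-})\geq\omega>0$ is strictly positive, so by Assumption \ref{assum1} the conditional law of $y_t$ given $\bm Y_{t-1}$ admits a continuous, everywhere positive density on $\mathbb{R}$. Iterating $p$ steps, the $p$-step transition kernel of $\bm Y_t$ then dominates a continuous positive density on $\mathbb{R}^p$, yielding $\nu_p$-irreducibility and aperiodicity; combined with Feller continuity this ensures that compact sets are petite, the standing hypothesis of Tweedie's criterion.

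Next I would verify the geometric drift condition
\begin{equation*}
E[V(\bm Y_t)\mid \bm Y_{t-1}=\bm y]\leq(1-\delta)V(\bm y)+b\,\mathbf{1}_C(\bm y)
\end{equation*}
for a compact set $C$, exploiting the rewriting
\begin{equation*}
y_t=\omega\eta_t+\sum_{i=1}^p\left[(\alpha_i+\beta_{i+}\eta_t)y_{t-i}^{+}+(\alpha_i-\beta_{i-}\eta_t)y_{t-i}^{-}\right]
\end{equation*}
of model \eqref{model}. For case (i), where $0<\kappa\leq 1$, I would take $V(\bm y)=1+\sum_{i=1}^p c_i|y_i|^\kappa$ and combine the subadditivity $|a+b|^\kappa\leq|a|^\kappa+|b|^\kappa$ with the identity $(y_{t-i}^{+})^\kappa+|y_{t-i}^{-}|^\kappa=|y_{t-i}|^\kappa$ to obtain
\begin{equation*}
E\bigl[|y_t|^\kappa\bigm|\mathcal{F}_{t-1}\bigr]\leq\omega^\kappa E|\eta_t|^\kappa+\sum_{i=1}^p m_i|y_{t-i}|^\kappa,\quad m_i=\max\bigl\{E|\alpha_i+\beta_{i+}\eta_t|^\kappa,\,E|\alpha_i-\beta_{i-}\eta_t|^\kappa\bigr\}.
\end{equation*}
Condition (i) gives $\sum_i m_i<1$, so a triangular choice of weights (e.g.\ $c_1=1$ and $c_{i+1}=\rho c_i-m_i$ for $\rho$ chosen slightly below $1$) closes the drift. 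For case (ii) with integer $\kappa\geq 2$, subadditivity fails; instead I would use $V(\bm y)=1+(\sum_i c_i|y_i|)^\kappa$ together with the pathwise bound $|y_t|\leq\omega|\eta_t|+\sum_i\max\{|\alpha_i+\beta_{i+}\eta_t|,|\alpha_i-\beta_{i-}\eta_t|\}\,|y_{t-i}|$ and then expand the $\kappa$-th power via the multinomial theorem, invoking condition (ii) on the leading cross moment.

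The main obstacle is the integer-$\kappa$ case: after the multinomial expansion, one must dominate numerous cross-moment terms by $\rho$ times the corresponding terms coming from $V(\bm y)$. The trick, following \cite{Zhu2018_LDAR}, is to choose the weights $c_i$ in a triangular fashion so that the highest-order contribution reduces exactly to $\rho\, E\bigl[\bigl(\sum_i\max\{|\alpha_i+\beta_{i+}\eta_t|,|\alpha_i-\beta_{i-}\eta_t|\}\bigr)^\kappa\bigr]V(\bm y)$ with $\rho<1$, while all lower-degree polynomial remainders in $|y_i|$ are absorbed into the constant $b$ on a sufficiently large compact set $C$. Once the drift condition is verified, Tweedie's criterion delivers a unique invariant probability measure, geometric ergodicity of $\bm Y_t$, and finiteness of $E|y_t|^\kappa$, completing the proof.
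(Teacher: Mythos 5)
Your overall architecture --- $\nu_p$-irreducibility via the everywhere-positive transition density, the Feller/small-set argument, and Tweedie's drift criterion --- is exactly the paper's, and your case (i) (subadditivity of $x\mapsto x^{\kappa}$, the coefficients $m_i=\max\{E|\alpha_i+\beta_{i+}\eta_t|^{\kappa},E|\alpha_i-\beta_{i-}\eta_t|^{\kappa}\}$, and a triangular choice of weights) coincides with the paper's proof. The gap is in case (ii). The paper does \emph{not} switch to the test function $V(\bm y)=1+\bigl(\sum_i c_i|y_i|\bigr)^{\kappa}$; it keeps the additive function $g(\bm x)=1+|x_1|^{\kappa}+\sum_{i=1}^{p-1}r_i|x_{i+1}|^{\kappa}$ from case (i). The step that makes this possible, and that is missing from your sketch, is the inequality $|x_1|^{n_1}\cdots|x_p|^{n_p}\leq \sum_{i=1}^{p}n_i|x_i|^{n_1+\cdots+n_p}/(n_1+\cdots+n_p)$ (the paper's \eqref{fact1}), applied to each top-degree term of the multinomial expansion of $\bigl(\sum_i\delta_i|x_i|+\omega|\eta_{t+1}|\bigr)^{\kappa}$ with $\delta_i=\max\{|\alpha_i+\beta_{i+}\eta_{t+1}|,|\alpha_i-\beta_{i-}\eta_{t+1}|\}$. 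This collapses the conditional moment to $\sum_i a_i|x_i|^{\kappa}+o(\|\bm x\|_{\kappa}^{\kappa})$ with $a_i=E[\delta_i(\sum_j\delta_j)^{\kappa-1}]$, and the identity $\sum_i a_i=E[(\sum_i\delta_i)^{\kappa}]<1$ is exactly condition (ii), so the case (i) triangular construction applies verbatim.

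Your alternative test function does not obviously close under condition (ii). With $c_1=1$, $c_{p+1}=0$ and $u_i=|y_{t+1-i}|$, your drift demands $E\bigl[\bigl(\sum_i(\delta_i+c_{i+1})u_i\bigr)^{\kappa}\bigr]\leq\rho\bigl(\sum_i c_iu_i\bigr)^{\kappa}$ uniformly in $u\geq 0$; since $w\mapsto E[(\sum_i w_i\gamma_i)^{\kappa}]$ is convex in the weights, its supremum over the simplex sits at a vertex, so the requirement is equivalent to $E[(\delta_i+c_{i+1})^{\kappa}]\leq\rho\, c_i^{\kappa}$ for every $i$. Eliminating the $c_i$ gives a nested condition of the form $\bigl\|\delta_1+\|\delta_2+\cdots+\|\delta_p\|_{\kappa}\cdots\|_{\kappa}\bigr\|_{\kappa}<1$, which is in general \emph{stronger} than $\|\sum_i\delta_i\|_{\kappa}<1$: e.g.\ for $p=2$, $\kappa=2$, $\delta_1\equiv 0.5$ and $\delta_2$ non-degenerate with $\|\delta_2\|_2=0.5$, condition (ii) holds because $E[(\delta_1+\delta_2)^2]=0.5+E\delta_2<1$, yet $\|\delta_1+\|\delta_2\|_2\|_2=1$, so no admissible $c_2,\rho$ exist. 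You therefore need the rearrangement inequality above (or an equivalent device) to reduce the cross terms before the triangular weighting; as written, your case (ii) proves the theorem only under a strictly stronger hypothesis.
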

	
	The stationarity region in Theorem \ref{thm1Stationarity} depends on the distribution of $\eta_{t}$ and implies a moment condition on $y_t$.
	In addition, when $\eta_{t}$ has a symmetric distribution and the asymmetric linear DAR model reduces to a linear DAR model, that is $\beta_{i-}=\beta_{i+}=\beta_{i}$, then it simplifies to $\sum^p_{i=1}E\left(|\alpha_i+\beta_{i}\eta_t|^\kappa \right)<1$ for $0<\kappa\leq 1$, and to $E\left[\sum^p_{i=1}(|\alpha_i|+\beta_{i}|\eta_t|)\right]^\kappa<1$ for $\kappa \in \{2,3,4,\ldots\}$.
	Since the stationarity region of model \eqref{model} is at least three-dimensional, for illustration, we provide the stationarity regions of model \eqref{model} of order one, and consider $\beta_{1+}=d\beta_{1-}$ with the constant $d$ being different positive values. 
	Figure \ref{fig1}(a) indicates that model \eqref{model} of order one can be stationary if $|\alpha_1|\geq 1$, hence model \eqref{model} preserves a large parameter space as DAR and linear DAR models.
	As shown in Figure \ref{fig1}(b), a larger value of $\kappa$ in Theorem \ref{thm1Stationarity} leads to a higher moment of $y_t$, and hence results in a narrower stationarity region.
	Moreover, Figure \ref{fig1}(c) shows that the stationarity region gets smaller as the asymmetry in volatilities becomes greater.
	
	\begin{remark}\label{remark-model order}
		The order of model \eqref{model} can be different for the conditional mean and volatility, that is, we can consider the asymmetric linear DAR model of order $(p_1,p_2)$ 
		\[y_{t}=\sum_{i=1}^{p_1} \alpha_{i} y_{t-i}+\eta_{t}\left(\omega+\sum_{i=1}^{p_2}\left(\beta_{i+} y_{t-i}^{+}-\beta_{i-} y_{t-i}^{-}\right)\right),\] 
		where $p_1$ and $p_2$ are positive integers. 
		For the strict stationarity of this general model setting, Theorem \ref{thm1Stationarity} still holds by letting $p=\max\{p_1,p_2\}$ with $\alpha_{i}=0$ for $i>p_1$ and $\beta_{i+}=\beta_{i-}=0$ for $i>p_2$. 	
		However, if $p_1>p_2$, the conditional scale structure $\omega+\sum_{i=1}^{p_2}\left(\beta_{i+} y_{t-i}^{+}-\beta_{i-} y_{t-i}^{-}\right)$ cannot be used to reduce the moment condition on $y_{t}$ for showing the asymptotic normality of the quasi-maximum likelihood estimator in Section \ref{estimation}. 
		To establish the asymptotic normality without any moment condition on $y_{t}$, other estimation methods such as the self-weighted approach \citep{Ling2005} should be considered. We leave this extension for future research, and consider the same order $p=p_1=p_2$ in this paper. 
	\end{remark}	 
	
	\section{Model estimation} \label{estimation}
	
	\subsection{Quasi-maximum likelihood estimation} \label{QMLE}
	Let $\bm \theta = (\bm \alpha^\prime,\bm \beta^\prime)^{'}$ be the parameter vector of model \eqref{model}, where $\bm{\alpha} = (\alpha_1,\alpha_2,\ldots,\alpha_p)^{'}$, $\bm \beta = (\omega,\bm \beta^\prime_+,\bm \beta^\prime_-)^\prime$ with $\bm \beta_+ = (\beta_{1+},\beta_{2+},\ldots,\beta_{p+})^\prime$ and $\bm \beta_- = (\beta_{1-},\beta_{2-},\ldots,\beta_{p-})^\prime$.
	Denote the true parameter vector by $\bm \theta_0 = (\bm{\alpha}^\prime_{0},\bm{\beta}^\prime_{0})^\prime$ and the parameter space by  $\Theta$, where $\Theta$ is a compact subset of $\mathbb{R}^{p} \times \mathbb{R}_{+}^{2p+1}$ with $\mathbb{R}_{+}=(0,\infty)$.
	
	Let $\bm Y_{t} = (y_t,\ldots,y_{t-p+1})^{\prime}$ and $\bm X_t =(1,\bm Y^{\prime}_{t+},-\bm Y^{\prime}_{t-})^{\prime}$, where $\bm Y_{t+} = (y_t^{+},\ldots,y_{t-p+1}^{+})^{\prime}$ and $\bm Y_{t-} = (y_t^{-},\ldots,y_{t-p+1}^{-})^{\prime}$.
	The conditional log-likelihood function (ignoring a constant) can be written as
	\begin{equation}
	L_n(\bm \theta) = \sum_{t=p+1}^{n}\ell_t(\bm \theta) \quad \text{and} \quad \ell_{t}(\bm \theta)=-\ln \left(\bm \beta^{\prime} \bm X_{t-1}\right)-\dfrac{\left(y_{t}-\bm \alpha^{\prime} \bm Y_{t-1}\right)^{2}}{2\left(\bm{\beta}^{\prime} \bm{X}_{t-1}\right)^{2}}.
	\label{likelihood function}
	\end{equation}
	Then the quasi-maximum likelihood estimator (QMLE) of $\bm \theta_0$ can be defined as
	\begin{equation}
	\label{optimization}
	\widehat{\bm \theta}_n=\arg \max_{\bm \theta \in \Theta} L_n(\bm \theta).
	\end{equation}
	
	\begin{assum}\label{assum_y_t}
		$\left\lbrace y_t:t\in\mathcal{N}\right\rbrace $ is strictly stationary and ergodic with $E(|y_t|^\kappa)<\infty$ for some $\kappa>0$.
	\end{assum}
	
	\begin{assum}\label{assum_eta_t}
		The density function of $\eta_{t}$ is continuous and positive everywhere on $\mathbb{R}$. 
	\end{assum}
	
	\begin{assum}
		The parameter space $\Theta$ is compact with $\underline{\omega}\leq \omega \leq \bar{\omega}$, $\underline{\beta}\leq \beta_{i-},\beta_{i+} \leq \bar{\beta}$ for $i=1,\ldots,p$, where $\underline{w}, \bar{w}, \underline{\beta}, \bar{\beta}$ are some positive constants. The true parameter vector $\bm\theta_0$ is an interior point in $\Theta$.
		\label{assumption_compact}
	\end{assum}
	For the strict stationarity of $\{y_t\}$ in Assumption \ref{assum_y_t}, a sufficient condition is given in Theorem \ref{thm1Stationarity}. 
	Assumption \ref{assum_eta_t} is imposed for identifying the unique maximizer of $E[\ell_t(\bm \theta)]$ at $\bm \theta_0$; see also \cite{francq2012qml}.
	Assumption \ref{assumption_compact} is required to ensure the log-likelihood function, score function and information
	matrix to be bounded without any moment restrictions on $y_t$; see also \cite{Ling2007}.
	As a result, the model based on the QMLE can be applied to heavy-tailed data.
	
	Let $\kappa_1=E(\eta_t^3)$ and $\kappa_2=E(\eta_t^4)-1$. Define the $(3p+1)\times(3p+1)$ matrices
	\[
	\Omega= E\left[\dfrac{\partial \ell_{t}\left(\bm \theta_{0}\right)}{\partial \bm \theta} \dfrac{\partial \ell_{t}\left(\bm \theta_{0}\right)}{\partial \bm \theta^{\prime}}\right]=E\left(\begin{array}{cc}
	{\dfrac{\bm Y_{t-1} \bm Y_{t-1}^{\prime}}{\left(\bm \beta_{0}^{\prime} \bm X_{t-1}\right)^{2}}} & {\dfrac{\kappa_1\bm Y_{t-1} \bm X_{t-1}^{\prime} }{\left(\bm \beta_{0}^{\prime} \bm X_{t-1}\right)^{2}}} \\
	{\dfrac{\kappa_1\bm X_{t-1} \bm Y_{t-1}^{\prime}}{\left(\bm \beta_{0}^{\prime} \bm X_{t-1}\right)^{2}}} & {\dfrac{\kappa_2\bm X_{t-1} \bm X_{t-1}^\prime}{\left(\bm \beta_{0}^{\prime} \bm X_{t-1}\right)^{2}}}
	\end{array}\right),
	\]
	and
	\begin{equation}\label{Sigma}
	\Sigma = -E\left[ \dfrac{\partial^2 \ell_t(\bm \theta_0)}{\partial \bm \theta \partial\bm \theta^\prime} \right] =\diag\left\{E\left[\frac{\bm Y_{t-1} \bm Y_{t-1}^{\prime}}{(\bm \beta_{0}^{\prime} \bm X_{t-1})^2}\right], E\left[\frac{2 \bm X_{t-1} \bm X_{t-1}^{\prime}}{\left(\bm{\beta}_{0}^{\prime} \bm{X}_{t-1} \right)^{2}}\right]\right\}.
	\end{equation}
	
	\begin{thm}\label{thm2QMLE}
		Suppose that Assumptions \ref{assum_y_t}--\ref{assumption_compact} hold. Then, 
		\begin{itemize}
			\item[(i)] $\widehat{\bm \theta}_{n} \rightarrow_{p} \bm \theta_{0}$ as $n\to \infty$; 
			\item[(ii)] furthermore, if $E(\eta^4_{t}) < \infty$ and the matrix $D=\left(\begin{matrix}
			1 & \kappa_1\\
			\kappa_1& \kappa_2
			\end{matrix}
			\right)$ is positive definite, then $\sqrt{n}(\widehat{\bm\theta}_{n}-\bm\theta_{0}) \rightarrow_{\mathcal{L}} N\left(\bm 0, \Xi\right)$ as $n\to \infty$, where $\Xi=\Sigma^{-1}\Omega\Sigma^{-1}$.
		\end{itemize}
	\end{thm}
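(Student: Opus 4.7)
The plan is the standard two-step M-estimator argument, executed carefully so that neither step requires more than a fractional moment on $y_t$. For part (i), I would first show that the expected criterion $\bar\ell(\bm\theta):=E[\ell_t(\bm\theta)]$ is well defined on $\Theta$ and is uniquely maximized at $\bm\theta_0$. The upper bound $\bm\beta'\bm X_{t-1}\leq C(1+\sum_{i=1}^p|y_{t-i}|)$ together with the lower bound $\bm\beta'\bm X_{t-1}\geq\underline{\omega}>0$ from Assumption \ref{assumption_compact} controls the logarithmic term, and the squared residual divided by $(\bm\beta'\bm X_{t-1})^2$ is dominated by a polynomial in $|y_{t-i}|/\underline{\omega}$; since $E|y_t|^\kappa<\infty$ for some $\kappa>0$, the Jensen/$C_r$ inequalities yield $E|\ell_t(\bm\theta)|<\infty$ uniformly in $\bm\theta\in\Theta$. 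Then I would use the standard decomposition $\bar\ell(\bm\theta_0)-\bar\ell(\bm\theta)=E[\log(h_t(\bm\theta)/h_t(\bm\theta_0))+h_t(\bm\theta_0)^2/h_t(\bm\theta)^2-1]/2+E[(\bm\alpha_0-\bm\alpha)'\bm Y_{t-1})^2/h_t(\bm\theta)^2]/2$, where $h_t(\bm\theta)=\bm\beta'\bm X_{t-1}$, and exploit that $u-1-\log u\geq 0$ with equality iff $u=1$.

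The main obstacle in (i), as the authors flag, is the identification argument: the nonnegativity above forces $\bm\alpha'\bm Y_{t-1}=\bm\alpha_0'\bm Y_{t-1}$ and $\bm\beta'\bm X_{t-1}=\bm\beta_0'\bm X_{t-1}$ almost surely, and I must extract $\bm\alpha=\bm\alpha_0$, $\bm\beta_+=\bm\beta_{0,+}$, $\bm\beta_-=\bm\beta_{0,-}$ and $\omega=\omega_0$ separately. The non-trivial point is that the positive and negative parts $y_{t-i}^+$ and $y_{t-i}^-$ have the same linear span as $y_{t-i}$ only if one can realize the joint support with both signs. Assumption \ref{assum_eta_t} (continuous positive density of $\eta_t$) ensures $y_t$ has a conditionally continuous distribution supported on both half-lines, so the random vector $\bm X_{t-1}$ spans the whole coefficient space on events of positive probability; I would formalize this by choosing, conditional on $\bm Y_{t-2}$, suitable sign patterns of $y_{t-1},\dots,y_{t-p+1}$ that occur with positive probability and then read off each coefficient. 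Uniform convergence $\sup_{\bm\theta\in\Theta}|n^{-1}L_n(\bm\theta)-\bar\ell(\bm\theta)|=o_p(1)$ follows from Birkhoff's ergodic theorem applied to the dominating envelope together with compactness of $\Theta$, after which a standard argmax argument yields consistency.

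For part (ii), I would use the usual Taylor expansion of $\partial L_n/\partial\bm\theta$ at $\widehat{\bm\theta}_n$ around $\bm\theta_0$, giving
\[
\sqrt{n}(\widehat{\bm\theta}_n-\bm\theta_0)=\Bigl[-\frac{1}{n}\frac{\partial^2 L_n(\bm\theta^*)}{\partial\bm\theta\partial\bm\theta'}\Bigr]^{-1}\frac{1}{\sqrt n}\frac{\partial L_n(\bm\theta_0)}{\partial\bm\theta},
\]
where $\bm\theta^*$ lies on the segment joining $\widehat{\bm\theta}_n$ and $\bm\theta_0$. The score $\partial\ell_t(\bm\theta_0)/\partial\bm\theta$ forms a stationary ergodic martingale difference sequence with respect to the natural filtration, because it has the form $(\eta_t\bm Y_{t-1}/h_t,(\eta_t^2-1)\bm X_{t-1}/h_t)'$; with $E(\eta_t^4)<\infty$ and $\underline{\omega},\underline{\beta}>0$, the second-moment matrix $\Omega$ is finite. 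Applying Billingsley's martingale CLT delivers $n^{-1/2}\partial L_n(\bm\theta_0)/\partial\bm\theta\to_\mathcal{L} N(\bm 0,\Omega)$. A uniform law of large numbers on a shrinking neighborhood of $\bm\theta_0$, justified by continuity of the Hessian entries in $\bm\theta$ and a dominating envelope integrable because $E|y_t|^\kappa<\infty$, yields $n^{-1}\partial^2 L_n(\bm\theta^*)/\partial\bm\theta\partial\bm\theta'\to_p -\Sigma$. Positive definiteness of $\Sigma$ follows from $\underline\beta,\underline\omega>0$ plus the continuous positive density of $\eta_t$ (which prevents degeneracy of $E[\bm Y_{t-1}\bm Y_{t-1}'/h_t^2]$ and $E[\bm X_{t-1}\bm X_{t-1}'/h_t^2]$), while positive definiteness of the block in $\Omega$ coming from the cross terms is guaranteed by the assumed positive definiteness of $D$. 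Slutsky's theorem then delivers $\sqrt{n}(\widehat{\bm\theta}_n-\bm\theta_0)\to_\mathcal{L} N(\bm 0,\Sigma^{-1}\Omega\Sigma^{-1})$.

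The hard part will be the identification step in (i): showing that the coefficients on $y_{t-i}^+$ and $y_{t-i}^-$ are separately identified despite the fact that $y_{t-i}$ enters both the mean and the volatility equation. Once this asymmetric identification is nailed down using the full support of $\eta_t$, the remainder of both (i) and (ii) follow the template already used for linear DAR models in \cite{Zhu2018_LDAR,liuhua2020QMLI}, with the compactness bounds in Assumption \ref{assumption_compact} doing the work of the usual moment assumptions on $y_t$.
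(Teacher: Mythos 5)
Your proposal is correct and follows the same overall template as the paper: envelope bounds on $\ell_t$, its score and Hessian obtained from the compactness bounds $\underline{\omega},\underline{\beta}>0$ plus the fractional moment $E|y_t|^\kappa<\infty$ (the paper's Lemma \ref{boundedness}), a uniform law of large numbers over $\Theta$ (Lemma \ref{three o_p}), unique maximization of $E[\ell_t(\bm\theta)]$ at $\bm\theta_0$ via exactly the decomposition you write down (the paper's display \eqref{E_ell}), and then the Taylor expansion plus martingale CLT and Slutsky for part (ii), including the observation that $\det(D)=\kappa_2-\kappa_1^2>0$ is what makes $\Omega$ nondegenerate. The one place where your route genuinely differs is the identification lemma, which you correctly single out as the crux: to show $\bm c^{\prime}\bm X_{t-1}=0$ a.s.\ forces $\bm c=\bm 0$, the paper argues by contradiction, normalizing $d_{1+}=1$ and splitting into the case $d_{1-}=-d_{1+}$ (where it contradicts $E(\eta_t^2)=1$) and the case $d_{1-}\neq-d_{1+}$ (where it solves for $y_t^-$ and shows $P(y_t^->0\mid\mathcal{F}_{t-1})>0$, contradicting $y_t^-\leq 0$); for $\bm c^{\prime}\bm Y_{t-1}=0$ it likewise contradicts $E(\eta_t^2)=1$. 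You instead propose restricting to sign-pattern events (all $y_{t-i}>0$, then all $y_{t-i}<0$), each of positive probability with a conditionally continuous joint law, and reading off the coefficients orthant by orthant. Both arguments rest on Assumption \ref{assum_eta_t}; yours is arguably more transparent and symmetric between the two orthants, but it needs the everywhere-positivity of the $p$-step transition density of $(y_{t-1},\ldots,y_{t-p})$, which the paper only establishes explicitly in the proof of Theorem \ref{thm1Stationarity} under Assumption \ref{assum1} — you would want to restate that fact under Assumptions \ref{assum_y_t}--\ref{assum_eta_t} before invoking it. Everything else in your sketch matches the paper's proof.
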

	Note that the positive definiteness of $D$ is satisfied for continuous $\eta_t$ with $E(\eta^4_{t}) < \infty$; see \cite{Jiang2020}.
	If $\eta_{t}$ is normal, then $\kappa_1=0, \kappa_2=2$ and $\Omega=\Sigma$, thus the QMLE reduces to the MLE and its asymptotics in Theorem \ref{thm2QMLE} can be simplified to $\sqrt{n}(\widehat{\bm \theta}_{n}-\bm \theta_{0}) \rightarrow_{\mathcal{L}} N\left(\bm 0, \Sigma^{-1}\right)$ as $n\to \infty$.
	To calculate the asymptotic covariance of $\widehat{\bm \theta}_{n}$, we use sample averages to replace matrices $\Omega$ and $\Sigma$, and the QMLE $\widehat{\bm \theta}_{n}$ to replace $\bm\theta_{0}$.
	
	\subsection{Model selection} \label{BIC}
	This subsection considers the selection of order $p$ for model \eqref{model} in practice.
	We first introduce the Bayesian information criterion (BIC) below to select the order $p$,
	\begin{equation}\label{BIC1}
	\text{BIC}_1(p) = -2L_n(\widehat{\bm\theta}_{n}^{p}) + (3p+1)\ln(n-p),
	\end{equation}
	where $\widehat{\bm\theta}_{n}^{p}$ is the QMLE when the order is set to $p$, and $L_n(\widehat{\bm\theta}_{n}^{p})$ is the log-likelihood evaluated at $\widehat{\bm\theta}_{n}^{p}$. 
	However, model \eqref{model} is fitted by the Gaussian QMLE, hence the model misspecification should be considered in deriving the asymptotic expansion of the Bayesian principle, which leads to the modified BIC below
	\begin{equation}\label{BIC2}
	\text{BIC}_2(p) = -2L_n(\widehat{\bm\theta}_{n}^{p}) + (3p+1)\ln\left(\dfrac{n-p}{2\pi}\right)+\ln(\det(\widehat{\Sigma}^{p})),
	\end{equation}
	where $\widehat{\Sigma}^{p}$ is a consistent estimator of $\Sigma$ defined as in \eqref{Sigma} at order $p$, and $\det(\widehat{\Sigma}^{p})$ is its determinant. 
	In practice, $\widehat{\Sigma}^{p}$ can be calculated with $\bm\theta_0$ replaced by $\widehat{\bm\theta}_{n}^{p}$ and the expectation approximated by the sample average. 
	The modified BIC in \eqref{BIC2} is adapt to the generalized BIC proposed by \cite{lv2014model_BIC}, where the addtional term of their generalized BIC is introduced to account for model misspecifications. 
	For more details of the derivation of the above BICs, please refer to Appendix A.3.1. 
	
	Let $\widehat{p}_{in} = \arg\min_{1\leq p\leq p_{\max}}\text{BIC}_i(p)$ for $i=1$ and 2, where $p_{\max}$ is a predetermined positive integer. Note that, when the sample size $n$ is sufficiently large, the additional terms $-(3p+1)\ln(2\pi)$ and $\ln(\det(\widehat{\Sigma}^{p}))$ can be ignored as they are $O(1)$. As a result, $\text{BIC}_1(p)$ and $\text{BIC}_2(p)$ are asymptotically equivalent in order selection. 
	Simulation results in Section \ref{simulation} indicate that the modified BIC in \eqref{BIC2} performs better than the original BIC in \eqref{BIC1} for moderate and small samples, although the two BICs have very similar performance for large samples. 
	Hence, for moderate and small samples, we suggest to use \eqref{BIC2}. 
	The following theorem verifies their selection consistency. 
	
	\begin{thm}\label{thm3BIC}
		Under the assumptions of Theorem \ref{thm2QMLE}, if $p_{max}\geq p_0$, then as $n\to \infty$,
		\[
		P(\widehat{p}_{1n} = p_0)\rightarrow 1 \quad \text{and} \quad P(\widehat{p}_{2n} = p_0)\rightarrow 1,
		\]
		where $p_0$ is the true order, and $p_{max}$ is a predetermined positive integer.
	\end{thm}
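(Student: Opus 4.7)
The plan follows the classical two-regime argument for BIC consistency. I would decompose
\[
P(\widehat{p}_{in} \neq p_0) \;\leq\; \sum_{p < p_0} P\bigl(\text{BIC}_i(p) \leq \text{BIC}_i(p_0)\bigr) \;+\; \sum_{p_0 < p \leq p_{\max}} P\bigl(\text{BIC}_i(p) \leq \text{BIC}_i(p_0)\bigr),
\]
and show that both sums vanish. Since $\text{BIC}_2(p) - \text{BIC}_1(p) = -(3p+1)\ln(2\pi) + \ln\det(\widehat{\Sigma}^{p})$ is $O_p(1)$ for every fixed $p$ (the second term is bounded in probability by the consistency of $\widehat{\Sigma}^{p}$ and the positive definiteness of its limit $\Sigma$ from Theorem \ref{thm2QMLE}), the two criteria differ by a term that is bounded in probability uniformly in $p \in \{1,\ldots,p_{\max}\}$. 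It therefore suffices to prove the claim for $\text{BIC}_1$, and the result for $\text{BIC}_2$ follows immediately.

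For the under-fitting regime $p < p_0$, let $\Theta_p$ denote the parameter space restricted to order $p$ and set $\bm\theta_*^p := \argmax_{\bm\theta \in \Theta_p} E[\ell_t(\bm\theta)]$. The identification argument that already underlies the consistency part of Theorem \ref{thm2QMLE}(i), using Assumption \ref{assum_eta_t} and the distinctness of the positive- and negative-lag components of $\bm X_{t-1}$, implies that the unique maximizer of $E[\ell_t(\bm\theta)]$ over $\Theta_{p_0}$ is $\bm\theta_0$ and hence $\Delta_p := E[\ell_t(\bm\theta_0)] - E[\ell_t(\bm\theta_*^p)] > 0$. The uniform strong law employed in proving Theorem \ref{thm2QMLE}(i) yields $n^{-1}L_n(\widehat{\bm\theta}_n^{p}) \to E[\ell_t(\bm\theta_*^p)]$ almost surely, so
\[
\tfrac{1}{n}\bigl\{L_n(\widehat{\bm\theta}_n^{p_0}) - L_n(\widehat{\bm\theta}_n^{p})\bigr\} \longrightarrow \Delta_p > 0 \quad \text{a.s.}
\]
The penalty gap equals $3(p_0-p)\ln(n-p) + O(1) = O(\ln n)$, which is dominated by the linear-in-$n$ likelihood gap; hence $P(\text{BIC}_1(p) \leq \text{BIC}_1(p_0)) \to 0$.

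For the over-fitting regime $p_0 < p \leq p_{\max}$, embed $\bm\theta_0$ into $\Theta_p$ by padding the excess coefficients $\alpha_i,\beta_{i\pm}$ with zero to obtain $\bm\theta_{0,p}$, so that $L_n(\bm\theta_{0,p}) = L_n(\bm\theta_0)$ pointwise. Applying the analogue of Theorem \ref{thm2QMLE}(ii) at the enlarged order gives $\sqrt{n}(\widehat{\bm\theta}_n^{p} - \bm\theta_{0,p}) = O_p(1)$, and a second-order Taylor expansion of $L_n$ about $\bm\theta_{0,p}$, together with the convergence of $-n^{-1}\partial^2 L_n/\partial\bm\theta \partial\bm\theta^\prime$ to a positive-definite limit $\Sigma_p$, yields
\[
2\bigl\{L_n(\widehat{\bm\theta}_n^{p}) - L_n(\widehat{\bm\theta}_n^{p_0})\bigr\} = O_p(1).
\]
The penalty gap is $3(p-p_0)\ln(n-p) + O(1) \to +\infty$, so $P(\text{BIC}_1(p) \leq \text{BIC}_1(p_0)) \to 0$.

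The main technical obstacle is the over-fitting step: padding with zero violates the strict lower bound $\underline{\beta} > 0$ on $\beta_{i\pm}$ in Assumption \ref{assumption_compact}, so $\bm\theta_{0,p}$ is not an interior point of $\Theta_p$. I would handle this by enlarging the candidate parameter space to permit $\beta_{i\pm} = 0$ and invoking a constrained QMLE analysis on the boundary — the $\sqrt{n}$-rate and quadratic expansion of $L_n$ still hold, although the limit law of $\widehat{\bm\theta}_n^{p}$ becomes the projection of a Gaussian onto the relevant tangent cone. Since only the $O_p(1)$ order of the likelihood increment is needed for the BIC comparison, this precise boundary limit does not affect the argument. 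Verifying the uniform ergodic convergence of $n^{-1}L_n$ on this enlarged space under heavy-tailed $y_t$, along the lines of the corresponding result for the linear DAR model in \cite{Zhu2018_LDAR}, is the only other nontrivial ingredient.
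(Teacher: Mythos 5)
Your proposal follows essentially the same route as the paper's own proof: reduce $\mathrm{BIC}_2$ to $\mathrm{BIC}_1$ via the $O_p(1)$ correction term, then treat under- and over-fitting separately, with a linear-in-$n$ Kullback--Leibler gap beating the $O(\ln n)$ penalty in the under-fitting regime and an $O_p(1)$ likelihood increment losing to the diverging penalty difference in the over-fitting regime. The boundary obstacle you flag for zero-padding is genuine and is in fact passed over silently in the paper, whose proof simply asserts $L_n(\bm\theta_{0}^{p})=L_n(\bm\theta_{0}^{p_0})$ for $p>p_0$ without checking that the padded parameter satisfies the lower bounds $\underline{\beta}\leq\beta_{i\pm}$ or the interiority requirement of Assumption \ref{assumption_compact}; your treatment of that point is more careful than the paper's.
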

	
	\section{Testing for asymmetry} \label{test}
	
	This section studies the Wald, Lagrange multiplier (LM) and quasi-likelihood ratio (QLR) tests to detect the asymmetry (leverage) effect of news on volatilities.
	\subsection{Asymmetry Tests} \label{size}
	For model \eqref{model}, the asymmetry testing is of the form
	\begin{equation}\label{Hypotheses}
	H_0:\ \beta_{i0+} = \beta_{i0-}\ \text{for all } i \quad \text{against} \quad H_1: \beta_{i0+}\neq \beta_{i0-}\ \text{for some } i,
	\end{equation}
	where $i \in \left\lbrace 1,\cdots,p\right\rbrace$.
	Let $R = (0_{p\times(p+1)}, I_{p}, -I_{p})$ be the $p\times(3p+1)$ matrix, where $0_{m\times n}$ is the $m\times n$ zero matrix and $I_{p}$ is the $p\times p$ identity matrix. Then the null hypothesis can be represented as $H_0: R\bm\theta_{0} =\bm 0_{p}$, where $\bm\theta_{0}$ is the true parameter vector and $\bm 0_{p}$ is a $p$-dimensional zero vector.
	Hence, the Wald, LM and QLR test statistics are defined as
	\begin{equation}
	\begin{aligned}
	&W_n = n{\widehat{\bm \theta}_n^{\prime}R^{\prime}}(R\widehat{\Xi}R^{\prime})^{-1}R\widehat{\bm \theta}_{n},\\
	&L_n=\frac{1}{n}\frac{\partial L_n(\widetilde{\bm \theta}_{n})}{\partial \bm \theta^{\prime}}\widetilde{\Sigma}^{-1}R^{\prime}(R\widetilde{ \Xi}R^{\prime})^{-1}R\widetilde{\Sigma}^{-1}\frac{\partial L_n(\widetilde{\bm \theta}_{n})}{\partial \bm \theta},\\
	&Q_n=-2\left[ L_n(\widetilde{\bm \theta}_{n})-L_n(\widehat{\bm \theta}_n) \right],
	\end{aligned}
	\end{equation}
	respectively, where $\widetilde{\bm \theta}_n$ is the restricted QMLE under $H_0$ while $\widehat{\bm \theta}_n$ is the unrestricted QMLE,  $\widehat{\Xi}$ is the sample estimate of $\Xi$ with $\bm\theta_0$ estimated by $\widehat{\bm\theta}_{n}$ and the expectation replaced by sample average, while $\widetilde{\Xi}$ (or $\widetilde{\Sigma}$) is the sample estimate of $\Xi$ (or $\Sigma$) with $\bm\theta_0$ estimated by $\widetilde{\bm\theta}_{n}$ and the expectation replaced by sample average.
	
	Let $\chi^2_{\nu}$ be the chi-squared distribution with $\nu$ degrees of freedom.
	Define the $p\times p$ matrix $\Psi=\Delta^{-1/2}R\Xi R^\prime\Delta^{-1/2}$ with $\Delta=R\Sigma^{-1}R^\prime$. For $j=1,\ldots,p$, let $e_j$'s be the eigenvalues of $\Psi$, and $x_j$'s be the $i.i.d.$ random variables following the $\chi^2_1$ distribution. 
	The following theorem gives the limiting distributions of three test statistics under $H_0$. 
	
	\begin{thm}
		Suppose the assumptions of Theorem \ref{thm2QMLE} hold. Then, under $H_0$, as $n\rightarrow \infty$,
		\[(i)\, W_n\rightarrow_{\mathcal{L}}\chi^2_p;\quad (ii)\, L_n\rightarrow_{\mathcal{L}}\chi^2_p;\quad (iii)\, Q_n\rightarrow_{\mathcal{L}}Q;\]		
		where $Q=\sum_{j=1}^{p}e_jx_j$.
		\label{thm test asymmetry}
	\end{thm}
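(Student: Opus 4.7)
My plan is to treat the three statistics in parallel, all built from the usual asymptotic linearization
\[
\sqrt{n}(\widehat{\bm\theta}_n - \bm\theta_0) = \Sigma^{-1}\bm S_n + o_p(1), \qquad \bm S_n := \tfrac{1}{\sqrt n}\tfrac{\partial L_n(\bm\theta_0)}{\partial \bm\theta} \ \rightarrow_{\mathcal L}\ N(\bm 0,\Omega),
\]
which is implicit in the proof of Theorem \ref{thm2QMLE}, together with the consistency $\widehat{\Sigma}\to_p\Sigma$, $\widehat{\Xi}\to_p\Xi$ and their restricted analogues. I will use throughout that $\Sigma$ and $\Omega$ are positive definite, that $R$ has full row rank $p$, and hence $\Delta=R\Sigma^{-1}R^\prime$ and $R\Xi R^\prime$ are invertible.

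For \emph{(i)}, under $H_0$ we have $R\bm\theta_0=\bm 0_p$, so $\sqrt n\,R\widehat{\bm\theta}_n = \sqrt n\,R(\widehat{\bm\theta}_n-\bm\theta_0) \rightarrow_{\mathcal L} N(\bm 0, R\Xi R^\prime)$. Combining this with $R\widehat{\Xi}R^\prime\to_p R\Xi R^\prime$ and the continuous mapping theorem gives $W_n\rightarrow_{\mathcal L}\chi^2_p$. For \emph{(ii)}, the key step is an asymptotic expansion of the restricted QMLE $\widetilde{\bm\theta}_n$. Setting up a Lagrangian with multiplier $\bm\mu$ for the constraint $R\bm\theta=\bm 0$, the first-order conditions and the usual Taylor expansion of the score around $\bm\theta_0$ yield
\[
\sqrt n(\widetilde{\bm\theta}_n - \bm\theta_0) = \bigl[I-\Sigma^{-1}R^\prime\Delta^{-1}R\bigr]\Sigma^{-1}\bm S_n + o_p(1),
\]
and, substituting back, $\tfrac{1}{\sqrt n}\tfrac{\partial L_n(\widetilde{\bm\theta}_n)}{\partial\bm\theta} = R^\prime\Delta^{-1}R\Sigma^{-1}\bm S_n + o_p(1)$. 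Plugging this into $L_n$ and simplifying using $R\Sigma^{-1}R^\prime\Delta^{-1}=I$, the statistic collapses to
\[
L_n = \bm S_n^\prime\Sigma^{-1}R^\prime(R\Xi R^\prime)^{-1}R\Sigma^{-1}\bm S_n + o_p(1) = \bm T_n^\prime (R\Xi R^\prime)^{-1}\bm T_n + o_p(1),
\]
with $\bm T_n := R\Sigma^{-1}\bm S_n \rightarrow_{\mathcal L} N(\bm 0, R\Xi R^\prime)$, giving $L_n\rightarrow_{\mathcal L}\chi^2_p$.

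For \emph{(iii)}, I Taylor expand $L_n(\widetilde{\bm\theta}_n)$ around $\widehat{\bm\theta}_n$. Since $\partial L_n(\widehat{\bm\theta}_n)/\partial\bm\theta=\bm 0$ and $n^{-1}\partial^2 L_n(\bar{\bm\theta})/\partial\bm\theta\partial\bm\theta^\prime \to_p -\Sigma$, I obtain
\[
Q_n = n(\widehat{\bm\theta}_n-\widetilde{\bm\theta}_n)^\prime \Sigma (\widehat{\bm\theta}_n-\widetilde{\bm\theta}_n) + o_p(1).
\]
Subtracting the expansions of $\widehat{\bm\theta}_n$ and $\widetilde{\bm\theta}_n$ gives $\sqrt n(\widehat{\bm\theta}_n-\widetilde{\bm\theta}_n) = \Sigma^{-1}R^\prime\Delta^{-1}R\Sigma^{-1}\bm S_n + o_p(1)$, and a short algebraic cancellation using $R\Sigma^{-1}R^\prime=\Delta$ reduces $Q_n$ to $\bm T_n^\prime\Delta^{-1}\bm T_n + o_p(1)$. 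Writing $\bm U_n := \Delta^{-1/2}\bm T_n \rightarrow_{\mathcal L} N(\bm 0,\Psi)$ and diagonalising $\Psi = P^\prime\mathrm{diag}(e_1,\ldots,e_p)P$ with $P$ orthogonal turns the limit into $\sum_{j=1}^p e_j\, v_j^2$ where $P\bm U_\infty = \mathrm{diag}(e_1,\ldots,e_p)^{1/2}\bm v$ with $\bm v\sim N(\bm 0,I_p)$, yielding $Q_n\rightarrow_{\mathcal L} Q$ as stated.

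The main obstacle is the restricted-QMLE expansion underlying parts (ii) and (iii): one has to show that $\widetilde{\bm\theta}_n$ is consistent, that the constrained optimiser is interior on the restricted surface, and that the standard Lagrangian-based linearisation is valid under only a fractional moment on $y_t$. This is where I would invoke, and carefully re-check, the same uniform convergence of $n^{-1}\partial^2 L_n(\bm\theta)/\partial\bm\theta\partial\bm\theta^\prime$ and the stochastic equicontinuity of the score that were already used for Theorem \ref{thm2QMLE}; once these are in hand, the three conclusions follow from the continuous mapping theorem and the eigen-decomposition of $\Psi$ as above.
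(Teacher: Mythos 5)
Your proposal is correct and follows essentially the same route as the paper: the linearization $\sqrt{n}(\widehat{\bm\theta}_n-\bm\theta_0)=\Sigma^{-1}\bm S_n+o_p(1)$ for the Wald part, a Lagrangian/first-order-condition expansion of the restricted QMLE for the LM part, a second-order Taylor expansion of $L_n(\widetilde{\bm\theta}_n)$ about $\widehat{\bm\theta}_n$ for the QLR part, and the eigendecomposition of $\Psi=\Delta^{-1/2}R\Xi R^{\prime}\Delta^{-1/2}$ to obtain the weighted sum of $\chi^2_1$ variables. The only cosmetic difference is that the paper carries the Lagrange multiplier $\widetilde{\bm k}/\sqrt{n}=\Delta^{-1}R\Sigma^{-1}\bm S_n+o_p(1)$ through the algebra, whereas you substitute the projection form $\sqrt{n}(\widetilde{\bm\theta}_n-\bm\theta_0)=[I-\Sigma^{-1}R^{\prime}\Delta^{-1}R]\Sigma^{-1}\bm S_n+o_p(1)$ directly; these are algebraically identical, and the uniform convergence you flag as the remaining obstacle is exactly what the paper's Lemma 2 supplies.
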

	
	Theorem \ref{thm test asymmetry} shows that the limiting null distribution of $Q_n$ is not the usual $\chi^2_p$ distribution but a distribution of the weighted sum of $i.i.d.$ $\chi^2_1$ random variables.
	This is because $\Sigma\neq\Omega$ in the absence of normality assumption on $\eta_{t}$; see also \cite{MaCurdy1981_quasiMLE_and_test_stat}.
	If $\eta_{t}$ is normally distributed, then $e_j=1$ for $j=1,\ldots, p$ and $Q$ reduces to a $\chi^2_p$ distribution, and hence $Q_n$ has the standard limiting null distribution as $W_n$ and $L_n$. 
	For general cases of $\eta_{t}$, we adopt the Pearson's three-moment central chi-square approach \citep{pearson1959note} to approximate $p$-values of the QLR test; see also \cite{imhof1961computing} and \cite{liu2009_liu_method}. The detailed procedure of Pearson's method is summarized in Remark \ref{p-value of QLR} below. 
	
	\begin{remark}\label{p-value of QLR} (Calculation of $p$-values for the QLR test) First, calculate $\mu_Q=c_1$, $\sigma_Q=\sqrt{2c_2}$ and $l=c_2^3/c_3^2$, where $c_k=\sum_{j=1}^{p}e_j^k$ for $k=1,2,3$. Then, the $p$-value of the QLR test is approximated by $P(\chi^2_l>(Q_n-\mu_Q)\sqrt{2l}/\sigma_Q+l)$, where $Q_n$ is the observed value of the QLR test statistic.
	\end{remark}

	\subsection{Power analysis} \label{power}
	
	We next discuss the efficiency of the proposed asymmetry tests through Pitman analysis.
	Note that $\bm\theta_0=(\bm\alpha_0^{\prime},\omega_0,\bm\beta_{0+}^{\prime},\bm\beta_{0-}^{\prime})^{\prime}$ with $\bm\beta_{0+}=\bm\beta_{0-}$ under $H_0$.
	Denote $\bm h=(\bm h_{\alpha}^{\prime},\bm h_{\beta}^{\prime})^\prime=(\bm h_{\alpha}^{\prime},h_w,\bm h_+^{\prime},\bm h_-^{\prime})^\prime \in \mathbb{R}^{p} \times \mathbb{R}_{+}^{2p+1}$, where $\bm h_{\alpha}=(h_{1},\ldots,h_{p})^{\prime}$, $\bm h_+=(h_{1+},\ldots,h_{p+})^{\prime}$, $\bm h_-=(h_{1-},\ldots,h_{p-})^{\prime}$ and $\bm h_{+}\neq\bm h_{-}$.
	Let $\bm\theta_n=\bm\theta_0+\bm h/\sqrt{n}$ such that $\bm\theta_n\in\Theta$ for sufficiently large $n$.
	Consider the local alternatives, that is, for each $n$, the observed time series $\{y_{p+1,n},\ldots,y_{n,n}\}$ are generated by
	\begin{equation}\label{LocalH1}
	H_{1n}:\quad y_{t,n} = \left(\bm\alpha_0+\dfrac{\bm h_{\alpha}}{\sqrt{n}}\right)^\prime \bm Y_{t-1,n} + \eta_t\left(\bm\beta_0+\dfrac{\bm h_{\beta}}{\sqrt{n}}\right)^\prime \bm X_{t-1,n},
	\end{equation}
	where the subscript $n$ is used to emphasize the dependence of $y_{t,n}$ on $n$, $\eta_t$ is defined as in the model \eqref{model}, $\bm Y_{t,n} = (y_{t,n},\ldots,y_{t-p+1,n})^{\prime}$, $\bm X_{t,n} =(1,\bm Y^{\prime}_{t+,n},-\bm Y^{\prime}_{t-,n})^{\prime}$ with $\bm Y_{t+,n} = (y_{t,n}^{+},\ldots,y_{t-p+1,n}^{+})^{\prime}$ and $\bm Y_{t-,n} = (y_{t,n}^{-},\ldots,y_{t-p+1,n}^{-})^{\prime}$. $\{y_{t,n}\}$ satisfies the condition below.
	
	\begin{assum}\label{assum_y_tH1}
		There exists a positive integer $n_0$ such that for $n\geq n_0$, $\left\lbrace y_{t,n}:t\in\mathcal{N}\right\rbrace $ is strictly stationary and geometrically ergodic with $E(|y_{t,n}|^\kappa)<\infty$ for some $\kappa>0$.
	\end{assum}
	
	Based on $\{y_{1,n},\ldots,y_{n,n}\}$, the QMLE under $H_{1n}$ can be defined as
	\begin{equation}
	\widehat{\bm \theta}_{n,h}=\arg \max_{\bm \theta \in \Theta} \sum_{t=p+1}^{n}\ell_{t,n}(\bm \theta),
	\label{QMLEH1}
	\end{equation}
	where
	\[\ell_{t,n}(\bm \theta)=-\ln \left(\bm \beta^{\prime} \bm X_{t-1,n}\right)-\dfrac{\left(y_{t,n}-\bm \alpha^{\prime} \bm Y_{t-1,n}\right)^{2}}{2\left(\bm{\beta}^{\prime} \bm{X}_{t-1,n}\right)^{2}}.\]
	Denote $\mathbb{P}_{n,h}$ as the law of $y_{t,n}$. The asymptotic distribution of $\widehat{\bm \theta}_{n,h}$ under sequences of local alternatives is given below.
	\begin{thm}\label{LAN}
		Suppose that Assumptions \ref{assum_eta_t}--\ref{assum_y_tH1} hold and $E(\eta_t^4)<\infty$, then, under $\mathbb{P}_{n,h}$, $\sqrt{n}(\widehat{\bm\theta}_{n,h}-\bm\theta_0)\rightarrow_{\mathcal{L}}N(\bm h,\Xi)$ as $n\to \infty$,
		where $\Xi$ is defined as in Theorem \ref{thm2QMLE}.
	\end{thm}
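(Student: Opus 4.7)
The plan is to adapt the asymptotic-expansion argument behind Theorem~\ref{thm2QMLE} to the triangular-array setting induced by the drifting true parameter $\bm\theta_n=\bm\theta_0+\bm h/\sqrt n$. Under $\mathbb{P}_{n,h}$ the series $\{y_{t,n}\}$ is generated from $\bm\theta_n$, so the quasi-score $\partial\ell_{t,n}(\bm\theta_n)/\partial\bm\theta$ is a martingale-difference sequence. Once consistency $\widehat{\bm\theta}_{n,h}\rightarrow_p\bm\theta_0$ is established, a Taylor expansion of the first-order condition at $\bm\theta_n$ gives
\[
\sqrt{n}(\widehat{\bm\theta}_{n,h}-\bm\theta_n) = -\left(\tfrac{1}{n}\tfrac{\partial^2 L_n(\bm\theta^\ast)}{\partial\bm\theta\partial\bm\theta^\prime}\right)^{-1}\tfrac{1}{\sqrt n}\tfrac{\partial L_n(\bm\theta_n)}{\partial\bm\theta}+o_p(1),
\]
and, combined with $\sqrt{n}(\bm\theta_n-\bm\theta_0)=\bm h$, the desired limit $N(\bm h,\Xi)$ follows once we establish $n^{-1/2}\partial L_n(\bm\theta_n)/\partial\bm\theta\rightarrow_{\mathcal{L}}N(\bm 0,\Omega)$ and $-n^{-1}\partial^2 L_n(\bm\theta^\ast)/\partial\bm\theta\partial\bm\theta^\prime\rightarrow_p\Sigma$ under $\mathbb{P}_{n,h}$.

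For the consistency step I would deduce a uniform law of large numbers for $n^{-1}L_n(\bm\theta)$ on $\Theta$ from Assumption~\ref{assum_y_tH1}, the joint continuity of $(\bm\theta,\bm\vartheta)\mapsto E_{\bm\vartheta}[\ell_t(\bm\theta)]$, and $\bm\theta_n\to\bm\theta_0$; the unique maximizer of the limit equals $\bm\theta_0$ by the same identifiability argument used in Theorem~\ref{thm2QMLE}(i). For the limit of the score I would apply a martingale CLT for triangular arrays (e.g.\ Hall--Heyde, Theorem~3.2): the conditional covariance of $\partial\ell_{t,n}(\bm\theta_n)/\partial\bm\theta$ converges in probability to $\Omega$ by continuity at $\bm\theta_0$ (where $E(\eta_t^4)<\infty$ enters), and the conditional Lindeberg condition follows from the $L^{2+\delta}$ bound on the quasi-score already developed in the proof of Theorem~\ref{thm2QMLE}(ii). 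The Hessian convergence is handled analogously, using that $\bm\theta^\ast$ lies between $\widehat{\bm\theta}_{n,h}$ and $\bm\theta_n$, both of which converge to $\bm\theta_0$.

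The main obstacle is controlling these limits uniformly in $n$, since every object above is built from a process whose law itself drifts with $n$. The cleanest way I would organize this is first to verify the local asymptotic normality of the model,
\[
\log\frac{d\mathbb{P}_{n,h}}{d\mathbb{P}_{n,0}}=\bm h^\prime\bm\Delta_n-\tfrac{1}{2}\bm h^\prime J\bm h+o_p(1),
\]
under $\mathbb{P}_{n,0}$, where $\bm\Delta_n$ is the normalized true score and $J$ the true Fisher information. Le Cam's first lemma then yields mutual contiguity of $\mathbb{P}_{n,h}$ and $\mathbb{P}_{n,0}$, so all in-probability statements proved under $\mathbb{P}_{n,0}$ transfer automatically to $\mathbb{P}_{n,h}$; the drift $\Sigma\bm h$ in the mean of the quasi-score limit under $\mathbb{P}_{n,h}$ is then recovered through the cross-covariance identity $E[\partial\ell_t(\bm\theta_0)/\partial\bm\theta\cdot s_t^\prime]=\Sigma$, obtained by differentiating $E[\partial\ell_t(\bm\theta)/\partial\bm\theta]|_{\bm\theta_0}=\bm 0$ under the integral. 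The truly delicate step is the direct verification of LAN---the density of $\eta_t$ is not parameterized, so the standard differentiability-in-quadratic-mean machinery does not immediately apply---and here I would follow the strategy of \cite{Jiang2020}, expanding the log-likelihood ratio innovation by innovation and bounding the remainder by a second-order Taylor expansion of the density around each $\eta_t$ together with a uniform-integrability argument.
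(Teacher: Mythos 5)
Your first two paragraphs are, in outline, exactly the paper's proof: work directly under $\mathbb{P}_{n,h}$, expand the first-order condition around the drifting truth $\bm\theta_n=\bm\theta_0+\bm h/\sqrt{n}$ so that the bias $\bm h$ enters through $\sqrt{n}(\bm\theta_n-\bm\theta_0)$, prove consistency via a triangular-array uniform law of large numbers (the paper uses Andrews' WLLN for $L^2$-bounded $\beta$-mixing arrays, dominated convergence to pass from $E\ell_{t,n}(\bm\theta)$ to $E\ell_t(\bm\theta)$, and Nelson's stochastic-equicontinuity theorem to make the convergence uniform), and close with a martingale CLT for the quasi-score at $\bm\theta_n$ together with uniform convergence of the Hessian and of the outer product of scores (Lemma \ref{LAN-lemma2}). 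Had you stopped there, your proposal would be essentially the paper's argument.

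The gap is in your third paragraph, where you declare that the ``cleanest way to organize this'' is to verify LAN of the true likelihood ratio $\log(d\mathbb{P}_{n,h}/d\mathbb{P}_{n,0})$, obtain mutual contiguity from Le Cam's first lemma, and transfer limits from $\mathbb{P}_{n,0}$ via the cross-covariance identity. This is precisely the route the paper rules out in the discussion following the theorem statement. Assumption \ref{assum_eta_t} only requires the density of $\eta_t$ to be continuous and positive; the innovation-by-innovation second-order Taylor expansion of the density that your remainder bound requires, and the differentiation under the integral behind $E[\partial\ell_t(\bm\theta_0)/\partial\bm\theta\cdot s_t^{\prime}]=\Sigma$, are therefore not available under the stated hypotheses. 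Moreover, the only likelihood actually in hand is the Gaussian quasi-likelihood, whose log-ratio converges under $\mathbb{P}_{n,0}$ to a normal with mean $-\tfrac{1}{2}\bm h^{\prime}\Sigma\bm h$ and variance $\bm h^{\prime}\Omega\bm h$; since $\Omega\neq\Sigma$ in general, the mean-variance relation needed for Le Cam's first lemma fails, which is exactly why the paper says the Le Cam machinery cannot be employed without normality and proves the result ``in a direct way.'' In short, the direct triangular-array argument of your first two paragraphs is not scaffolding to be reorganized --- it is the entire proof --- and the step you single out as ``truly delicate'' (direct verification of LAN via a density expansion) is neither needed nor possible under the assumptions of the theorem.
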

	Theorem \ref{LAN} verifies that model \eqref{LocalH1} is locally asymptotically normal \citep{Vaart2000_LAN} at $\bm\theta_0$.
	If $\eta_t$ follows a normal distribution, we can show Theorem \ref{LAN} by Le Cam's third lemma. However, when $\eta_{t}$ is not normal, the sequences $\mathbb{P}_{n,0}$ and $\mathbb{P}_{n,h}$ are not mutually contiguous; see Example 6.5 in \cite{Vaart2000_LAN} for more discussions. Therefore, we show Theorem \ref{LAN} in a direct way; see also \cite{Jiang2020}.
	
	Denote $D=\left(R \Xi R^\prime\right)^{1/2}\Delta^{-1}\left(R \Xi R^\prime\right)^{1/2}$, and define $\Gamma$ as an orthogonal matrix such that $\Gamma D \Gamma^\prime=\diag\{e_1^{\ast},\ldots,e_p^{\ast}\}$, where $e_j^{\ast}$'s are eigenvalues of $D$.  
	Denote $\bm v=\Gamma(R\Xi R^\prime)^{-1/2}R\bm h\in\mathbb{R}^p$, and let $v_j$ be its $j$-th component for $j=1, \ldots, p$.  
	Let $\chi^2_{\nu}(c)$ be the noncentral chi-squared distribution with degrees of freedom $\nu$ and noncentrality parameter $c$, and $\chi^2_{\nu,\tau}(c)$ be its $\tau$th quantile. 
	
	\begin{thm}\label{thm test asymmetryH1}
		Suppose the assumptions of Theorem \ref{LAN} hold and $R\bm\theta_{0} =\bm 0_{p}$. Then, under $\mathbb{P}_{n,h}$, as $n\to \infty$,
		\begin{equation*}
		\text{(i)}\ W_n\rightarrow \chi^2_p(\delta); \quad \text{(ii)}\ LM_n\rightarrow \chi^2_p(\delta);\quad\text{(iii)}\ Q_n\rightarrow \sum_{j=1}^{p}e_j^{\ast}x_{j,v^2_j};
		\end{equation*}
		where $\delta=\bm h^{\prime}R^{\prime}(R\Xi R^{\prime})^{-1}R \bm h$, and $x_{j,v^2_j}$'s are independent random variables following the $\chi^2_{1}(v^2_j)$ distribution for $j=1, \ldots, p$.
	\end{thm}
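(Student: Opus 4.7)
The plan is to leverage Theorem \ref{LAN} together with the continuous mapping theorem and a standard Lagrange-multiplier expansion, replicating under $\mathbb{P}_{n,h}$ the algebraic identities that normally appear in the proof of Theorem \ref{thm test asymmetry}. Since $R\bm\theta_0=\bm 0_p$, Theorem \ref{LAN} immediately gives
\[
\sqrt{n}R\widehat{\bm\theta}_n \;=\; R\sqrt{n}(\widehat{\bm\theta}_n-\bm\theta_0)\;\rightarrow_{\mathcal{L}}\; Z\sim N\!\left(R\bm h,\,R\Xi R'\right).
\]
Assumption \ref{assum_y_tH1} supplies stationarity and geometric ergodicity of $\{y_{t,n}\}$ under $\mathbb{P}_{n,h}$, so the ergodic-theorem arguments used for Theorem \ref{thm2QMLE} carry over to yield consistency of $\widehat{\bm\theta}_n$, $\widetilde{\bm\theta}_n$, $\widehat{\Xi}$, $\widetilde{\Xi}$ and $\widetilde{\Sigma}$ for $\bm\theta_0$, $\bm\theta_0$, $\Xi$, $\Xi$ and $\Sigma$ respectively, together with the Hessian law of large numbers $-n^{-1}\partial^2 L_n/\partial\bm\theta\partial\bm\theta' \rightarrow_{p} \Sigma$.

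For part (i), Slutsky and continuous mapping yield $W_n=(\sqrt{n}R\widehat{\bm\theta}_n)'(R\widehat{\Xi}R')^{-1}(\sqrt{n}R\widehat{\bm\theta}_n)\rightarrow_{\mathcal{L}} Z'(R\Xi R')^{-1}Z$. Setting $Y=(R\Xi R')^{-1/2}Z\sim N((R\Xi R')^{-1/2}R\bm h,I_p)$ identifies this limit with $Y'Y\sim\chi^2_p(\delta)$, where $\delta=\bm h'R'(R\Xi R')^{-1}R\bm h$. For part (ii), the Lagrangian for the constrained optimization delivers the first-order condition $\partial L_n(\widetilde{\bm\theta}_n)/\partial\bm\theta = R'\lambda_n$; expanding this around $\widehat{\bm\theta}_n$, using $\partial L_n(\widehat{\bm\theta}_n)/\partial\bm\theta=\bm 0$ and the Hessian LLN, and then enforcing $R\widetilde{\bm\theta}_n=\bm 0$ furnishes the pair of stochastic expansions
\[
\sqrt{n}(\widetilde{\bm\theta}_n-\widehat{\bm\theta}_n)=-\Sigma^{-1}R'\Delta^{-1}R\sqrt{n}\widehat{\bm\theta}_n+o_p(1),\qquad \frac{1}{\sqrt{n}}\frac{\partial L_n(\widetilde{\bm\theta}_n)}{\partial\bm\theta}=R'\Delta^{-1}R\sqrt{n}\widehat{\bm\theta}_n+o_p(1),
\]
with $\Delta=R\Sigma^{-1}R'$. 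Substituting the second expansion into the definition of $L_n$ and repeatedly collapsing $R\Sigma^{-1}R'=\Delta$ shows $L_n=W_n+o_p(1)$, so the limit in (ii) coincides with that in (i).

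For part (iii), a second-order Taylor expansion of $L_n(\widetilde{\bm\theta}_n)$ around $\widehat{\bm\theta}_n$, combined with the vanishing score at $\widehat{\bm\theta}_n$ and the Lagrange expansion above, gives $Q_n=n(\widetilde{\bm\theta}_n-\widehat{\bm\theta}_n)'\Sigma(\widetilde{\bm\theta}_n-\widehat{\bm\theta}_n)+o_p(1)=n\widehat{\bm\theta}_n'R'\Delta^{-1}R\widehat{\bm\theta}_n+o_p(1)$. Rewriting this quadratic form as $Y'DY$ via $Y=(R\Xi R')^{-1/2}\sqrt{n}R\widehat{\bm\theta}_n$ and using the spectral decomposition $\Gamma D \Gamma'=\diag(e_1^*,\ldots,e_p^*)$, the rotated vector $\bm u=\Gamma Y$ satisfies $\bm u\sim N(\bm v,I_p)$ with independent coordinates in the limit, so $Y'DY=\sum_{j=1}^{p} e_j^* u_j^2$ with $u_j^2\sim\chi^2_1(v_j^2)$ independently, exactly matching the stated limit.

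The principal obstacle is that, because $\eta_t$ is not assumed Gaussian, the sequences $\mathbb{P}_{n,0}$ and $\mathbb{P}_{n,h}$ need not be mutually contiguous, so Le Cam's third lemma cannot be invoked to transfer null-hypothesis asymptotics to the local alternative. Each stochastic expansion and each law of large numbers therefore has to be verified directly under $\mathbb{P}_{n,h}$, mirroring the direct argument used to prove Theorem \ref{LAN}; in particular, the consistency of $\widetilde{\bm\theta}_n$ for $\bm\theta_0$ requires its own justification, since the data-generating parameter $\bm\theta_n=\bm\theta_0+\bm h/\sqrt{n}$ does not itself lie in the constrained set, though it converges to $\bm\theta_0$, which does. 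Once these direct verifications are granted, the algebra linking $W_n$, $L_n$ and $Q_n$ to a common quadratic form in $\sqrt{n}R\widehat{\bm\theta}_n$ proceeds exactly as in the standard theory.
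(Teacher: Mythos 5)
Your proposal is correct and follows essentially the same route as the paper: both reduce all three statistics to quadratic forms in $\sqrt{n}R\widehat{\bm\theta}_{n,h}$, whose limit $N(R\bm h, R\Xi R^{\prime})$ comes from Theorem \ref{LAN}, with the LM and QLR expansions obtained from the Lagrangian first-order conditions and a second-order Taylor expansion, and the QLR limit identified via the spectral decomposition of $D=(R\Xi R^{\prime})^{1/2}\Delta^{-1}(R\Xi R^{\prime})^{1/2}$. Your remarks on the failure of contiguity and the need to verify the laws of large numbers and the consistency of $\widetilde{\bm\theta}_{n,h}$ directly under $\mathbb{P}_{n,h}$ match exactly what the paper does via Lemma \ref{LAN-lemma2} and the direct proof of Theorem \ref{LAN}.
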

	Theorem \ref{thm test asymmetryH1} obtains the asymptotic distributions of three test statistics under the local alternatives,  
	which shows that the Wald and LM tests have the same local asymptotic powers.
	If $\eta_t$ is normally distributed, then $\sum_{j=1}^{p}e_j^{\ast}x_{j,v^2_j}$ reduces to a $\chi^2_p(\delta)$ distribution, and the QLR test is as efficient as the Wald and LM tests. 
	Moreover, note that $P(\chi^2_1(\delta) \geq \chi^2_{1,1-\tau}(\delta))=P(e_1^{\ast}\chi^2_{1}(v^2_1) \geq e_1^{\ast}\chi^2_{1,1-\tau}(v^2_1))$ holds for $p=1$, then it follows that the proposed three tests are equivalent in the local asymptotic power when $p=1$. 
	For general cases of $\eta_t$ with $p > 1$, it is difficult to compare the local asymptotic power of the QLR test with the other two tests. Alternatively, the simulation study in Section \ref{simulation_test} compares the local power of all three tests in finite samples, and it is found that three tests perform very similarly when the sample size is as large as 2000. 

	\section{Model checking} \label{checking}
	
	To check adequacy of the fitted asymmetric linear DAR model, we construct a mixed pormanteau test to detect misspecifications in the conditional mean and standard deviation jointly; see \cite{Wong2005mixed_portmanteau}. 
	In the literature, diagnostic checking the conditional mean and standard deviation, can be conducted by checking the significance of sample autocorrelation functions (ACFs) of residuals and absolute residuals, respectively.
	
	The ACFs of $\{\eta_t\}$ and $\{|\eta_t|\}$ at lag $k$ can be defined by $\rho_k=\cov(\eta_t, \eta_{t-k})/\var(\eta_t)$ and $\gamma_k=\cov(|\eta_t|, |\eta_{t-k}|)/\var(|\eta_t|)$, respectively.
	If the data generating process is correctly specified by model \eqref{model}, then $\{\eta_t\}$ and $\{|\eta_t|\}$ are $i.i.d.$ such that $\rho_k=0$ and $\gamma_k=0$ hold for any $k\geq 1$.
	For model \eqref{model} fitted by the QMLE, the corresponding residuals can be defined as $\widehat{\eta}_{t}=(y_t-\widehat{\bm\alpha}_n^\prime\bm Y_{t-1})/(\widehat{\bm\beta}_n^{\prime}\bm{X}_{t-1})$,
	and then the residual ACF and absolute residual ACF at lag $k$ can be calculated as
	\[
	\widehat{\rho}_{k}=\dfrac{\sum_{t=p+k+1}^{n}(\widehat{\eta}_{t}-\bar{\eta}_1)(\widehat{\eta}_{t-k}-\bar{\eta}_1)}{\sum_{t=p+1}^{n}(\widehat{\eta}_{t}-\bar{\eta}_1)^{2}} \text { and } \widehat{\gamma}_{k}=\dfrac{\sum_{t=p+k+1}^{n}(|\widehat{\eta}_{t}|-\bar{\eta}_2)(|\widehat{\eta}_{t-k}|-\bar{\eta}_2)}{\sum_{t=p+1}^{n}(|\widehat{\eta}_{t}|-\bar{\eta}_2)^{2}},
	\]
	respectively, where $\bar{\eta}_1=(n-p)^{-1}\sum_{t=p+1}^{n}\widehat{\eta}_t$ and $\bar{\eta}_2=(n-p)^{-1}\sum_{t=p+1}^{n}|\widehat{\eta}_t|$.
	Clearly, $\widehat{\rho}_{k}$ (or $\widehat{\gamma}_{k}$) is the sample version of ${\rho}_{k}$ (or ${\gamma}_{k}$).
	Accordingly, if the value of $\widehat{\rho}_{k}$ (or $\widehat{\gamma}_{k}$) deviates from zero significantly, it indicates that the conditional mean (or standard deviation) structure in model \eqref{model} is misspecified.
	
	Let $\widehat{\bm \rho}=(\widehat{\rho}_{1},\ldots,\widehat{\rho}_{M})^{\prime}$ and $\widehat{\bm \gamma}=(\widehat{\gamma}_{1},\ldots,\widehat{\gamma}_{M})^{\prime}$, where $M$ is a predetermined positive integer.
	Denote $\tau_1=E[\sgn(\eta_t)]$ and $\tau_2=E(|\eta_t|)$. Let $\xi_t=|\eta_t|-\tau_2$, then $E(\xi_t)=0$ and ${\sigma}_\xi^{2}=\var(\xi_t)=1-\tau_2^2$.
	Define the $M\times(3p+1)$ matrices $U_{\rho}=(\bm U_{\rho 1}^{\prime},\ldots,\bm U^{\prime}_{\rho M})^\prime$ and $U_{\gamma}=(\bm U^{\prime}_{\gamma 1},\ldots,\bm U^{\prime}_{\gamma M})^{\prime}$, where
	\[
	\bm U_{\rho k}=-\left(E\left(\dfrac{ \eta_{t-k} \bm {Y}_{t-1}^{\prime}}{\bm \beta_{0}^{\prime} \bm{X}_{t-1}}\right), \bm 0_{1 \times(2p+1)}\right) \text { and } \]
	\[\bm U_{\gamma k}=-\left(\tau_1E\left(\dfrac{\xi_{t-k} \bm{X}_{t-1}^{\prime}}{\bm \beta_{0}^{\prime} \mathbf{X}_{t-1}}\right), \tau_2E\left(\dfrac{\xi_{t-k} \bm{X}_{t-1}^{\prime}}{\bm \beta_{0}^{\prime} \mathbf{X}_{t-1}}\right)\right).
	\]
	Denote the $2M\times(2M+3p+1)$ matrix below
	\[
	V=\left(\begin{array}{ccc}
	I_{M} & 0 & U_{\rho} \\
	0 & I_{M} & U_{\gamma}/{\sigma}_\xi^{2}
	\end{array}\right).
	\]
	Let $\bm v_t=\left(\eta_t\eta_{t-1},\ldots,\eta_{t}\eta_{t-M},\xi_{t}\xi_{t-1}/{\sigma}_\xi^{2},\ldots,\xi_{t}\xi_{t-M}/{\sigma}_\xi^{2},
	\partial \ell_t(\bm \theta_{0})/\partial \bm\theta^\prime\Sigma^{-1}\right)^\prime$, and $G=E(\bm v_{t} \bm v_{t}^{\prime})$.
	\begin{thm}\label{thmACF}
		Suppose the assumptions of Theorem \ref{thm2QMLE} hold. If model \eqref{model} is correctly specified, then $\sqrt{n}(\widehat{\bm\rho}^{\prime},\widehat{\bm\gamma}^{\prime})^{\prime}\rightarrow_{\mathcal{L}} N\left(0, V G V^{\prime}\right)$ as $n\to \infty$.
	\end{thm}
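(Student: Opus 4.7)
The strategy is to establish a Bahadur-type linear representation
\[\sqrt{n}\begin{pmatrix}\widehat{\bm\rho}\\ \widehat{\bm\gamma}\end{pmatrix}=V\cdot\frac{1}{\sqrt{n}}\sum_{t=p+1}^{n}\bm v_t+o_p(1),\]
then apply a martingale-difference CLT to $n^{-1/2}\sum_t\bm v_t$. The key ingredients are a first-order Taylor expansion of the residuals around the true innovations, the score-type linear expansion of $\sqrt{n}(\widehat{\bm\theta}_n-\bm\theta_0)$ already derived in the proof of Theorem \ref{thm2QMLE}, and the ergodic theorem for averages evaluated at $\bm\theta_0$.

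To set up the expansion, regard the innovations as functions of the parameter, $\eta_t(\bm\theta)=(y_t-\bm\alpha^\prime\bm Y_{t-1})/(\bm\beta^\prime\bm X_{t-1})$, so that $\widehat\eta_t=\eta_t(\widehat{\bm\theta}_n)$ and $\eta_t=\eta_t(\bm\theta_0)$. Assumption \ref{assumption_compact} keeps $\bm\beta^\prime\bm X_{t-1}\geq\underline\omega>0$ uniformly in $\bm\theta\in\Theta$, yielding the expansion $\widehat\eta_t-\eta_t=(\widehat{\bm\theta}_n-\bm\theta_0)^\prime\partial\eta_t(\bm\theta_0)/\partial\bm\theta+r_t$ with $r_t$ negligible after $n^{-1/2}$-averaging; since Assumption \ref{assum_eta_t} gives $P(\eta_t=0)=0$, a standard stochastic-differentiability argument produces $|\widehat\eta_t|-|\eta_t|=\sgn(\eta_t)(\widehat\eta_t-\eta_t)$ up to a similarly negligible remainder. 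Inserting these into the bilinear decompositions for $\widehat\eta_t\widehat\eta_{t-k}$ and $|\widehat\eta_t|\,|\widehat\eta_{t-k}|$, independence of $\eta_t$ from $\mathcal{F}_{t-1}=\sigma(\eta_s,s<t)$ combined with $E(\eta_t)=0$ and $E(\xi_t)=0$ annihilates the ``forward'' cross term $n^{-1/2}\sum_t\eta_t(\widehat\eta_{t-k}-\eta_{t-k})$ and its $\xi$-analogue, because each component of $\partial\eta_{t-k}/\partial\bm\theta$ multiplied by $\eta_t$ has mean zero. The ergodic theorem then converts the surviving ``backward'' cross terms $n^{-1/2}\sum_t\eta_{t-k}(\widehat\eta_t-\eta_t)$ and $n^{-1/2}\sum_t\xi_{t-k}\sgn(\eta_t)(\widehat\eta_t-\eta_t)$ into $\bm U_{\rho k}\sqrt{n}(\widehat{\bm\theta}_n-\bm\theta_0)$ and $\bm U_{\gamma k}\sqrt{n}(\widehat{\bm\theta}_n-\bm\theta_0)/\sigma_\xi^2$ respectively, after dividing by the sample variances, which converge to $1$ and $\sigma_\xi^2$. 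The mean corrections $\bar\eta_1,\bar\eta_2$ are $O_p(n^{-1/2})$ and contribute only $o_p(1)$ after $\sqrt{n}$-scaling.

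Substituting the representation $\sqrt{n}(\widehat{\bm\theta}_n-\bm\theta_0)=\Sigma^{-1}n^{-1/2}\sum_t\partial\ell_t(\bm\theta_0)/\partial\bm\theta+o_p(1)$ from the proof of Theorem \ref{thm2QMLE} produces precisely the $V$-form with innovation process $\bm v_t$. Since $\{\bm v_t,\mathcal{F}_t\}$ is a stationary, ergodic martingale difference sequence with finite covariance $G$ (the lagged products $\eta_t\eta_{t-k}$ and $\xi_t\xi_{t-k}$ are mean-zero given $\mathcal F_{t-1}$ by the $i.i.d.$ property of $\{\eta_t\}$, and the score at $\bm\theta_0$ is a standard martingale difference), the martingale difference CLT combined with Slutsky's theorem yields the asserted $N(0,VGV^\prime)$ limit. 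The main obstacle will be controlling the Taylor remainders uniformly in $t$ without imposing any moment condition on $y_t$: Assumption \ref{assumption_compact} is essential to bound $\partial\eta_t(\bm\theta_0)/\partial\bm\theta$ through the denominator $\bm\beta^\prime\bm X_{t-1}\geq\underline\omega$, and the non-smoothness of $|\cdot|$ at the origin is handled by a standard $\epsilon$-truncation argument exploiting the continuity of the density of $\eta_t$ to justify the $\sgn(\eta_t)$ linearization in probability.
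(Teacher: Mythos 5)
Your proposal is correct and follows essentially the same route as the paper's proof: replace the mean-corrected, self-normalized ACFs by their raw counterparts (the denominators converge to $1$ and $\sigma_\xi^2$ and the centering terms are negligible after $\sqrt{n}$-scaling), Taylor-expand $\widehat\eta_t-\eta_t$ and $|\widehat\eta_t|-|\eta_t|$ in $\widehat{\bm\theta}_n-\bm\theta_0$, kill the forward cross terms by the martingale/mean-zero property of $\eta_t$ and $\xi_t$, convert the backward cross terms into $\bm U_{\rho k}$ and $\bm U_{\gamma k}$ via the ergodic theorem, substitute the score representation of $\sqrt{n}(\widehat{\bm\theta}_n-\bm\theta_0)$ from Theorem \ref{thm2QMLE}, and finish with the martingale CLT and the Cram\'{e}r--Wold device. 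The paper organizes the bilinear expansion through explicit remainder terms $A_{int}$ and $B_{int}$, but this is the same decomposition you describe, so no substantive difference.
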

	
	Theorem \ref{thmACF} can be used to check the significance of $\widehat{\rho}_{k}$ or $\widehat{\gamma}_{k}$ individually.
	We can construct consistent estimators of $V$ and $G$ using sample averages, which are denoted by $\widehat{G}$ and $\widehat{V}$, respectively. Then we can approximate the asymptotic distribution in Theorem \ref{thmACF}, and obtain confidence intervals for $\rho_{k}$ and $\gamma_{k}$.
	
	To check the first $M$ lags jointly, we construct a portmanteau test statistic below
	\begin{equation}
	Q(M)=n\left(\begin{array}{l}
	\widehat{\bm \rho} \\
	\widehat{\bm \gamma}
	\end{array}\right)^{\prime}\left(\widehat{V} \widehat{G} \widehat{V}^\prime\right)^{-1}\left(\begin{array}{l}
	\widehat{\bm \rho} \\
	\widehat{\bm \gamma}
	\end{array}\right). 
	\label{portmanteau test statistic}
	\end{equation}
	Theorem \ref{thmACF} and the continuous mapping theorem imply that, $Q(M)\to_{\mathcal{L}}\chi^2_{2M}$ as $n\to\infty$.
	Therefore, we reject the null hypothesis that $\rho_k$ and $\gamma_k$ ($1 \leq k\leq M$) are jointly insignificant at level $\tau$, if $Q(M)$ exceeds the $(1-\tau)$th quantile of $\chi^2_{2M}$ distribution.
	
	\section{Simulation experiments}\label{simulation}
	This section presents four simulation experiments to evaluate the finite-sample performance of the proposed QMLE, model selection method, three asymmetry tests and the mixed pormanteau test. 
	
	\subsection{Model estimation}\label{simulation_estimation}
	
	The first experiment aims to examine the finite-sample performance of the quasi-maximum likelihood estimator $\widehat{\bm \theta}_n$, for which the data generation process is
	\[
	y_t = 0.5 y_{t-1}+\eta_t(0.4+0.4y_{t-1}^+-0.6y_{t-1}^+),
	\]
	where $\{\eta_t\}$ are standard normal, or follow standardized Student $t_{5}$ distribution with unit variance, or standardized skewed $t$ distribution, denoted by $st_{5,-1.2}$, with unit variance and skew parameter $-1.2$ \citep{Jiang2020}. 
	The sample size is set to $n = 500, 1000$ or 2000, with 1000 replications for each sample size. 
	The projection newton method \citep{bertsekas1982projected} is employed for solving the optimization \eqref{optimization} in each replication. 
	Table \ref{tableQMLE} lists the biases, empirical standard deviations (ESDs) and asymptotic standard deviations (ASDs) of $\widehat{\bm \theta}_n$ for different innovation distributions and sample sizes. As the sample size increases, most of the biases, ESDs and ASDs become smaller, and the ESDs get closer to the corresponding ASDs. Moreover, when the distribution of $\eta_t$ gets more heavy-tailed or skewed, all ESDs and ASDs increase. This is as expected since either heavier tails or severer skewness of $\{\eta_t\}$ will lead to lower efficiency of the QMLE. 
	In addition, we also consider other parameter settings for the data generation process, and the simulation findings are similar as prevous. 
	
	\subsection{Model selection}\label{simulation_BIC}
	In the second experiment, we evaluate the performance of the proposed model selection method in Section \ref{BIC}, and compare the BIC$_1$ and its modified version BIC$_2$ in finite samples. 
	The data generating process is
	\[
	y_t = 0.3 y_{t-1}-0.2y_{t-2}+\eta_t(0.4+0.2y_{t-1}^+ +0.2y_{t-2}^+ -0.2y_{t-1}^- - 0.1y_{t-2}^-),
	\]
	where the innovations $\{\eta_t\}$ are defined as in the previous experiment. 
	Three sample sizes, $n=200, 500$ and 1000, are considered, and 1000 replications are generated for each sample size. 
	The BIC$_1$ in \eqref{BIC1} and BIC$_2$ in \eqref{BIC1} are employed to select the order $p$ with $p_{\max}=5$. 
	For $i = 1$ or 2, the cases of underfitting, correct selection and overfitting by BIC$_i$ correspond to $\widehat{p}_{in}$ being 1, 2 and greater than 2, respectively. 
	
	Table \ref{tableBIC} reports the percentages of underfitted, correctly selected and overfitted models by the two information criteria.
	The performance of both information criteria gets better when the sample size increases, while that becomes slightly worse as the distribution of $\eta_t$ gets more heavy-tailed or more skewed. 
	For the comparison between BIC$_1$ and BIC$_2$, it can be seen that the modified BIC (BIC$_2$) selects the correct model in most of the replications when the sample size is as small as $n=200$, while BIC$_1$ has comparable performance when the sample size is as large as $n=500$.  
	Overall, BIC$_2$ has better performance in model selection than that of BIC$_1$, especially for small and moderate samples. This indicates the necessity of the modified BIC in finite samples.

	\subsection{Asymmetry tests}\label{simulation_test}
	The third experiment examines the empirical size and power of the proposed asymmetry test statistics $W_n$, $L_n$ and $Q_n$. The data are generated from
	\[
	y_t = 0.4y_{t-1}+\eta_t[0.4+0.5y_{t-1}^+ - (0.5+k)y_{t-1}^-],
	\] 
	where $k=h/\sqrt{n}$ with $h\in \{-10,\ldots,-1,0,1,\ldots,10\}$ and $n$ being the sample size, and the innovations $\{\eta_t\}$ are defined as in the first experiment. 
	The null hypothesis of the asymmetry test is $H_0: k=0$, so that the case of $k=0$ corresponds to the size of the tests, the cases of $k\neq 0$ correspond to the local power. 
	Table \ref{tablesize} reports the empirical sizes of three tests at the significance level $5\%$ with $n=500, 1000$ and 2000. 
	From this table, we can see that, all tests have accurate sizes when the sample size is large. 
	In addition, $W_n$ and $Q_n$ are slightly oversized, especially when the sample size is small. 
	
	We next compare the local power of all three tests in finite samples at $5\%$ significance level. 
	Figure \ref{fig2} shows the empirical power of three tests for $n = 500$ and 2000. 
	We have the following findings. First, the local powers of $W_n$ and $Q_n$ are very similar, and they are slightly higher than that of $L_n$ when the sample size is small (i.e. $n = 500$), especially when $\eta_t$ is not normal or $|h|$ is not large. 
	Second, the local power of three tests is close to each other when the sample size is large (i.e. $n = 2000$), which is consistent to the theoretical comparison in Theorem \ref{thm test asymmetryH1} for $p=1$. 
	Finally, the local power for all three tests gets smaller as the innovations become more heavy-tailed or more skewed. 
	In addition, we also conduct simulation studies for the data generating process with $p>1$, the general findings are unchanged for the empirical size and power. 

	\subsection{Portmanteau test} \label{simulation_PT}
	
	In the fourth experiment, we study the proposed mixed portmanteau test $Q(M)$. The data are generated from
	\[
	y_t = 0.3y_{t-1}+c_1 y_{t-2}+\eta_t(0.4+0.3y_{t-1}^+ +{c_2}{y_{t-2}^+} -0.4{y_{t-1}^-}-c_2{y_{t-2}^-}),
	\]
	where the innovations $\{\eta_t\}$ are defined as in the first experiment. We fit an asymmetric linear DAR model with $p=1$ using the same method as in Section \ref{QMLE}, so that the case of $c_1=c_2=0$ corresponds to the size of the test, the case of $c_1 \neq 0$ corresponds to misspecifications in the conditional mean, and the case of $c_2>0$ corresponds to misspecifications in the conditional standard deviation. Two departure levels, 0.1 and 0.3, are considered for all $c_1$ and $c_2$.
	Table \ref{tablechecking} reports the rejection rates of $Q(6)$ at $5\%$ significance level based on 1000 replications, for sample size $n=500, 1000$ and $2000$. We have the following findings. 
	First, all sizes are close to the nominal level as the sample size $n$ increases, and most powers improve as $n$ or the departure level increases. 
	Second, $Q(6)$ is more powerful in detecting the misspecification in the conditional mean ($c_1\neq 0, c_2=0$) than that in the conditional standard deviation ($c_1=0$, $c_2>0$). 
	Finally, the performance of $Q(6)$ gets worse as the innovation distribution becomes more heavy-tailed or more skewed. This finding seems to be consistent with the result in the first experiment that, as the innovation distribution becomes more heavy-tailed or skewed, the estimation performance for all parameters tends to worsen.

	\section{An empirical example}\label{real_data}
	
	We illustrate the proposed inference tools using the weekly closing prices of S\&P500, denoted as $p_t$, span from January 1998 to December 2020, with 1200 observations in total. The data is downloaded from the website of Yahoo Finance (\textit{https://hk.finance.yahoo.com}). 
	Let $r_t=100\left( \ln p_t-\ln p_{t-1} \right)$ be the log returns in percentage, and denote $y_t=r_t-n^{-1}\sum_{t=1}^{n}r_t$ as the centered log returns in percentage. 
	The time plot of $\{y_t\}$ in Figure \ref{fig_real_data} suggests evident volatility clustering. 
	Table \ref{table_real_data_stat} lists summary statistics of $\{y_t\}$, where the sample skewness $-0.90$ indicates possible asymmetries in the volatility, and the sample kurtosis $7.17$ implies heavy-tailedness of $\{y_t\}$. 
	Moreover, the ACFs and partial ACFs of $\{y_t\}$ and $\{|y_t|\}$ are significant at the first few lags, which suggests that the autocorrelation coexists with the conditional heteroscedasticity in $\{y_t\}$. 
	The above findings motivate us to investigate $\{y_t\}$ by our proposed model and inference tools.
	
	Based on $p_{\max}=20$, the proposed BIC$_1$ and BIC$_2$ both select $p=4$. By the quasi-maximum likelihood estimation method in Section \ref{QMLE}, the fitted model is 
	\begin{align}\label{fitted-model-real-data}
	y_t=&-0.080_{0.033}y_{t-1}+0.034_{0.030}y_{t-2}+0.003_{0.032}y_{t-3}-0.014_{0.031}y_{t-4}+\widehat{\eta}_{t}\widehat{\sigma}_t \nonumber \\
	\widehat{\sigma}_t=&0.988_{0.096}+0.044_{0.050}y_{t-1}^{+}-0.415_{0.063}y_{t-1}^{-}+0.001_{0.047}y_{t-2}^{+}-0.248_{0.056}y_{t-2}^{-}\nonumber \\
	& +0.160_{0.050}y_{t-3}^{+}-0.286_{0.057}y_{t-3}^{-} +0.151_{0.047}y_{t-4}^+-0.189_{0.054}y_{t-4}^{-},
	\end{align}
	where the subscripts are the standard errors of the estimated coefficients. It can be seen that the coefficients of $y_{t-i}^+$ and $y_{t-i}^-$ are clearly different for $i=1,2,3$, which suggests that there may be asymmetric effects in the conditional volatility of $y_t$. 
	The Wald, LM and QLR tests in Section \ref{size} are conducted for model \eqref{fitted-model-real-data} and all their $p$-values are less than $0.001$, which corroborates the asymmetric effects in the volatility of $y_t$.
	To check the adequacy of the fitted model \eqref{fitted-model-real-data}, we perform the mixed portmanteau test $Q(M)$ in Section \ref{checking} for $M=6,12$ and $18$. The $p$-values of portmanteau tests are $0.41, 0.18$ and $0.27$, respectively, which suggests that the fitted model is adequate. In addition, as shown in Figure \ref{fig_real_data_portmanteau_test}, most of the residual ACFs  $\widehat{\rho}_k$ and $\widehat{\gamma}_k$ fall within their corresponding 95\% confidence bounds at the first 18 lags. 
	
	Since Value-at-Risk (VaR) is an important risk measure for financial assets, we use the fitted model to forecast the conditional quantile of $y_t$, i.e. the negative VaR.  
	To examine the forecasting performance, we conduct one-step-ahead predictions using a rolling forecasting procedure with a fixed moving window covering ten years' data points of size $522$. 
	Specifically, we fit an asymmetric linear DAR model of order four (ALDAR) for each moving window, and compute the forecast of the $\tau$th conditional quantile of $y_{t+1}$, given by $Q_{y_{t+1}}(\tau\mid \mathcal{F}_{t})=\widehat{\mu}_{t+1}+\widehat{\sigma}_{t+1}\widehat{b}_{\tau}$, where $\widehat{\mu}_{t+1}$ and $\widehat{\sigma}_{t+1}$ are the predicted conditional mean and standard deviation, respectively, and $\widehat{b}_\tau$ is the $\tau$th sample quantile of residuals $\{\widehat{\eta}_1,\ldots,\widehat{\eta}_t\}$. Then we move the window forward by one and repeat the above procedure until all data are used. Finally, we obtain 677 one-week-ahead negative VaRs for each $\tau$.  
	For illustration, the rolling forecasts at $\tau=5\%$ are displayed in Figure \ref{fig_real_data}, which indicates that the negative VaRs change accordingly to the volatility of the data. 
	
	To compare the forecasting performance of the proposed model with other counterparts, we also perform the rolling forecasting procedure using a linear DAR(4) model (LDAR) and an AR$(4)$ model with the threshold GARCH$(1,1)$ errors (AR-TGARCH). 
	Note that the AR-TGARCH model can depict the asymmetric effect in volatilities, while the LDAR model ignores the asymmetric effect. 
	For comparison, all these models are fitted by the QMLE, and their VaR forecasts are computed in the same way as for the ALDAR model. 
	To evaluate the forecasting performance of each model, we calculate the empirical coverage rate (ECR), and perform VaR backtests for the VaR forecasts at $\tau=1\%$, 5\%, 95\% and 99\%. 
	Specifically, ECR is calculated as the proportion of observations that fall below the corresponding conditional quantile forecast for the last 677 data points. Two VaR backtests, i.e. the likelihood ratio test for correct conditional coverage (CC) in \cite{christoffersen1998evaluating} and the dynamic quantile (DQ) test in \cite{Engle2004} are employed. Denote the hit by $H_t=I(y_t<Q_{y_t}(\tau\mid \mathcal{F}_{t-1}))$. The null hypothesis of CC test is that, conditional on $\mathcal{F}_{t-1}$, $\{H_t\}$ are $i.i.d.$ Bernoulli random variables with success probability being $\tau$. For the DQ test, following \cite{Engle2004}, we regress $H_t$ on regressors including a constant, four lagged hits $H_{t-i},i=1,2,3,4$, and the contemporaneous VaR forecast. The null hypothesis of DQ test is that all regression coefficients are zero and the intercept equals to the quantile level $\tau$. If the null hypothesis of each VaR backtest cannot be rejected, then it indicates that the VaR forecasts are satisfactory. 
	
	Table \ref{table_real_data_forecast} reports ECRs and $p$-values of two VaR backtests for the one-step-ahead forecasts by the fitted ALDAR, LDAR and AR-TGARCH models at the lower and upper 1\% and 5\% conditional quantiles, i.e. 1\% and 5\% VaRs for long and short positions. We use backtesting as the primary criterion, and the ECR as the secondary criterion. 
	In terms of backtests, none of the methods performs satisfactorily at $\tau=1\%$, and the proposed ALDAR model performs well at other three quantile levels with $p$-values not less than $0.5$. However, the LDAR model fails at all levels, and the AR-TGARCH model only performs adequately at $\tau=5\%$ with the $p$-values smaller than those of the ALDAR model. 
	For the ECRs, it can be seen that those of the ALDAR model are closest to the nominal quantile levels for upper quantiles. 
	The poor performance of the LDAR model is possibly because it ignores the asymmetric effect, while that of the AR-TGARCH model is perhaps because it is not robust to heavy-tailed data as its QMLE needs $E(y_t^4)<\infty$.   
	Therefore, we conclude that the proposed ALDAR model outperforms the other two competitors in forecasting VaRs for the S\&P500 Index.
	
	\section{Conclusion and discussion} \label{conclusion}
	
	This paper proposes the asymmetric linear double AR model which takes into account asymmetric effects for conditional heteroscedastic time series in the presence of a conditional mean structure. 
	The strict stationarity of the new model is derived, and inference tools, including a Gaussian QMLE for estimation and a mixed portmanteau test for diagnosis, are constructed without any moment condition on the data. 
	Based on the QMLE, a BIC and its modified version are proposed for order selection, and simulation results suggest that the modified BIC performs better in small and moderate samples. 
	The Wald, Lagrange multiplier and quasi-likelihood ratio test statistics are constructed to detect asymmetric effects, and it is shown that the Wald and Lagrange multiplier tests are asymptotically equivalent in size and power, while the asymptotics of the quasi-likelihood ratio test become non-standard.   
	The usefulness of the new model is confirmed by our empirical evidence, especially when the data are characterized by skewness and heavy-tailedness which are very common features for financial time series. 
	
	The study in this paper can be extended in several directions. 
	First, our model can be extended to allow for asymmetric effects in both the conditional mean and the standard deviation, then the proposed asymmetry tests could adapt to detect the asymmetry from the conditional location and scale separately or jointly. 
	Second, since financial time series can be heavy-tailed such that $E(\eta_t^4)=\infty$, it is also of interest to consider more robust estimation methods than Gaussian QMLE, for example, the quasi-maximum exponential likelihood estimation of \cite{Zhu_Ling2011}. 
	Third, the joint modeling of conditional mean and volatility in the presence of asymmetric effects for univariate case can be generalized to multivariate case. As a result, a vector asymmetric LDAR model is a natural extention and the related inference tools are worth to investigate.
	We leave these extensions for future research. 
	
	%
	
	\renewcommand{\thesection}{A}
	\setcounter{equation}{0} 
	\section*{Appendix: Technical proofs}
	
	This appendix includes technical details for Theorems \ref{thm1Stationarity}-\ref{thmACF}. 
	To show Theorems \ref{thm2QMLE} and \ref{LAN}, Lemmas \ref{boundedness}-\ref{LAN-lemma2} are introduced with proofs.  
	Throughout the appendix, for a vector $\bm x=(x_1, \ldots, x_p)^{\prime}$, the $n$-norm is defined as $\|\bm x\|_n=\left(\sum_{i=1}^p|x_i|^n\right)^{1/n}$; for a matrix or column vector $A$, we define $\|A\|=\sqrt{\text{tr}(AA')}$, where $\text{tr}(\cdot)$ denotes the trace of a square matrix. 
	
	\subsection{Proof of Theorem 1}
	
	\begin{proof}
		Denote $x^{+} = \max\left\lbrace 0,x\right\rbrace$ and $x^{-}= \min\left\lbrace 0,x\right\rbrace$. 
		Let $\bm Y_t=(y_t, \ldots, y_{t-p+1})^{\prime}$, $\bm Y^+_t=(y^+_t,\ldots,y^+_{t-p+1})^\prime$, $\bm Y^-_t=(y^-_t,\ldots,y^-_{t-p+1})^\prime$ and $\bm X_t=(1,{\bm Y^+_t}^\prime, -{\bm Y^-_t}^\prime)^{\prime}$, where $\{y_t\}$ are generated by model \eqref{model}. We begin by showing that $\{\bm Y_t\}$ is $\nu_p$-irreducible.
		
		Let $\mathcal{B}^p$ be the class of Borel sets of $\mathbb{R}^p$ and $\nu_p$ be the Lebesgue measure on $(\mathbb{R}^p, \mathcal{B}^p)$. Let $m: \mathbb{R}^p\rightarrow \mathbb{R}$  be the projection map onto the first coordinate, i.e. $m(\bm x)=x_1$ for $\bm x=(x_1, \ldots, x_p)^{\prime}$. 
		Then, $\{\bm Y_t\}$ is a homogeneous Markov chain on the state space $(\mathbb{R}^p, \mathcal{B}^p,\nu_p)$, with transition probability
		\[
		P(\bm x,A) =\int_{m(A)} \dfrac{1}{\bm x_{\sgn}^\prime \bm\beta} f\left(\dfrac{z-\bm x^{\prime}\bm\alpha}{\bm x_{\sgn}^\prime \bm\beta}\right)dz, \quad \bm x \in \mathbb{R}^p\; \text{and} \; A \in \mathcal{B}^p,
		\]
		where $\bm x_{\sgn}=(1,x_{1}^{+},\ldots,x_{p}^{+},-x_{1}^{-},\ldots,-x_{p}^{-})^{\prime}$, $\bm\alpha=(\alpha_1,...,\alpha_p)^{\prime}$, ${\bm\beta}=(\omega,\beta_{1+},\ldots,\beta_{p+},\\ \beta_{1-}, \ldots,\beta_{p-})^{\prime}$, and $f(\cdot)$ is the density function of $\eta_t$. We can further verify that the $p$-step transition probability of $\{\bm Y_t\}$ is 
		\begin{equation}\label{ptran}
		P^p(\bm x, A)=\int_{A}\prod_{i=1}^p\dfrac{1}{\bm X_{\sgn,i-1}^{\prime}\bm \beta} f\left(\dfrac{z_i-\bm X_{i-1}^{\prime}{\bm \alpha}}{\bm X_{\sgn,i-1}^{\prime}\bm \beta}\right)dz_1\ldots dz_p,
		\end{equation}
		where $\bm X_{i}=(z_i, \ldots, z_1, x_1, \ldots, x_{p-i})^{\prime}$ and $\bm X_{\sgn,i}$$=(1, z_i^+, \ldots, z_1^+, x_1^+,\ldots,x_{p-i}^+, -z_i^-, \ldots, -z_1^-,\\-x_1^-$$,\ldots,-x_{p-i}^-)^{\prime}$.
		Observe that, by Assumption \ref{assum1}, the transition density kernel in \eqref{ptran} is positive everywhere. As a result, $\{\bm Y_t\}$ is $\nu_p$-irreducible.
		
		We next prove that $\{\bm Y_{t}\}$ satisfies Tweedie's drift criterion \citep[Theorem 4]{Tweedie1983}, i.e., there exists a small set $G$ with $\nu_p(G)>0$ and a non-negative continuous function $g(\bm x)$ such that
		\begin{eqnarray}\label{drift criterion I}
		E\left\{g(\bm Y_{t})|\bm Y_{t-1}=\bm x\right\}\leq (1-\epsilon)g(\bm x), \quad \bm x \notin G,
		\end{eqnarray}
		\begin{eqnarray}\label{drift criterion II}
		E\left\{g(\bm Y_{t})|\bm Y_{t-1}=\bm x\right\}\leq M, \quad \bm x \in G,
		\end{eqnarray}
		for some constant $0<\epsilon<1$ and $0<M<\infty$. 
		We accomplish the proof in two parts, i.e. Case (i) for $0<\kappa\leq 1$ and Case (ii) for $\kappa \in \{2,3,4,\ldots\}$.  
		
		We first consider Case (i) for $0<\kappa\leq 1$. It can be verified that 
		\begin{align*}
		E(|y_{t+1}|^{\kappa}&\mid \bm Y_t=\bm x) \\
		&\leq \sum_{i=1}^p\left[E(|\alpha_i+\beta_{i+}\eta_{t+1}|^{\kappa})|x_i^+|^{\kappa}+E(|\alpha_i-\beta_{i-}\eta_{t+1}|^{\kappa})|x_i^-|^{\kappa} \right]+w^\kappa E(|\eta_{t+1}|^{\kappa})\\
		&\leq \sum_{i=1}^pa_{i}|x_i|^{\kappa} +w^\kappa E(|\eta_{t+1}|^{\kappa}),
		\end{align*}
		where $\bm x=(x_1,\ldots,x_p)^{\prime}$ and $a_{i}=\max\left\lbrace E(|\alpha_i+\beta_{i+}\eta_{t}|^{\kappa}), E(|\alpha_i-\beta_{i-}\eta_{t}|^{\kappa})\right\rbrace$ for $1\leq i\leq p$.
		Note that $\sum_{i=1}^pa_i<1$, and we can then find positive values $\{r_1,\ldots,r_{p-1}\}$ such that
		\begin{equation}\label{coe}
		a_p<r_{p-1}<1-\sum_{i=1}^{p-1}a_i \quad\text{and}\quad a_{i+1}+r_{i+1}<r_i<1-\sum_{k=1}^ia_k \;
		\text{for } 1\leq i \leq p-2.
		\end{equation}
		Consider the test function $g(\bm x)=1+|x_1|^\kappa+\sum_{i=1}^{p-1}r_i |x_{i+1}|^\kappa$, and we have that
		\begin{align*}\label{B.3}
		E\{g(\bm Y_{t+1})&\mid\bm Y_{t}=\bm x\} \nonumber \\
		&\leq1+\sum_{i=1}^{p}a_i|x_i|^{\kappa}+\sum_{i=1}^{p-1}r_i |x_i|^\kappa+\omega^{\kappa} E(|\eta_{t+1}|^{\kappa}) \nonumber \\
		&= 1+(a_1+r_1)|x_1|^\kappa+\sum_{i=2}^{p-1}\dfrac{a_i+r_i}{r_{i-1}}r_{i-1}|x_i|^\kappa +\dfrac{a_p}{r_{p-1}}r_{p-1}|x_p|^\kappa+\omega^{\kappa} E(|\eta_{t+1}|^{\kappa}) \nonumber \\
		&\leq \rho g(\bm x)+1-\rho+\omega^{\kappa} E(|\eta_{t+1}|^{\kappa}),
		\end{align*}
		where, from \eqref{coe}, 
		\begin{equation}\label{rho}
		\rho=\max\left\{a_1+r_1,\dfrac{a_2+r_2}{r_{1}},\cdots, \dfrac{a_{p-1}+r_{p-1}}{r_{p-2}},\dfrac{a_p}{r_{p-1}}\right\}<1.
		\end{equation}
		
		Denote $\epsilon=1-\rho-\{1-\rho+\omega^{\kappa} E(|\eta_{t+1}|^{\kappa})\}/g(\bm x)$, and $G=\{\bm x: \|\bm x\|\leq L\}$, where $L$ is a positive constant such that $g(\bm x)>1+\omega^{\kappa} E(|\eta_{t+1}|^{\kappa})/(1-\rho)$ as $\|\bm x\|>L$.
		We can verify that \eqref{drift criterion I} and \eqref{drift criterion II} hold, i.e. Tweedie's drift criterion holds. 
		Moreover, $\{\bm Y_{t}\}$ is a Feller chain since, for each bounded continuous function $g^*(\cdot)$, $E\{g^*(\bm Y_{t})|\bm Y_{t-1}=\bm x\}$ is continuous with respect to $\bm x$, and then $G$ is a small set.
		As a result, by Theorem 4(ii) in \cite{Tweedie1983} and Theorems 1 and 2 in \cite{Feigin_Tweedie1985}, $\{\bm Y_{t}\}$ is geometrically ergodic with a unique stationary distribution $\pi(\cdot)$, and
		\begin{equation*}
		\int_{\mathbb{R}^p}g(\bm x)\pi(d\bm x)=1+\left(1+\sum_{i=1}^{p-1}r_i\right)E(|y_t|^{\kappa})<\infty,
		\end{equation*}
		which implies that $E(|y_t|^{\kappa})<\infty$.
		This accomplishes the first part.
		
		Next, we consider Case (ii) for $\kappa \in \{2,3,4,\ldots\}$. 
		Note that $y_t=\sum_{i=1}^{p}[(\alpha_i+\beta_{i+}\eta_{t})x_i^++(\alpha_i-\beta_{i-}\eta_{t})x_i^-]+\eta_{t}\omega$.  
		This together with the multinomial theorem, implies that
		\begin{align*}
		E(|y_{t+1}|^{\kappa}&\mid \bm Y_t=\bm x) \\
		\leq &E\left[\left(\sum_{i=1}^{p}[|\alpha_i+\beta_{i+}\eta_{t+1}||x_i^+|+|\alpha_i-\beta_{i-}\eta_{t+1}||x_i^-|]+|\eta_{t+1}|\omega\right)^\kappa\right]\\
		\leq &E\left[\left( \sum_{i=1}^{p}\delta_i|x_i|+|\eta_{t+1}|\omega \right)^{\kappa} \right]\\
		= &E\left( \sum_{\kappa_1+\cdots+\kappa_p=\kappa}\dfrac{\kappa!}{\kappa_1!\cdots\kappa_p!}\delta^{\kappa_1}_1\cdots\delta_p^{\kappa_p}|x_1|^{\kappa_1}\cdots|x_p|^{\kappa_p}+\right.\\
		&\left.\sum_{\kappa_1+\cdots+\kappa_p<\kappa}\dfrac{\kappa!}{\kappa_1!\cdots\kappa_p!(\kappa-\sum_{j=1}^p\kappa_j)!}\delta^{\kappa_1}_1\cdots\delta_p^{\kappa_p}|x_1|^{\kappa_1}\cdots|x_p|^{\kappa_p}|\eta_{t+1}\omega|^{\kappa-\sum_{j=1}^p\kappa_j}\right),
		\end{align*}
		where $\delta_i=\max\left\lbrace|\alpha_i+\beta_{i+}\eta_{t+1}|,|\alpha_i-\beta_{i-}\eta_{t+1}|\right\rbrace$. 
		For positive integers $n_1,\cdots,n_p$, we have
		\begin{equation}\label{fact1}
		|x_1|^{n_1}\cdots|x_p|^{n_p}\leq \dfrac{\sum_{i=1}^{p}n_i|x_i|^{n_1+\cdots+n_p}}{n_1+\cdots+n_p},
		\end{equation}
		and if $n>n_1+\cdots+n_p$, 
		\begin{equation}\label{fact2}
		\dfrac{|x_1|^{n_1}\cdots|x_p|^{n_p}}{||\bm x||_n^n}=o(1)\to 0 \;\;\text{as} \;\; \|\bm x\|_n\to\infty. 
		\end{equation}
		Then by \eqref{fact1} and \eqref{fact2}, it can be verified that
		\begin{align}\label{conditional y}
		E(|y_{t+1}|^{\kappa}&\mid \bm Y_t=\bm x) \nonumber\\ 
		\leq &E\left( \sum_{\kappa_1+\cdots+\kappa_p=\kappa}\dfrac{\kappa!}{\kappa_1!\cdots\kappa_p!}\delta^{\kappa_1}_1\cdots\delta_p^{\kappa_p}|x_1|^{\kappa_1}\cdots|x_p|^{\kappa_p}+o(||\bm x||_\kappa^\kappa)\right) \nonumber\\
		\leq  &E\left( \sum_{\kappa_1+\cdots+\kappa_p=\kappa}\dfrac{\kappa!}{\kappa_1!\cdots\kappa_p!}\delta^{\kappa_1}_1\cdots\delta_p^{\kappa_p}\dfrac{\sum_{i=1}^{p}\kappa_i|x_i|^\kappa}{\kappa}+o(||\bm x||_\kappa^\kappa)\right) \nonumber\\
		=&E\left( \sum_{i=1}^{p}\delta_i|x_i|^\kappa\sum_{\kappa_1+\cdots+\kappa_p=\kappa}\dfrac{(\kappa-1)!}{\kappa_1!\cdots(\kappa_i-1)!\cdots\kappa_p!}\delta^{\kappa_1}_1\cdots\delta_i^{\kappa_i-1}\cdots\delta_p^{\kappa_p} \right)+o(||\bm x||_\kappa^\kappa) \nonumber\\
		=&E\left( \sum_{i=1}^{p}\delta_i\left(\sum_{j=1}^{p}\delta_j\right)^{\kappa-1}|x_i|^\kappa\right)+o(||\bm x||_\kappa^\kappa) \nonumber\\
		=&\sum_{i=1}^{p}a_{i}|x_i|^\kappa+o(||\bm x||_\kappa^\kappa),
		\end{align}
		where $a_{i}=E[\delta_i(\sum_{j=1}^{p}\delta_j)^{\kappa-1}]$ for $1\leq i\leq p$. 
		Note that for $\kappa \in \{2,3,4,\ldots\}$, by the assumption of Case (ii), we have
		$$\sum_{i=1}^pa_{i}=E\left( \sum_{i=1}^{p}\delta_i\left(\sum_{j=1}^{p}\delta_j\right)^{\kappa-1} \right)=E\left( \sum_{i=1}^{p}\max\left\lbrace|\alpha_i+\beta_{i+}\eta_{t+1}|,|\alpha_i-\beta_{i-}\eta_{t+1}|\right\rbrace \right)^\kappa<1.$$ 
		As a result, we can find positive values $\{r_1,\ldots,r_{p-1}\}$ such that \eqref{coe} holds.
		
		Consider the test function $g(\bm x)=1+|x_1|^\kappa+\sum_{i=1}^{p-1}r_i |x_{i+1}|^\kappa$ as for Case (i). Define $\rho$ as in \eqref{rho}. Note that $g(\bm x)=O(1+||\bm x||_\kappa^\kappa)$, this together with \eqref{conditional y}, implies that
		\begin{align*}\label{B.4}
		E\{g(\bm Y_{t+1})&\mid\bm Y_{t}=\bm x\} \nonumber \\
		&\leq1+\sum_{i=1}^{p}a_i|x_i|^{\kappa}+\sum_{i=1}^{p-1}r_i |x_i|^\kappa+o(||\bm x||_\kappa^\kappa) \nonumber \\
		&= 1+(a_1+r_1)|x_1|^\kappa+\sum_{i=2}^{p-1}\dfrac{a_i+r_i}{r_{i-1}}r_{i-1}|x_i|^\kappa +\dfrac{a_p}{r_{p-1}}r_{p-1}|x_p|^\kappa+o(||\bm x||_\kappa^\kappa) \nonumber \\
		&\leq \rho g(\bm x)+1-\rho+o(||\bm x||_\kappa^\kappa)=(\rho+o(1))g(\bm x), 
		\end{align*}
		where $o(1)\to 0$ as $||\bm x||_\kappa\rightarrow\infty$. 
		For any fixed $\epsilon>0$, choose $L>0$ large enough, such that $\rho+o(1)<1-\epsilon<1$, as $||\bm x||_\kappa>L$. 
		Let $G=\{\bm x: \|\bm x\|\leq L\}$, then $G$ is a bounded set with $\mu_p(G)>0$. 
		It can be shown that \eqref{drift criterion I} and \eqref{drift criterion II} hold, i.e. Tweedie's drift criterion is verified. Similar to the proof of Case (i), we can show that, there exists a strictly stationary solution $\{ y_t \}$ to model \eqref{model}, and this solution is unique and geometrically ergodic with $E\left(|y_t|^\kappa \right)<\infty$. This accomplishes the second part. 
		
	\end{proof}
	
	\subsection{Proof of Theorem \ref{thm2QMLE}}
	
	To show Theorem \ref{thm2QMLE}, we introduce the following lemmas. 
	\begin{lemma}\label{boundedness}
		If Assumptions \ref{assum_y_t} and \ref{assumption_compact} hold, then it holds that
		\[
		\text{ (i) } E\sup _{\bm \theta \in \Theta}\left|\ell_{t}(\bm \theta)\right|<\infty; \quad \text{ (ii) } E\sup_{\bm \theta \in \Theta} \left\| \dfrac{\partial \ell_{t}(\bm \theta)}{\partial \bm \theta } \right\|<\infty; \quad \text{ (iii) } E\sup _{\bm \theta \in \Theta}\left\|\dfrac{\partial^{2} \ell_{t}(\bm \theta)}{\partial \bm \theta \partial \bm \theta^{\prime}}\right\|<\infty.
		\]
	\end{lemma}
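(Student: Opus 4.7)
The plan is to exploit the compact parameter space (Assumption \ref{assumption_compact}) together with the DAR-type scaling, whereby the conditional scale $\bm\beta'\bm X_{t-1}$ is linear in $|y_{t-i}|$ at the same rate as the conditional location $\bm\alpha'\bm Y_{t-1}$, so that all the key ratios appearing in $\ell_t$ and its derivatives are bounded uniformly in $\bm\theta$. Since $\beta_{i+}y_{t-i}^{+}-\beta_{i-}y_{t-i}^{-}=\beta_{i+}y_{t-i}^{+}+\beta_{i-}|y_{t-i}^{-}|\geq\underline\beta|y_{t-i}|$, I first would establish the uniform envelope
\[
\underline\omega+\underline\beta\sum_{i=1}^{p}|y_{t-i}|\ \leq\ \bm\beta'\bm X_{t-1}\ \leq\ \bar\omega+\bar\beta\sum_{i=1}^{p}|y_{t-i}|,
\]
valid for every $\bm\theta\in\Theta$, together with the analogous envelope for $\bm\beta_0'\bm X_{t-1}$.

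From these envelopes the key ratios are bounded by deterministic constants: term-by-term $|y_{t-i}|/(\bm\beta'\bm X_{t-1})\leq 1/\underline\beta$, so $\|\bm Y_{t-1}\|/(\bm\beta'\bm X_{t-1})$ and $\|\bm X_{t-1}\|/(\bm\beta'\bm X_{t-1})$ are uniformly bounded; similarly $\bm\beta_0'\bm X_{t-1}/\bm\beta'\bm X_{t-1}$ is uniformly bounded because both numerator and denominator are affine and positive in $\sum_i|y_{t-i}|$. Using the identity $y_t-\bm\alpha_0'\bm Y_{t-1}=\eta_t\,\bm\beta_0'\bm X_{t-1}$, I then would decompose
\[
\frac{y_t-\bm\alpha'\bm Y_{t-1}}{\bm\beta'\bm X_{t-1}}=\frac{\bm\beta_0'\bm X_{t-1}}{\bm\beta'\bm X_{t-1}}\eta_t+\frac{(\bm\alpha_0-\bm\alpha)'\bm Y_{t-1}}{\bm\beta'\bm X_{t-1}},
\]
which is dominated by $C(1+|\eta_t|)$ uniformly in $\bm\theta$ for some constant $C$ depending only on $\Theta$.

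For part (i), the quadratic piece of $\ell_t$ is then controlled by $C(1+\eta_t^2)$, whose expectation is finite since $E(\eta_t^2)=1$. The logarithmic piece satisfies $|\ln(\bm\beta'\bm X_{t-1})|\leq|\ln\underline\omega|+\ln\bigl(1+\tfrac{\bar\beta}{\bar\omega\vee 1}\sum_i|y_{t-i}|\bigr)+\text{const}$, which by the elementary bound $\ln(1+x)\leq C_\kappa(1+x^\kappa)$ is dominated by $C'(1+\sum_i|y_{t-i}|^\kappa)$, integrable by Assumption \ref{assum_y_t}. For parts (ii) and (iii), I would compute the explicit first and second derivatives of $\ell_t$ in closed form; each entry is a product of the uniformly bounded ratios above and a power (at most squared for the Hessian) of the residual ratio, with no further logarithmic factor. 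A direct dominating argument then yields envelopes of the form $C(1+\eta_t^2)$, integrable by the unit-variance assumption.

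The main obstacle is the logarithmic term in (i), since no full moment on $y_t$ is imposed; this is precisely where the fractional moment in Assumption \ref{assum_y_t} is used, via the trick that $\ln(1+x)$ can be absorbed into an arbitrarily small power $x^\kappa$. A secondary point that is specific to the asymmetric model is checking that the $y_{t-i}^{\pm}$ decomposition still produces the clean lower envelope $\underline\omega+\underline\beta\sum_i|y_{t-i}|$, which keeps $\bm\beta'\bm X_{t-1}$ bounded away from zero and prevents the ratios from blowing up; this is exactly what Assumption \ref{assumption_compact} guarantees by imposing a uniform positive lower bound $\underline\beta$ on both $\beta_{i+}$ and $\beta_{i-}$.
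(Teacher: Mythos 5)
Your proposal is correct and follows essentially the same route as the paper: both rest on the uniform envelope $\underline\omega+\underline\beta\sum_{i}|y_{t-i}|\leq\bm\beta'\bm X_{t-1}\leq\bar\omega+\bar\beta\sum_{i}|y_{t-i}|$ from Assumption \ref{assumption_compact}, the decomposition $y_t-\bm\alpha'\bm Y_{t-1}=(\bm\alpha_0-\bm\alpha)'\bm Y_{t-1}+\eta_t\bm\beta_0'\bm X_{t-1}$ with $E(\eta_t^2)=1$ for the quadratic and derivative terms, and absorption of the logarithm into the fractional moment of Assumption \ref{assum_y_t} (the paper does this via $E\ln(\cdot)=\kappa^{-1}E\ln(\cdot)^\kappa$ plus the $c_r$ and Jensen inequalities rather than your bound $\ln(1+x)\leq C_\kappa(1+x^\kappa)$, but the two devices are interchangeable here).
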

	\begin{proof}
		Recall that $\bm \theta = (\bm \alpha^\prime,\bm \beta^\prime)^{'}$ and $\ell_{t}(\bm \theta)=-\ln \left(\bm \beta^{\prime} \bm X_{t-1}\right)-0.5\left(y_{t}-\bm \alpha^{\prime} \bm Y_{t-1}\right)^{2}/\left(\bm{\beta}^{\prime} \bm{X}_{t-1}\right)^{2}$, where $\bm Y_{t} = (y_t,\ldots,y_{t-p+1})^{\prime}$ and $\bm X_t =(1,\bm Y^{\prime}_{t+},-\bm Y^{\prime}_{t-})^{\prime}$ with $\bm Y_{t+} = (y_t^{+},\ldots,y_{t-p+1}^{+})^{\prime}$ and $\bm Y_{t-} = (y_t^{-},\ldots,y_{t-p+1}^{-})^{\prime}$. 
		It can be derived that
		\begin{align*}
		&\dfrac{\partial \ell_{t}(\bm\theta)}{\partial \bm\alpha}=\dfrac{\bm Y_{t-1}(y_t-\bm \alpha^\prime \bm Y_{t-1})}{(\bm \beta^\prime \bm X_{t-1})^2}, \quad 
		\dfrac{\partial \ell_{t}(\bm \theta)}{\bm \beta}=-\dfrac{\bm X_{t-1}}{\bm \beta^\prime \bm X_{t-1}}\left[ 1-\dfrac{(y_t-\bm \alpha^\prime\bm X_{t-1})^2}{(\bm \beta^\prime \bm X_{t-1})^2} \right],\\
		&\dfrac{\partial^2 \ell_{t}(\bm\theta)}{\partial \bm\alpha \partial \bm\alpha^\prime}=-\dfrac{\bm Y_{t-1}\bm Y_{t-1}^\prime}{(\bm\beta^\prime \bm X_{t-1})^2}, \quad 
		\dfrac{\partial^2 \ell_{t}(\bm\theta)}{\partial \bm\beta \partial \bm\beta^\prime}=\dfrac{\bm X_{t-1}\bm X_{t-1}^\prime}{(\bm\beta^\prime \bm X_{t-1})^2}\left[ 1-\dfrac{3(y_t-\bm\alpha^\prime \bm Y_{t-1})^2}{(\bm\beta^\prime \bm X_{t-1})^2} \right], \;\text{and}\\
		&\dfrac{\partial^2 \ell_{t}(\bm\theta)}{\partial \bm\alpha \partial \bm\beta^\prime}=-\dfrac{2\bm Y_{t-1}\bm X_{t-1}^\prime(y_t-\bm\alpha^\prime \bm Y_{t-1})}{(\bm\beta^\prime \bm X_{t-1})^3}.
		\end{align*}
		We first show (i). By Assumption \ref{assum_y_t}, there exists a constant $\kappa >0$ such that $E(|y_t|^\kappa) < \infty$. Denote $\overline{\omega}^\star = \max\left\lbrace 1,\overline \omega\right\rbrace $ and $c=I(0<\kappa\leq 1)+(p+1)I(\kappa>1)$, i.e. $c=1$ if $\kappa\leq 1$ and $c=p+1$ if $\kappa>1$. By the $c_r$ inequality and Jensen's inequality, we have
		\begin{align*}
		E \ln \left(\bar{\omega}^{\star}+\bar{\beta}\sum_{i=1}^{p} |y_{t-i}|\right) &=\frac{1}{\kappa} E \ln \left(\bar{\omega}^{\star}+\bar{\beta}\sum_{i=1}^{p}  |y_{t-i}|\right)^{\kappa} \\ 
		&\leq \frac{1}{\kappa} E \ln \left(c^{\kappa-1}\bar{\omega}^{\star \kappa}+c^{\kappa-1}\bar{\beta}^\kappa\sum_{i=1}^{p} |y_{t-i}|^\kappa\right) \\ 
		&\leq \frac{1}{\kappa} \ln \left(c^{\kappa-1}\bar{\omega}^{\star \kappa}+c^{\kappa-1}\bar{\beta}^\kappa\sum_{i=1}^{p} E|y_{t-i}|^\kappa\right) <\infty.
		\end{align*}
		This together with Assumption \ref{assumption_compact} and $|y_{t}|=y^+_{t}- y^-_{t}$, implies that
		\begin{align}\label{finite 1}
		&E \sup _{\bm\theta \in \Theta}\left|\ln \left(\omega+\sum_{i=1}^{p} (\beta_{i+} y^+_{t-i}-\beta_{i-} y^-_{t-i})\right)\right|  \nonumber \\
		\leq &E \sup _{\bm\theta \in \Theta}\left[I\left(\omega+\sum_{i=1}^{p} \bar{\beta} (y^+_{t-i}- y^-_{t-i}) \geq 1\right) \ln \left(\omega+\sum_{i=1}^{p} \bar{\beta} (y^+_{t-i}- y^-_{t-i})\right)\right]  \nonumber\\
		&+E \sup _{\bm\theta \in \Theta}\left[-I\left(\omega+\sum_{i=1}^{p}\bar{\beta} (y^+_{t-i}- y^-_{t-i}) \leq 1\right) \ln \left(\omega+\sum_{i=1}^{p} \bar{\beta}(y^+_{t-i}- y^-_{t-i})\right)\right]  \nonumber\\
		\leq & E \ln \left(\bar{\omega}^{\star}+\bar{\beta}\sum_{i=1}^{p}  |y_{t-i}|\right)-I\{\underline{\omega}<1\} \ln \underline{\omega}<  \infty.		
		\end{align}
		Note that $\eta_t$ is independent of $\mathcal{F}_{t-1}$, $E(\eta_t)=0, E(\eta_t^2)=1$ and  
		$y_{t}-\sum_{i=1}^{p} \alpha_{i} y_{t-i}=\sum_{i=1}^{p}(\alpha_{i0}-\alpha_{i})y_{t-i}+\eta_t\left(\omega_0+\sum_{i=1}^{p}\left(\beta_{i0+} y_{t-i}^{+}-\beta_{i0-} y_{t-i}^{-}\right)\right)$, 
		then by Assumption \ref{assumption_compact} and $c_r$ inequality, it can be verified that
		\begin{align}\label{finite 2}
		&E\sup _{\bm \theta \in \Theta}\left[\frac{\left(y_{t}-\sum_{i=1}^{p} \alpha_{i} y_{t-i}\right)^{2}}{(\omega+\sum_{i=1}^{p} \beta_{i+} y^+_{t-i}-\beta_{i-} y^-_{t-i})^2}\right] \nonumber\\
		\leq &E\sup _{\bm \theta \in \Theta}\left[\left(\sum_{i=1}^{p}\dfrac{\left(\alpha_{i0}-\alpha_{i}\right) y_{t-i}}{\underline{\omega}+\underline{\beta}\sum_{i=1}^{p} |y_{t-i}|}\right)^{2}\right]
		+ E \left[\left(\dfrac{\overline{\omega}+\bar{\beta}\sum_{i=1}^{p} |y_{t-i}|}{\underline{\omega}+\underline{\beta}\sum_{i=1}^{p} |y_{t-i}|}\right)^2\right]  \nonumber\\
		\leq & E\sup _{\bm \theta \in \Theta}\left[\left(\sum_{i=1}^{p}\dfrac{\left(\alpha_{i0}-\alpha_{i}\right) y_{t-i}}{\underline{\beta} |y_{t-i}|}\right)^{2}\right] \nonumber\\
		&+
		2E\left[\left(\dfrac{\overline{\omega}}{\underline{\omega}+\underline{\beta}\sum_{i=1}^{p} |y_{t-i}|}\right)^2\right]
		+2E\left[ \left(\sum_{i=1}^{p}\dfrac{\bar{\beta}|y_{t-i}|}{\underline{\omega}+\underline{\beta}\sum_{i=1}^{p} |y_{t-i}|}\right)^2\right] \nonumber\\
		\leq & \dfrac{p}{\underline{\beta}^2}\sum_{i=1}^{p}\sup _{\bm \theta \in \Theta}(\alpha_{i0}-\alpha_{i})^2 +\dfrac{2\overline{\omega}^2}{\underline{\omega}^2}+\dfrac{2p^2\bar{\beta}^2}{\underline{\beta}^2}<\infty. 
		\end{align}
		By \eqref{finite 1}, \eqref{finite 2} and the triangle inequality, we have
		\[E\sup_{\bm \theta \in \Theta}\left|\ell_{t}(\bm \theta)\right|
		\leq E\sup_{\bm\theta \in \Theta}\left|\ln\left(\bm \beta^{\prime} \bm X_{t-1}\right)\right|
		+ \dfrac{1}{2}E\sup_{\bm \theta \in \Theta}\dfrac{\left(y_{t}-\bm \alpha^{\prime} \bm Y_{t-1}\right)^{2}}{\left(\bm{\beta}^{\prime} \bm{X}_{t-1}\right)^{2}}<\infty.\]
		Thus, (i) is verified. 
		Similarly, we can show that (ii) and (iii) hold. 
	\end{proof}

	\begin{lemma}\label{three o_p}
		If Assumptions \ref{assum_y_t} and \ref{assumption_compact} hold, then 
		\begin{equation*}
		\begin{aligned}
		&\text { (i) } \sup_{\bm \theta \in \Theta}\left|\dfrac{1}{n} \sum_{t=p+1}^n \ell_t(\bm \theta)-E \left[\ell_{t}(\bm \theta)\right]\right|=o_{p}(1); \\ 
		&\text{ (ii) } \sup_{\bm \theta \in \Theta}\left\|\dfrac{1}{n} \sum_{t=p+1}^n \dfrac{\partial \ell_t(\bm \theta)}{\partial \bm \theta}-E\left[\dfrac{\partial \ell_{t}(\bm \theta)}{\partial \bm \theta}\right]\right\|=o_{p}(1); \\ 
		&\text { (iii) } \sup _{\bm \theta \in \Theta}\left\|\dfrac{1}{n} \sum_{t=p+1}^{n}\frac{\partial^{2} \ell_{t}(\bm \theta)}{\partial \bm \theta \partial \bm \theta^{\prime}}-E\left[\frac{\partial^{2} \ell_{t}(\bm \theta)}{\partial \bm \theta \partial \bm \theta^{\prime}}\right]\right\|=o_{p}(1).\\
		\end{aligned}
		\end{equation*}
	\end{lemma}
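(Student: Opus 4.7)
The plan is to reduce each of (i)--(iii) to a uniform law of large numbers for stationary ergodic sequences, combining pointwise convergence obtained from Birkhoff's ergodic theorem with the integrable envelopes supplied by Lemma \ref{boundedness} and the compactness of $\Theta$ from Assumption \ref{assumption_compact}.

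First, by Assumption \ref{assum_y_t}, $\{y_t\}$ is strictly stationary and ergodic, so for each fixed $\bm \theta\in\Theta$ the sequence $\{\ell_t(\bm \theta)\}$ (and similarly $\{\partial \ell_t(\bm \theta)/\partial \bm \theta\}$ and $\{\partial^2 \ell_t(\bm \theta)/\partial \bm \theta\partial \bm \theta^{\prime}\}$) is stationary and ergodic as a measurable function of $(y_t,\ldots,y_{t-p})$. Lemma \ref{boundedness}(i) gives $E|\ell_t(\bm \theta)|<\infty$, so the ergodic theorem yields
\[
\frac{1}{n}\sum_{t=p+1}^n \ell_t(\bm \theta)\longrightarrow E[\ell_t(\bm \theta)]\quad \text{a.s.}
\]
for every $\bm \theta\in\Theta$, and analogous pointwise convergence for the score and Hessian.

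To upgrade to uniform convergence I would use the classical compactness argument. Assumption \ref{assumption_compact} guarantees that $\bm \beta^{\prime}\bm X_{t-1}\geq \underline{\omega}>0$ uniformly on $\Theta$, so each of $\ell_t(\bm \theta)$, $\partial \ell_t(\bm \theta)/\partial \bm \theta$ and $\partial^2 \ell_t(\bm \theta)/\partial \bm \theta\partial \bm \theta^{\prime}$ is a continuous function of $\bm \theta$ on $\Theta$ for every realization. For any $\bm \theta^{\ast}\in\Theta$ and $\epsilon>0$, let $B(\bm \theta^{\ast},\epsilon)=\{\bm \theta\in\Theta:\|\bm \theta-\bm \theta^{\ast}\|<\epsilon\}$. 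Continuity gives $\sup_{\bm \theta\in B(\bm \theta^{\ast},\epsilon)}|\ell_t(\bm \theta)-\ell_t(\bm \theta^{\ast})|\to 0$ as $\epsilon\downarrow 0$ pointwise in $\omega$, while Lemma \ref{boundedness}(i) supplies an integrable envelope, so the dominated convergence theorem implies
\[
\lim_{\epsilon\downarrow 0}E\sup_{\bm \theta\in B(\bm \theta^{\ast},\epsilon)}|\ell_t(\bm \theta)-\ell_t(\bm \theta^{\ast})|=0.
\]
Covering the compact set $\Theta$ by finitely many balls $B(\bm \theta^{(j)},\epsilon)$, applying the pointwise ergodic theorem at each center $\bm \theta^{(j)}$, and bounding the oscillation inside each ball using the above display (together with its sample analogue, which is also controlled by the ergodic theorem via the envelope) delivers (i). The identical argument applied to $\partial \ell_t/\partial \bm \theta$ and $\partial^2 \ell_t/\partial \bm \theta\partial \bm \theta^{\prime}$, with Lemma \ref{boundedness}(ii)--(iii) playing the role of the envelopes, yields (ii) and (iii).

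There is no substantial obstacle: the argument is a standard uniform ergodic theorem, and all its hypotheses have been verified in Lemma \ref{boundedness} and Assumption \ref{assumption_compact}. The most delicate case is (iii), since $\partial^2\ell_t/\partial \bm \theta\partial \bm \theta^{\prime}$ contains the largest powers of $1/(\bm \beta^{\prime}\bm X_{t-1})$; however, Lemma \ref{boundedness}(iii) already provides the required integrable envelope for this case, so no further work is needed beyond the routine compactness-plus-dominated-convergence argument outlined above.
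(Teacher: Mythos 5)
Your argument is correct and is essentially the same as the paper's: the paper simply invokes Theorem 3.1 of Ling and McAleer (2003) — a uniform law of large numbers for strictly stationary ergodic sequences whose hypotheses are exactly the compactness of $\Theta$ and the integrable envelopes $E\sup_{\bm\theta\in\Theta}|\cdot|<\infty$ from Lemma 1 — whereas you reconstruct the standard proof of that ULLN (pointwise ergodic theorem at finitely many centers plus dominated-convergence control of the oscillation on small balls). No gap; you have just unpacked the citation.
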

	
	\begin{proof}
		These follow from Lemma \ref{boundedness} and Theorem 3.1 in \cite{Ling_McAleer2003}.
	\end{proof}
	
	\begin{lemma}\label{maxmum at true parm}
		If Assumptions \ref{assum_y_t} and \ref{assumption_compact} hold, then $E\ell_t(\bm \theta)$ has a unique maximum at $\bm \theta_0$. 
	\end{lemma}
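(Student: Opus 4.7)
The plan is to exploit the martingale-style decomposition of $y_t-\bm\alpha'\bm Y_{t-1}$ under $\bm\theta_0$ and reduce the question to a pointwise inequality plus an identification argument. Writing $y_t-\bm\alpha'\bm Y_{t-1}=(\bm\alpha_0-\bm\alpha)'\bm Y_{t-1}+\eta_t\bm\beta_0'\bm X_{t-1}$ and using $E(\eta_t\mid\mathcal{F}_{t-1})=0$, $E(\eta_t^2\mid\mathcal{F}_{t-1})=1$, I would first compute
\[
E[\ell_t(\bm\theta)-\ell_t(\bm\theta_0)\mid\mathcal{F}_{t-1}]=\ln\frac{\bm\beta_0'\bm X_{t-1}}{\bm\beta'\bm X_{t-1}}+\frac12-\frac12\left(\frac{\bm\beta_0'\bm X_{t-1}}{\bm\beta'\bm X_{t-1}}\right)^{2}-\frac{\bigl((\bm\alpha_0-\bm\alpha)'\bm Y_{t-1}\bigr)^{2}}{2(\bm\beta'\bm X_{t-1})^{2}}.
\]
Since $\omega>0,\beta_{i\pm}\ge 0$ and $\omega_0,\beta_{i0\pm}$ satisfy Assumption~\ref{assumption_compact}, both $\bm\beta'\bm X_{t-1}$ and $\bm\beta_0'\bm X_{t-1}$ are strictly positive, so the ratio $u=\bm\beta_0'\bm X_{t-1}/\bm\beta'\bm X_{t-1}>0$ is well defined.

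Next I would invoke the elementary inequality $\ln u-\tfrac12 u^{2}+\tfrac12\le 0$ for all $u>0$, with equality if and only if $u=1$. Combined with the non-positive last term, this yields $E[\ell_t(\bm\theta)-\ell_t(\bm\theta_0)\mid\mathcal{F}_{t-1}]\le 0$ almost surely, and after taking the unconditional expectation (justified by Lemma~\ref{boundedness}(i)), $E\ell_t(\bm\theta)\le E\ell_t(\bm\theta_0)$. Equality in this inequality forces
\[
\bm\beta'\bm X_{t-1}=\bm\beta_0'\bm X_{t-1}\quad\text{and}\quad(\bm\alpha-\bm\alpha_0)'\bm Y_{t-1}=0\qquad\text{a.s.}
\]

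The remaining step, and the main obstacle, is turning these almost-sure equalities into $\bm\theta=\bm\theta_0$. Assumption~\ref{assum_eta_t} gives that $\eta_t$ has a continuous positive density on $\mathbb{R}$, which together with the recursive definition of $y_t$ implies that the conditional density of $\bm Y_{t-1}$ charges every open subset of $\mathbb{R}^p$; in particular, for each $i$, $y_{t-i}$ takes values in $(0,\infty)$ and in $(-\infty,0)$ with positive probability on events of positive measure. From $(\bm\alpha-\bm\alpha_0)'\bm Y_{t-1}=0$ a.s.\ one obtains $\bm\alpha=\bm\alpha_0$ by the nonsingularity of $E(\bm Y_{t-1}\bm Y_{t-1}')$. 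For the scale parameters, one rewrites $\bm\beta'\bm X_{t-1}-\bm\beta_0'\bm X_{t-1}=(\omega-\omega_0)+\sum_{i=1}^{p}(\beta_{i+}-\beta_{i0+})y_{t-i}^{+}-\sum_{i=1}^{p}(\beta_{i-}-\beta_{i0-})y_{t-i}^{-}$ and evaluates this identity on events where exactly one coordinate of $\bm Y_{t-1}$ is positive or negative and the others vanish in sign contribution (by choosing small open balls around configurations with each coordinate strictly positive or strictly negative). Varying over such configurations gives a linear system in $(\omega-\omega_0,\beta_{1+}-\beta_{10+},\ldots,\beta_{p-}-\beta_{p0-})$ whose only solution is zero, yielding $\bm\beta=\bm\beta_0$. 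The delicate part is verifying that the joint support of $\bm Y_{t-1}$ is rich enough to realize all these sign configurations simultaneously; this follows from the positivity of the transition density established in the proof of Theorem~\ref{thm1Stationarity}, so the argument is self-contained.
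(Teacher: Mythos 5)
Your variational step is essentially the paper's: both reduce the problem to the pointwise inequality $\ln u-\tfrac12 u^{2}+\tfrac12\le 0$ for $u>0$ (the paper phrases this as $f(x)=-\ln x-0.5a^{2}/x^{2}$ being maximized at $x=a$) plus the nonpositivity of the mean term, and both conclude that equality forces $(\bm\alpha-\bm\alpha_0)'\bm Y_{t-1}=0$ and $\bm\beta'\bm X_{t-1}=\bm\beta_0'\bm X_{t-1}$ almost surely. Where you genuinely diverge is the identification step, i.e.\ the paper's claim \eqref{constant Vector}. The paper argues by contradiction from the model structure alone: for the mean part, a nontrivial a.s.\ linear relation among the $y_{t-i}$ would make $\eta_t$ an $\mathcal{F}_{t-1}$-measurable function and force $E(\eta_t^2)=0$; for the scale part, it solves for $y_t^-$ and uses the everywhere-positive density of $\eta_t$ to contradict $P(y_t^->0)=0$. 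You instead argue that the stationary law of $\bm Y_{t-1}$ charges every open set (via the everywhere-positive $p$-step transition density from the proof of Theorem \ref{thm1Stationarity}), so the a.s.\ identities must hold identically by continuity, and then read off the coefficients from suitable sign configurations. Both routes work; yours is arguably more transparent but leans on the Markov-chain construction of the process, while the paper's is self-contained given stationarity and the density assumption. Two points to fix: first, you cannot invoke the ``nonsingularity of $E(\bm Y_{t-1}\bm Y_{t-1}')$'' to conclude $\bm\alpha=\bm\alpha_0$, since Assumption \ref{assum_y_t} only guarantees a fractional moment and this matrix need not exist; your own full-support argument (for $\bm c\neq\bm 0$ the set $\{\bm y:\bm c'\bm y\neq 0\}$ is nonempty and open, hence of positive probability) already does the job, so delete the second-moment appeal. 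Second, note that, exactly as in the paper's own proof, your argument uses Assumption \ref{assum_eta_t} (positivity of the innovation density), which is not among the lemma's stated hypotheses; this should be made explicit.
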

	\begin{proof}
		We first prove that	
		\begin{equation}
		\bm c_1 = \bm 0\quad \text{if} \quad \bm c^{\prime}_1 \bm Y_{t} = 0\ a.s. \quad \text{and}\quad  \bm c_2 = \bm 0\quad \text{if} \quad \bm c^{\prime}_2 \bm X_{t} = 0\ a.s.,
		\label{constant Vector}
		\end{equation}
		where $\bm c_1$ and $\bm c_2$ are $p\times 1$ and $(2p+1)\times 1$ constant vectors, respectively.
		If $\bm c^{\prime}_1 \bm Y_{t} = 0\ a.s.$ and $\bm c_1=(c_{11},\cdots,c_{1p})^{\prime} \neq \bm 0$, without loss of generality, we can assume $c_{11}=1$, thus $y_t = -\sum_{i=2}^{p}c_{1i}y_{t-i+1}\ a.s.$. Recall that $\eta_{t}=(y_t-\bm \alpha_{0}^\prime \bm Y_{t-1})/( \bm \beta_{0}^\prime \bm X_{t-1})$ and $\eta_t$ is independent of $\mathcal{F}_{t-1}$, we have
		\begin{equation}\label{contradiction}
		E(\eta^2_t)=E(\eta_t)E\left(\dfrac{-\sum_{i=2}^{p}c_{1i}y_{t-i+1}-\bm \alpha_{0}^{\prime}\bm Y_{t-1}}{\bm \beta^{\prime}_{0}\bm X_{t-1}}\right)= 0,
		\end{equation}
		which is a contradiction with $E(\eta^2_t)=1$, thus $\bm c_1=\bm 0$. 
		
		Denote $\bm c_2=(d_0,d_{1+},\ldots,d_{p+},d_{1-},\ldots,d_{p-})^\prime$, if $\bm c^{\prime}_2 \bm X_{t} = 0\ a.s.$ and $\bm c_2 \neq 0$, without loss of generality, we assume $d_{1+} =1$, then $y^{+}_t = d_{1-}y^-_t-d_0-\sum_{i=2}^{p}(d_{i+}y^+_{t-i+1}-d_{i-}y^-_{t-i+1})$. On the one hand, if $d_{1-}=-d_{1+}$, then $y_t=y_t^++y_t^-=-d_0-\sum_{i=2}^{p}(d_{i+}y^+_{t-i+1}-d_{i-}y^-_{t-i+1})$, similar to \eqref{contradiction}, we can find a contradiction with $E(\eta^2_t)=1$. On the other hand, if $d_{1-}\neq -d_{1+}$, i.e. $d_{1-}\neq -1$, note that $y_t=y_t^-+y_t^+$, then we have
		\begin{align}\label{y_t^- and eta_t}
		(1+d_{1-})y_t^- =&d_{0}+\alpha_{10}y_{t-1}+\sum_{i=2}^p\left[(\alpha_{i0}+d_{i+})y_{t-i+1}^++(\alpha_{i0}-d_{i-})y_{t-i+1}^-\right] \nonumber\\
		& +\eta_t\left(\omega_0+\sum_{i=1}^{p}(\beta_{i0+}y_{t-i}^+-\beta_{i0-}y_{t-i}^-)\right).
		\end{align}
		Without loss of generality, we assume $1+d_{1-}>0$. Denote 
		\[M=-\dfrac{d_{0}+\alpha_{10}y_{t-1}+\sum_{i=2}^p\left[(\alpha_{i0}+d_{i+})y_{t-i+1}^++(\alpha_{i0}-d_{i-})y_{t-i+1}^-\right]}{\omega_0+\sum_{i=1}^{p}(\beta_{i0+}y_{t-i}^+-\beta_{i0-}y_{t-i}^-)}.\]
		Since the density function of $\eta_{t}$ is positive everywhere on $\mathbb{R}$ by Assumption \ref{assum_eta_t}, by a simple transformation on \eqref{y_t^- and eta_t}, we can obtain that $P(y_{t}^->0\mid \mathcal{F}_{t-1})=P(\eta_{t}>M\mid \mathcal{F}_{t-1})>0$, which contradicts the condition that  $P(y_t^->0)=0$. Thus, $\bm c_2 = \bm 0$ if $\bm c^{\prime}_2 \bm X_{t} = \bm 0\ a.s.$. Hence, \eqref{constant Vector} is verified. 
		
		As for \eqref{constant Vector}, we can show that	
		\begin{align}\label{E_ell}
		E \ell_{t}(\bm \theta)=&-E\left[\ln \left(\bm \beta^{\prime} \bm X_{t-1}\right)+\frac{\left(y_{t}-\bm \alpha^{\prime}\bm Y_{t-1}\right)^{2}}{2(\bm \beta^{\prime} \bm X_{t-1})^2}\right]\notag \\
		=&-E\left[\ln \left(\bm \beta^{\prime} \bm X_{t-1}\right)+\dfrac{1}{2}\left(\dfrac{\bm \beta_{0}^{\prime} \bm X_{t-1}}{\bm \beta^{\prime} \bm X_{t-1}}\right)^2\right]-\dfrac{1}{2} E\left[\dfrac{\left(\bm \alpha-\bm \alpha_{0}\right)^{\prime} \bm Y_{t-1}}{\bm \beta^{\prime} \bm X_{t-1}}\right]^2.
		\end{align}
		The second term in \eqref{E_ell} reaches its maximum at zero, and this happens if and only if $(\bm \alpha-\bm \alpha_0)^{\prime}\bm Y_{t-1} = 0\ a.s.$, which holds if and only if $\bm \alpha = \bm \alpha_{0}$ by \eqref{constant Vector}. 
		For the first term in \eqref{E_ell}, denote $f(x) = -\ln(x) - 0.5a^2/x^2,$ where $x = \bm \beta^{\prime} \bm X_{t-1}$ and $a = \bm \beta_0^{\prime} \bm X_{t-1}$. We can prove that $f(x)$ reaches its maximum at $x = a$, i.e. $\bm \beta^{\prime} \bm X_{t-1} = \bm \beta_{0}^{\prime} \bm X_{t-1}$, which holds if and only if $\bm \beta = \bm \beta_{0}$ by \eqref{constant Vector}. 	
		Therefore, $E\ell_t(\bm \theta)$ is uniquely maximized at $\bm \theta_0$.
	\end{proof}
	\begin{lemma} \label{MCLT}
		Suppose Assumptions \ref{assum_y_t} and \ref{assumption_compact} hold. If $E(\eta_t^4)<\infty$ and $D=\left(\begin{matrix}
		1 & \kappa_1\\
		\kappa_1& \kappa_2
		\end{matrix}
		\right)$ is positive definite, then
		\item[(i)] $\Omega$ and $\Sigma$ are finite and positive definite;
		\item[(ii)] $ n^{-1/2} \sum_{t=p+1}^{n} \partial \ell_{t}\left(\bm \theta_{0}\right)/\partial \bm \theta \longrightarrow_{\mathcal{L}} N(0,\Omega)$.
	\end{lemma}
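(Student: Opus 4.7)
The plan is to handle (i) by verifying finiteness via the boundedness built into Assumption \ref{assumption_compact} and then handle positive definiteness by combining the nondegeneracy result already proved inside Lemma \ref{maxmum at true parm} (equation \eqref{constant Vector}) with the positive definiteness of $D$; part (ii) then reduces to a routine stationary ergodic martingale CLT.

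For finiteness in part (i), I would first observe that under Assumption \ref{assumption_compact} we have $\bm\beta_0^\prime\bm X_{t-1}\geq\max\{\underline{\omega},\underline{\beta}|y_{t-i}|\}$ for every $i$, so that $u_t:=\bm Y_{t-1}/(\bm\beta_0^\prime\bm X_{t-1})$ and $v_t:=\bm X_{t-1}/(\bm\beta_0^\prime\bm X_{t-1})$ are uniformly bounded random vectors. Together with $E(\eta_t^4)<\infty$ (hence $|\kappa_1|,|\kappa_2|<\infty$), this gives $\|\Omega\|+\|\Sigma\|<\infty$ directly from the entry-wise definitions.

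For positive definiteness, I would treat $\Sigma$ and $\Omega$ in turn. Since $\Sigma$ is block diagonal, it suffices to show $E[u_tu_t^\prime]$ and $E[v_tv_t^\prime]$ are p.d.: if $\bm c_1^\prime E[u_tu_t^\prime]\bm c_1=0$ then $\bm c_1^\prime\bm Y_{t-1}=0$ a.s., so \eqref{constant Vector} forces $\bm c_1=\bm 0$, and analogously for $v_t$. For $\Omega$, the key identity (using $\eta_t\perp\mathcal{F}_{t-1}$, $E(\eta_t)=0$, $E(\eta_t^2)=1$, $E(\eta_t^3)=\kappa_1$, $E(\eta_t^2-1)^2=\kappa_2$) is
\[
\partial \ell_t(\bm\theta_0)/\partial\bm\theta=\bigl(u_t^\prime\eta_t,\; v_t^\prime(\eta_t^2-1)\bigr)^\prime,
\]
from which for any $\bm a=(\bm a_1^\prime,\bm a_2^\prime)^\prime$,
\[
\bm a^\prime\Omega\bm a = E\!\left[(\bm a_1^\prime u_t,\;\bm a_2^\prime v_t)\,D\,(\bm a_1^\prime u_t,\;\bm a_2^\prime v_t)^\prime\right]\geq \lambda_{\min}(D)\,E\!\left[(\bm a_1^\prime u_t)^2+(\bm a_2^\prime v_t)^2\right].
\]
Since $D$ is assumed p.d.\ so $\lambda_{\min}(D)>0$, vanishing of the right side forces $\bm a_1^\prime\bm Y_{t-1}=0$ and $\bm a_2^\prime\bm X_{t-1}=0$ a.s., and \eqref{constant Vector} yields $\bm a_1=\bm a_2=\bm 0$.

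For part (ii), I would apply the Cramér–Wold device together with Billingsley's CLT for stationary ergodic martingale differences. The display above makes it clear that $\{\partial \ell_t(\bm\theta_0)/\partial\bm\theta,\mathcal{F}_t\}$ is an MDS because the conditional expectations of $\eta_t$ and $\eta_t^2-1$ are zero. By Assumption \ref{assum_y_t} and the $i.i.d.$ structure of $\{\eta_t\}$, the sequence is strictly stationary and ergodic, and the second moment $E\|\partial\ell_t(\bm\theta_0)/\partial\bm\theta\|^2=\mathrm{tr}(\Omega)$ is finite by the first half of (i). For any fixed $\bm\lambda$, the scalar sequence $\bm\lambda^\prime\partial\ell_t(\bm\theta_0)/\partial\bm\theta$ is a stationary ergodic MDS with variance $\bm\lambda^\prime\Omega\bm\lambda$, so the univariate martingale CLT yields $n^{-1/2}\sum\bm\lambda^\prime\partial\ell_t(\bm\theta_0)/\partial\bm\theta\to_{\mathcal{L}}N(0,\bm\lambda^\prime\Omega\bm\lambda)$, and the multivariate conclusion follows.

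The main obstacle is the positive-definiteness argument for $\Omega$: one has to recognize that the off-diagonal $\kappa_1$ entries couple the two score blocks exactly through the matrix $D$, so that the hypothesis $D\succ 0$ is used precisely to decouple the quadratic form into separate nondegeneracy statements about $\bm Y_{t-1}$ and $\bm X_{t-1}$, at which point \eqref{constant Vector} from Lemma \ref{maxmum at true parm} finishes the job. Everything else (finiteness, MDS structure, CLT) is a clean consequence of the boundedness of $u_t,v_t$ and the existence of the fourth moment of $\eta_t$.
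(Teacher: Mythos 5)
Your proposal is correct and follows essentially the same route as the paper: finiteness from the uniform boundedness of $\bm Y_{t-1}/(\bm\beta_0^\prime\bm X_{t-1})$ and $\bm X_{t-1}/(\bm\beta_0^\prime\bm X_{t-1})$ under Assumption \ref{assumption_compact}, positive definiteness by reducing the quadratic form $\bm a^\prime\Omega\bm a$ to the nondegeneracy statement \eqref{constant Vector} via the positive definiteness of $D$, and part (ii) by the martingale CLT with the Cram\'{e}r--Wold device. The only cosmetic difference is that you lower-bound the quadratic form by $\lambda_{\min}(D)$ whereas the paper completes the square and uses $\kappa_2-\kappa_1^2=\det(D)>0$; these are equivalent.
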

	\begin{proof}
		We first show (i). Recall that $\kappa_1=E\eta_t^3$, $\kappa_2=E\eta_t^4-1$, 
		\[
		\Omega= E\left(\begin{array}{cc}
		{\dfrac{\bm Y_{t-1} \bm Y_{t-1}^{\prime}}{\left(\bm \beta_{0}^{\prime} \bm X_{t-1}\right)^{2}}} & {\dfrac{\kappa_1\bm Y_{t-1} \bm X_{t-1}^{\prime} }{\left(\bm \beta_{0}^{\prime} \bm X_{t-1}\right)^{2}}} \\
		{\dfrac{\kappa_1\bm X_{t-1} \bm Y_{t-1}^{\prime}}{\left(\bm \beta_{0}^{\prime} \bm X_{t-1}\right)^{2}}} & {\dfrac{\kappa_2\bm X_{t-1} \bm X_{t-1}^\prime}{\left(\bm \beta_{0}^{\prime} \bm X_{t-1}\right)^{2}}}
		\end{array}\right)\; \text{and} \; 
		\Sigma = \diag\left\{E\left[\frac{\bm Y_{t-1} \bm Y_{t-1}^{\prime}}{(\bm \beta_{0}^{\prime} \bm X_{t-1})^2}\right], E\left[\frac{2 \bm X_{t-1} \bm X_{t-1}^{\prime}}{\left(\bm{\beta}_{0}^{\prime} \bm{X}_{t-1} \right)^{2}}\right]\right\}.
		\]
		By Assumptions \ref{assum_y_t} and \ref{assumption_compact}, for some constant $C$, we have
		\[
		E\left\|\dfrac{\bm Y_{t-1} \bm Y_{t-1}^\prime}{(\bm \beta_0^\prime \bm X_{t-1})^2}\right\|<C, \; 
		E\left\|\dfrac{\bm Y_{t-1} \bm X_{t-1}^\prime}{(\bm \beta_0^\prime \bm X_{t-1})^2}\right\|<C \; \text{and} \; 
		E\left\|\dfrac{\bm X_{t-1} \bm X_{t-1}^\prime}{(\bm \beta_0^\prime \bm X_{t-1})^2}\right\|<C.
		\]
		Thus, $\Sigma$ is finite, and if $E(\eta_t^4)<\infty$ such that $\kappa_1,\kappa_2<\infty$, then $\Omega$ is also finite. 
		
		Let $\bm x = (\bm x^{\prime}_1,\bm x^{\prime}_2)^{\prime}$, where $\bm x_1\in\mathbb{R}^p$ and $\bm x_2\in\mathbb{R}^{2p+1}$ are arbitrary non-zero constant vectors. It follows that
		\begin{align}
		\bm x^{\prime}\Omega \bm x 
		&= E\left\lbrace\dfrac{(\bm x^{\prime}_1\bm Y_{t-1})^2+\kappa_2(\bm x^{\prime}_2\bm X_{t-1})^2+2\kappa_1\bm x^{\prime}_2\bm X_{t-1}\bm Y^{\prime}_{t-1}\bm x_1}{(\bm \beta^{\prime}_{0}\bm X_{t-1})^2}\right\rbrace \notag\\
		&= E\left\lbrace\dfrac{(\bm x^{\prime}_1\bm Y_{t-1}+\kappa_1\bm x^{\prime}_2\bm X_{t-1})^2+(\kappa_2-\kappa_1^2)(\bm x^{\prime}_2\bm X_{t-1})^2}{(\bm \beta^{\prime}_{0}\bm X_{t-1})^2}\right\rbrace. \label{positive Omega}
		\end{align}
		By Cauchy-Schwarz inequality, $\kappa_1^2=\left[\cov(\eta_t,\eta^2_t)\right]^2\leq \var(\eta_t)\var(\eta^2_t)=\kappa_2$, and the equality holds when $P(\eta_t^2-c\eta_t=1)=1$ for any $c\in \mathbb{R}$, which is equivalent to $\det(D)=0$. 
		Since $D$ is positive definite, we have $\kappa_2-\kappa_1^2>0$ and thus $\bm x^{\prime}\Omega \bm x >0$, i.e. $\Omega$ is positive definite. Moreover, by \eqref{constant Vector}, it can be verified that
		\begin{align*}
		\bm x^{\prime}\Sigma \bm x 
		&= E\left\lbrace\dfrac{(\bm x^{\prime}_1\bm Y_{t-1})^2+2(\bm x^{\prime}_2\bm X_{t-1})^2}{(\bm \beta^{\prime}_{0}\bm X_{t-1})^2}\right\rbrace>0.
		\end{align*}
		As a result, $\Sigma$ is positive definite. Hence, (i) holds. 
		
		Note that $\Omega= E\left[\partial \ell_{t}\left(\bm \theta_{0}\right)/\partial \bm \theta \partial \ell_{t}\left(\bm \theta_{0}\right)/\partial \bm \theta^{\prime}\right]$. By the Martingale Central Limit Theorem and the Cram\'{e}r-Wold device, we can show that (ii) holds. 
	\end{proof}
	
	\begin{proof}[Proof of Theorem \ref{thm2QMLE}]
		By Lemma \ref{three o_p}(i) and Lemma \ref{maxmum at true parm}, we have established all the conditions for consistency in Theorem 4.1.1 in \cite{Amemiya1985}, and hence $\widehat{\bm\theta}_n \rightarrow_{p} \bm\theta_{0}$ as $n\to\infty$. 
		
		By Lemma \ref{three o_p}(iii), for any $\bm\theta=\bm\theta_0+o_p(1)$, we have $n^{-1}\sum_{t=p+1}^{n}\partial^2\ell_t(\bm\theta)/\partial\bm\theta\partial\bm\theta^{\prime}=-\Sigma+o_p(1)$.  
		By Taylor's expansion and the consistency of $\widehat{\bm\theta}_n$, then we have
		\begin{equation}\label{QMLE representation}
		\sqrt{n}(\widehat{\bm\theta}_n-\bm\theta_0) =\Sigma^{-1}\dfrac{1}{\sqrt{n}}\sum_{t=p+1}^{n}\dfrac{\partial \ell_t(\bm \theta_0)}{\partial \bm \theta} +o_p(1).
		\end{equation}
		This together with Lemma \ref{MCLT}, we have established all the conditions of Theorem 4.1.3 in \cite{Amemiya1985}, and hence the asymptotic normality follows. 
	\end{proof}
	
	\subsection{Technical details for model selection}
	
	In this section, notations $\Theta^p$, $\bm\theta^{p}$, $\widehat{\bm\theta}_{n}^{p}$ and $\widehat{\Sigma}^{p}$ are employed to emphasize their dependence on the order $p$. 
	We first derive the proposed BICs in Section 3.2, and then establish their consistency in order selection. 
	
	\subsubsection{Derivation of BICs}
	Denote $\bm Y$ as the observed data and $m(\bm Y)$ as its marginal distribution. Let $\pi(p)\ (p\in \{1,2,\ldots,p_{\max}\})$ be a discrete prior over the order set $\{1,2,\ldots,p_{\max}\}$, and $g(\bm\theta^p\mid p)$ be a prior on $\bm\theta^p$ given the order $p$. Moreover, denote $L(\bm \theta^p\mid \bm Y)$ as the likelihood of $\bm Y$ under the model with order $p$, then we have $\ln L(\bm \theta^p\mid \bm Y)=L_n(\bm \theta^p)$. 
	By Bayes' Theorem, the joint posterior of $p$ and $\bm \theta^p$ can be written as
	\[
	P\left(p,\bm \theta^p\mid \bm Y\right)=\dfrac{\pi(p)g(\bm \theta^p\mid p)L(\bm \theta^p\mid \bm Y)}{m(\bm Y)}.
	\]
	Then the posterior probability of $p$ is given by
	\[
	P\left( p\mid \bm Y \right)=m^{-1}(\bm Y)\pi(p)\int L(\bm \theta^p\mid \bm Y)g(\bm \theta^p\mid p) d\bm \theta^p.
	\]
	To maximize $P(p\mid \bm Y)$, it is equivalent to minimize $-2\ln P\left(p\mid\bm Y\right)$ as below
	\[
	-2\ln P(p\mid \bm Y)=2\ln [m(\bm Y)]-2\ln [\pi(p)]-2\ln \left[ \int L\left(\bm \theta^p\mid \bm Y\right)g\left( \bm \theta^p\mid p \right)d\bm\theta^p \right].
	\]
	Consider the noninformative priors for $\bm\theta^p$ and $p$ such that $g\left( \bm \theta^p\mid p \right)=1$ and $\pi(k)=p_{\max}^{-1}$. 
	Since $m(\bm Y)$ is constant with respect to $p$, then we have
	\begin{equation}\label{approx}
	-2\ln P(p\mid \bm Y)\varpropto -2\ln \left[ \int L\left(\bm \theta^p\mid \bm Y\right)d\bm\theta^p \right].
	\end{equation}
	To approximate the above term, we take a second-order Taylor's expansion of the log-likelihood about $\widehat{\bm{\theta}}^p_n$, together with $\widehat{\Sigma}^p=-(n-p)^{-1}\partial^2 \ln L(\widehat{\bm{\theta}}_n^p\mid \bm Y)/(\partial \bm\theta^p \partial \bm {\theta^p}^\prime)$, then it follows that
	\begin{equation*}\label{log-likelihood_approx}
	\begin{aligned}
	\ln L(\bm \theta^p\mid \bm Y)
	\approx &\ln L(\widehat{\bm \theta}^p_n\mid \bm Y)+(\bm \theta^p-\widehat{\bm \theta}_n^p)^\prime \dfrac{\partial \ln L(\widehat{\bm \theta}_n^p\mid \bm Y)}{\partial \bm\theta^p}\\
	&+\dfrac{1}{2}(\bm \theta^p-\widehat{\bm \theta}_n^p)^\prime\left[ \dfrac{\partial^2 \ln L(\widehat{\bm{\theta}}_n^p\mid \bm Y)}{\partial \bm\theta^p \partial \bm {\theta^p}^\prime} \right](\bm \theta^p-\widehat{\bm \theta}_n^p)\\
	=&\ln L(\widehat{\bm \theta}^p_n\mid \bm Y)-\dfrac{1}{2}(\bm \theta^p-\widehat{\bm \theta}_n^p)^\prime\left[(n-p)\widehat{\Sigma}^p\right](\bm \theta^p-\widehat{\bm \theta}_n^p).
	\end{aligned}
	\end{equation*}	
	This implies that
	\[
	L(\bm \theta^p\mid \bm Y)\approx L(\widehat{\bm \theta}^p_n\mid \bm Y)\exp\left\lbrace -\dfrac{1}{2}(\bm \theta^p-\widehat{\bm \theta}_n^p)^\prime\left[(n-p)\widehat{\Sigma}^p\right](\bm \theta^p-\widehat{\bm \theta}_n^p) \right\rbrace.
	\]
	Similar to the Laplace method, we then have the following approximation for the integral
	\begin{align*}
	\int L(\bm\theta^p\mid \bm Y)d\bm \theta^p
	&\approx L(\widehat{\bm \theta}^p_n\mid \bm Y)\int \exp\left\lbrace -\dfrac{1}{2}(\bm \theta^p-\widehat{\bm \theta}^p_n)^\prime\left[(n-p)\widehat{\Sigma}^p\right](\bm \theta^p-\widehat{\bm \theta}^p_n) \right\rbrace d\bm\theta^p\\
	&=L(\widehat{\bm \theta}^p_n\mid \bm Y)(2\pi)^{\frac{3p+1}{2}}\left[\det\left((n-p)\widehat{\Sigma}^p\right)\right]^{-\frac{1}{2}}\\
	&=L(\widehat{\bm \theta}^p_n\mid \bm Y)\left(\dfrac{2\pi}{n-p}\right)^{\frac{3p+1}{2}}\left[\det(\widehat{\Sigma}^p)\right]^{-\frac{1}{2}}.
	\end{align*}
	This together with \eqref{approx} and $\ln L(\widehat{\bm \theta}^p_n\mid \bm Y)=L_n(\widehat{\bm \theta}_n^p)$, implies that
	\begin{align*}
	-2\ln P(p\mid \bm Y)
	&\approx -2L_n(\widehat{\bm \theta}^p_n)+(3p+1)\ln \left(\dfrac{n-p}{2\pi}\right)+\ln\left[ \det\left( \widehat{\Sigma}^p \right)\right]=\text{BIC}_2(p).
	\end{align*}
	As $n\to\infty$, ignoring the $O(1)$ terms in the above approximation, we obtain that
	\[
	-2\ln P(p\mid \bm Y)\approx -2L_n(\widehat{\bm \theta}^p_n)+(3p+1)\ln (n-p)=\text{BIC}_1(p).
	\]	
	Motivated by the above approximations, the BICs are defined as in \eqref{BIC1} and \eqref{BIC2}.
	
	\subsubsection{Proof of Theorem \ref{thm3BIC}}
	\begin{proof}
		Denote the true order of model \eqref{model} as $p_0$. Note that $\text{BIC}_2(p) = \text{BIC}_1(p)-(3p+1)\ln(2\pi)+\ln(\det(\widehat{\Sigma}^{p}))$, and $-(3p+1)\ln(2\pi)+\ln(\det(\widehat{\Sigma}^{p}))$ is bounded as $n\to \infty$.  
		It suffices to show that, for any $p\neq p_0$, 
		\begin{equation}\label{goal of BIC}
		\lim_{n\rightarrow \infty} P(\text{BIC}_1(p)-\text{BIC}_1(p_0)>0)=1.
		\end{equation}
		
		We first consider the case with $p > p_0$, i.e. the model is overfitted. 
		Note that the model with order $p$ corresponds to a bigger model, and it holds that 
		\[\ell_t(\bm\theta_{0}^{p}) =\ell_t(\bm\theta_{0}^{p_0}) \quad\text{and}\quad L_n(\bm\theta_{0}^{p}) =L_n(\bm\theta_{0}^{p_0}).\]
		Denote $\bm\nu=\sqrt{n}(\widehat{\bm\theta}^{p_0}_{n}-\bm\theta_0^{p_0})$. Similar to the proof of Theorem \ref{thm3BIC}, we can show that $\bm\nu=O_p(1)$. By Taylor's expansion and Slutsky's theorem, it can be shown that
		\begin{align}\label{Ln distance}
		L_n(\widehat{\bm\theta}_{n}^{p_0})-L_n(\bm\theta_{0}^{p_0}) 
		&= \bm\nu^{\prime}\dfrac{1}{\sqrt{n}}\dfrac{\partial L_n(\bm\theta_{0}^{p_0})}{\partial \bm \theta}-\bm\nu^{\prime}\Sigma^{p_0}\bm\nu+o_p(1) = O_p(1).
		\end{align}
		Similarly, it can be verified that $L_n(\widehat{\bm\theta}_{n}^{p})-L_n(\bm\theta_{0}^{p})=O_p(1)$. As a result,  
		\[L_n(\widehat{\bm\theta}_{n}^{p})-L_n(\widehat{\bm\theta}_{n}^{p_0}) = [L_n(\widehat{\bm\theta}_{n}^{p})-L_n(\bm\theta_{0}^{p})]-[L_n(\widehat{\bm\theta}_{n}^{p_0})-L_n(\bm\theta_{0}^{p_0})] + [L_n(\bm\theta_{0}^{p})-L_n(\bm\theta_{0}^{p_0})]= O_p(1).\]
		Hence, we have
		\begin{align*}
		&\text{BIC}_1(p)-\text{BIC}_1(p_0) \\
		=& -2[L_n(\widehat{\bm \theta}_{n}^{p})-L_n(\widehat{\bm \theta}_{n}^{p_0})] +\left[(3p+1)\ln (n-p)-(3p_0+1)\ln (n-p_0)\right]\\
		< & O_p(1) + 3(p-p_0)\ln (n-p_0) \to \infty
		\end{align*}
		as $n\to \infty$. Therefore, \eqref{goal of BIC} holds for $p > p_0$. 
		
		We next consider the case with $p < p_0$, i.e. the model is underfitted. 
		Let $\bm\theta_0^p = \arg\max_{\bm\theta \in \Theta^p} E[\ell_t(\bm\theta^p)]$. 
		Similar to the proof of Theorem \ref{thm3BIC} and \eqref{Ln distance}, we can verify that  $\sqrt{n}(\widehat{\bm\theta}_{n}^{p}-\bm\theta_{0}^{p})=O_p(1)$ and  
		\[L_n(\widehat{\bm\theta}_{n}^{p})-L_n(\bm\theta_{0}^{p})=O_p(1).\]
		Since the model with order $p$ corresponds to a smaller model, we have $E\ell_t(\bm\theta_0^{p_0})\geq E\ell_{t}(\bm \theta_0^{p})+\epsilon$ for some positive constant $\epsilon$. 
		By ergodic theorem, we have $n^{-1}L_n(\bm\theta_{0}^p)= E\ell_{t}(\bm \theta_0^p)+o_p(1)$. Thus it holds that
		\begin{equation*}
		L_n(\bm\theta_{0}^{p})-L_n(\bm\theta_{0}^{p_0})=-n\{E[\ell_t(\bm\theta_0^{p_0})]- E[\ell_{t}(\bm\theta_0^{p})]\}+ o_p(n)=-n\epsilon + o_p(n).
		\end{equation*}
		Therefore, we have
		\begin{align*}
		L_n(\widehat{\bm\theta}_{n}^{p})-L_n(\widehat{\bm\theta}_{n}^{p_0}) &= [L_n(\widehat{\bm\theta}_{n}^{p})-L_n(\bm\theta_{0}^{p})]-[L_n(\widehat{\bm\theta}_{n}^{p_0})-L_n(\bm\theta_{0}^{p_0})] + [L_n(\bm\theta_{0}^{p})-L_n(\bm\theta_{0}^{p_0})] \\
		&=O_p(1)-n\epsilon + o_p(n).
		\end{align*}
		This together with $(3p+1)\ln (n-p)-(3p_0+1)\ln (n-p_0)=O(\ln n)$, implies that
		\begin{align*}
		&\text{BIC}_1(p)-\text{BIC}_1(p_0) \\
		=& -2[L_n(\widehat{\bm \theta}_{n}^{p})-L_n(\widehat{\bm \theta}_{n}^{p_0})] +\left[(3p+1)\ln (n-p)-(3p_0+1)\ln (n-p_0)\right]\\
		= & 2n\epsilon + o_p(n) + O_p(1) + O(\ln n) \to \infty
		\end{align*}
		as $n\to \infty$. Hence, \eqref{goal of BIC} holds for $p < p_0$. 
		The proof is accomplished. 
	\end{proof}
	
	\subsection{Proof of Theorem \ref{thm test asymmetry}}
	
	\begin{proof}
		The hypotheses of the asymmetry test are $H_0: R\bm\theta_{0} =\bm 0_{p}\; \text{versus}\; H_1: R\bm\theta_{0} \neq\bm 0_{p}$, where $R = (0_{p\times(p+1)}, I_{p}, -I_{p})$ is the $p\times(3p+1)$ matrix, $\bm\theta_{0}$ is the true parameter vector and $\bm 0_{p}$ is a $p$-dimensional zero vector. 
		By Lemma \ref{three o_p}, under $H_0$, we have
		\begin{equation}\label{matrix consistency}
		\widehat{\Sigma}=\Sigma+o_p(1),\; \widehat{\Xi}=\Xi+o_p(1),\; \widetilde{\Sigma}=\Sigma+o_p(1)\; \text{and} \; \widetilde{\Xi}=\Xi+o_p(1).
		\end{equation}
		We show the null distributions of the Wald, LM and QLR test statistics, respectively. \\	
		\textbf{(i) Wald test} 
		
		By \eqref{QMLE representation}, we have
		$$\sqrt{n} R(\widehat{\bm \theta}_n-\bm \theta_0) =  R\Sigma^{-1}\Omega^{\frac{1}{2}}\left[ \Omega^{-\frac{1}{2}}\dfrac{1}{\sqrt{n}}\sum_{t=p+1}^{n}\dfrac{\partial \ell_t(\bm \theta_0)}{\partial \bm \theta}\right] +o_p(1).$$
		Then under $H_0$, by Lemma \ref{MCLT}, it follows that 
		$$(R{\Xi}R^{\prime})^{-1 / 2} \sqrt{n}R\widehat{\bm \theta}_n \rightarrow_\mathcal{L} N\left(\bm 0, I_{p}\right),$$
		where $\Xi = \Sigma^{-1}\Omega\Sigma^{-1}$. Then by Slutsky's theorem, under $H_0$, it holds that
		$$W_n = n(R\widehat{\bm \theta}_n)^{\prime}(R\widehat{\Xi}R^{\prime})^{-1}R\widehat{\bm \theta}_n=(\sqrt{n}R\widehat{\bm \theta}_n)^{\prime}(R\Xi R^{\prime})^{-1}(\sqrt{n}R\widehat{\bm\theta}_n)+o_p(1)\rightarrow_\mathcal{L} \chi^2_p.$$
		This completes the first part.\\
		\textbf{(ii) Lagrange Multiplier (LM) test} 
		
		Let $\bm k\in R^p$ be a  vector. Under $H_0$, the lagrangian can be formulated as
		\[
		\mathbb{L}(\bm\theta,\bm k) = L_n(\bm \theta)-\bm k^{\prime}R\bm \theta.
		\]
		Denote $(\widetilde{\bm \theta}_n,\widetilde{\bm k}) = \arg\inf_{\bm\theta,\bm k}\mathbb{L}(\bm\theta,\bm k)$, where $\widetilde{\bm \theta}_n$ is the restricted QMLE under $H_0$, and $\widetilde{\bm k}$ is the lagrangian multiplier. Taking the first derivatives of $\mathbb{L}(\bm\theta,\bm k)$ with respect to $\bm\theta$ and $\bm k$ at $(\widetilde{\bm \theta}_n,\widetilde{\bm k})$ respectively, we have
		\begin{equation}\label{notation trans}
		\dfrac{\partial L_n(\widetilde{\bm\theta}_n)}{\partial \bm \theta}=R^{\prime}\widetilde{\bm k}\quad \text{and} \quad R\widetilde{\bm \theta}_{n} = \bm 0_p.
		\end{equation}
		Under $H_0$, by Taylor's expansion and \eqref{matrix consistency}, it holds that
		\begin{equation*}
		\sqrt{n}(\widetilde{\bm\theta}_n-\bm\theta_0) =\Sigma^{-1}\left[\dfrac{1}{\sqrt{n}}\sum_{t=p+1}^{n}\dfrac{\partial \ell_t(\bm \theta_0)}{\partial \bm \theta}-\dfrac{1}{\sqrt{n}}\sum_{t=p+1}^{n}\dfrac{\partial \ell_t(\widetilde{\bm\theta}_n)}{\partial \bm \theta}\right] +o_p(1).
		\end{equation*}
		This together with \eqref{QMLE representation} and \eqref{notation trans}, implies that
		\begin{equation}\label{LMthm1}
		\sqrt{n}(\widehat{\bm \theta}_n-\widetilde{\bm \theta}_{n}) = \Sigma^{-1}\dfrac{1}{\sqrt{n}}\frac{\partial L_n(\widetilde{\bm\theta}_n)}{\partial \bm \theta}+o_p(1)=\Sigma^{-1}\dfrac{1}{\sqrt{n}}R^{\prime}\widetilde{\bm k}+o_p(1).
		\end{equation}
		Under $H_0$, note that $R\widetilde{\bm \theta}_{n} = \bm 0_p$ by \eqref{notation trans} and $R\bm\theta_0 = \bm 0_p$, then by Theorem \ref{thm2QMLE}, we have 
		\[\sqrt{n}R(\widehat{\bm \theta}_n-\widetilde{\bm \theta}_{n})=\sqrt{n}R\widehat{\bm \theta}_n=\sqrt{n}R(\widehat{\bm \theta}_n-{\bm \theta}_0)\rightarrow_{\mathcal{L}}N(\bm 0,R \Xi R^{\prime}) \; \text{as} \; n\to \infty.\] 
		Recall that $\Delta=R\Sigma^{-1}R^\prime$. Multiplying both sides of \eqref{LMthm1} by $R$, we have 
		\begin{equation}\label{ktilde rep}
		\dfrac{1}{\sqrt{n}}\widetilde{\bm k} = \Delta^{-1}\sqrt{n}R(\widehat{\bm \theta}_n-\bm\theta_0)+o_p(1)\rightarrow_{\mathcal{L}}N(\bm 0,\Delta^{-1}R \Xi R^{\prime}\Delta^{-1}) \; \text{as} \; n\to \infty.
		\end{equation}
		This together with \eqref{LMthm1} and Slutsky's theorem, implies that, under $H_0$,
		\begin{align*}
		L_n&=\dfrac{1}{n}\dfrac{\partial L_n(\widetilde{\bm \theta}_{n})}{\partial \bm \theta^{\prime}}\widetilde{\Sigma}^{-1}R^\prime(R\widetilde{\Xi}R^{\prime})^{-1}R\widetilde{\Sigma}^{-1}\dfrac{\partial L_n(\widetilde{\bm \theta}_{n})}{\partial \bm \theta}\\
		&=\dfrac{1}{\sqrt{n}}\widetilde{\bm k}^\prime (R{\Sigma}^{-1}R^\prime)(R\Xi R^{\prime})^{-1}(R{\Sigma}^{-1}R^\prime) \dfrac{1}{\sqrt{n}}\widetilde{\bm k}+o_p(1)\rightarrow_{\mathcal{L}} \chi^2_p \; \text{as} \; n\to \infty.
		\end{align*}
		This completes the second part.\\
		\textbf{(iii) Quasi-likelihood ratio (QLR) test} 
		
		By Taylor's expansion of $L_{n}(\widetilde{\bm{\theta}}_{n})$ about $\widehat{\bm{\theta}}_{n}$, together with $\partial L_n(\widehat{\bm \theta}_n)/\partial \bm \theta=\bm 0$ and \eqref{matrix consistency}-\eqref{ktilde rep}, under $H_0$, we have 
		\begin{align*}
		Q_n&=-2[L_{n}(\widetilde{\bm \theta}_{n})-L_{n}(\widehat{\bm \theta}_{n})] \\ 
		&=-2\dfrac{\partial L_{n}(\widehat{\bm \theta}_{n})}{\partial \bm \theta }(\widetilde{\bm \theta}_{n}-\widehat{\bm \theta}_{n})+n(\widetilde{\bm \theta}_{n}-\widehat{\bm \theta}_{n})^{\prime} \Sigma(\widetilde{\bm \theta}_{n}-\widehat{\bm \theta}_{n})+o_{p}(1)\\
		&=\dfrac{1}{\sqrt{n}}\widetilde{\bm k}^\prime R \Sigma^{-1} R^\prime \dfrac{1}{\sqrt{n}}\widetilde{\bm k}+o_p(1)\\  
		&= \left[(\Delta^{-1}R \Xi R^{\prime}\Delta^{-1})^{\frac{1}{2}}\dfrac{1}{\sqrt{n}}\widetilde{\bm k}\right]^{\prime}\Psi\left[(\Delta^{-1}R \Xi R^{\prime}\Delta^{-1})^{\frac{1}{2}}\dfrac{1}{\sqrt{n}}\widetilde{\bm k}\right]+o_p(1)\rightarrow_\mathcal{L} \sum_{j=1}^{p}e_jx_j
		\end{align*}
		as $n\to\infty$, where $\Psi=\Delta^{-1/2}R\Xi R^\prime\Delta^{-1/2}$, $e_j$ is the $j$-th eigenvalue of $\Psi$, and $x_j$'s are the $i.i.d.$ random variables following the $\chi^2_1$ distribution. This completes the third part. 
	\end{proof}
	
	\subsection{Proof of Theorem \ref{LAN}}
	
	Recall that the time series $\{y_{p+1,n},\ldots,y_{n,n}\}$ are generated by
	\begin{equation*}
	H_{1n}:\quad y_{t,n} = \left(\bm\alpha_0+\dfrac{\bm h_{\alpha}}{\sqrt{n}}\right)^\prime \bm Y_{t-1,n} + \eta_t\left(\bm\beta_0+\dfrac{\bm h_{\beta}}{\sqrt{n}}\right)^\prime \bm X_{t-1,n}, 
	\end{equation*}
	where $\bm Y_{t,n} = (y_{t,n},\ldots,y_{t-p+1,n})^{\prime}$ and $\bm X_{t,n} =(1,\bm Y^{\prime}_{t+,n},-\bm Y^{\prime}_{t-,n})^{\prime}$ with $\bm Y_{t+,n} = (y_{t,n}^{+},\ldots,\\ y_{t-p+1,n}^{+})^{\prime}$ and $\bm Y_{t-,n} = (y_{t,n}^{-},\ldots,y_{t-p+1,n}^{-})^{\prime}$. Moreover, 
	\[\ell_{t,n}(\bm \theta)=-\ln \left(\bm \beta^{\prime} \bm X_{t-1,n}\right)-\dfrac{\left(y_{t,n}-\bm \alpha^{\prime} \bm Y_{t-1,n}\right)^{2}}{2\left(\bm{\beta}^{\prime} \bm{X}_{t-1,n}\right)^{2}} \quad \text{and}\quad L_{n,h}(\bm \theta)=\sum_{i=p+1}^{n}\ell_{t,n}(\bm \theta).\]
	To prove Theorem \ref{LAN}, the following two lemmas are required. 
	
	\begin{lemma}\label{triangular array lemma}
		Let $\{X_{ni}:1\leq i\leq k_n,n=1,2,\ldots\}$ be a mean-zero triangular array of $\beta$-mixing sequences that are $L^s$-bounded for some $s>1$, and $\mathcal{F}_{ni}=\sigma(X_{n1},\ldots,X_{ni})$ for $1\leq i \leq k_n$. Then $\{X_{ni},\mathcal{F}_{ni}\}$ is a uniformly integrable $L^1$-mixing. Furthermore, $E|k_n^{-1}\sum_{i=1}^{k_n}X_{ni}|\rightarrow 0$ as $n\rightarrow\infty$, and hence $k_n^{-1}\sum_{i=1}^{k_n}X_{ni}\rightarrow 0$ in probability as $n\rightarrow \infty$.
	\end{lemma}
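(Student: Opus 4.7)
The plan is to decompose the conclusion into three familiar pieces and dispatch each using a standard tool for dependent triangular arrays, culminating in Andrews's (1988) weak law of large numbers for $L^1$-mixingales. First I would verify uniform integrability: the $L^s$-boundedness with $s>1$ gives $\sup_{n,i}E[|X_{ni}|\,I(|X_{ni}|>K)] \le K^{-(s-1)}\sup_{n,i}E|X_{ni}|^s \to 0$ as $K\to\infty$, which is the textbook sufficient condition for uniform integrability.

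Second, I would establish the $L^1$-mixingale property $\|E(X_{ni}\mid\mathcal{F}_{n,i-m})\|_1 \le c_{ni}\,\psi_m$ with $\psi_m\to 0$. Because $X_{ni}$ is mean zero and $\mathcal{F}_{n,i-m}$ sits in the far past of $X_{ni}$ in the $\beta$-mixing sense, Rio's covariance/conditional-moment inequality (or equivalently the duality characterization of $\beta$-mixing via $\sup_{\|Y\|_\infty\le 1} \|E(Y\mid\mathcal{A})-EY\|_1 \le 2\beta$, combined with a truncation argument using $L^s$-boundedness) yields $\|E(X_{ni}\mid\mathcal{F}_{n,i-m})\|_1 \le 2\,\|X_{ni}\|_s\,\beta_n(m)^{1-1/s}$. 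Taking $c_{ni} = 2\,\|X_{ni}\|_s$ (uniformly bounded) and $\psi_m = \sup_n \beta_n(m)^{1-1/s}$ furnishes the mixingale bound.

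Third, uniform integrability plus the $L^1$-mixingale property is exactly the hypothesis of Andrews's (1988) weak law, so $E\bigl|k_n^{-1}\sum_{i=1}^{k_n} X_{ni}\bigr|\to 0$, and convergence in probability follows by Markov's inequality.

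The main obstacle is not any single step but rather securing uniformity of the $\beta$-mixing rate across rows, since the statement only posits that each row is $\beta$-mixing. In the intended applications the arrays arise from local perturbations of a stationary, geometrically ergodic Markov chain (see Theorem \ref{thm1Stationarity} and Assumption \ref{assum_y_tH1}), whose drift-criterion constants are uniform under small perturbations; this gives a common exponentially decaying bound $\beta_n(m) \le C\rho^m$ and hence $\psi_m\to 0$ uniformly, closing the argument.
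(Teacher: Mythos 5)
Your argument is essentially the paper's: the paper proves this lemma by simply citing Example 4 in Section 3 of Andrews (1988), and your three steps --- uniform integrability from $L^s$-boundedness, the $L^1$-mixingale bound $\|E(X_{ni}\mid\mathcal{F}_{n,i-m})\|_1 \le c_{ni}\psi_m$ obtained from the mixing inequality, and Andrews's weak law for uniformly integrable $L^1$-mixingales --- are exactly the content of that citation, so you have merely written out what the paper delegates to a reference. Your closing observation that a row-uniform decay rate for $\beta_n(m)$ is implicitly required, and that it is supplied in the intended application by the geometric ergodicity of Assumption \ref{assum_y_tH1}, is correct and addresses the only real imprecision in the lemma as stated.
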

	\begin{proof}
		See Example 4 in Section 3 of \cite{Andrews1988}.
	\end{proof}
	
	\begin{lemma}\label{LAN-lemma2}
		Suppose Assumptions \ref{assumption_compact}-\ref{assum_y_tH1} and $E(\eta_t^4)<\infty$ hold. Then under $\mathbb{P}_{n,h}$, for any sequence such that $\bm\theta^\ast \to_p \bm\theta_{0}$, it holds that
		\[(i)\ \left\Arrowvert -\dfrac{1}{n}\sum_{t=p+1}^n\dfrac{\partial^2\ell_{t,n}(\bm \theta^\ast)}{\partial \bm\theta \partial \bm\theta^\prime}-\Sigma\right\Arrowvert=o_p(1); \; (ii)\ \left\Arrowvert \dfrac{1}{n} \sum_{t=p+1}^n\dfrac{\partial\ell_{t,n}(\bm \theta^\ast)}{\partial \bm\theta}\dfrac{\partial\ell_{t,n}(\bm\theta^\ast)}{\partial \bm \theta^\prime}-\Omega \right\Arrowvert=o_p(1);\]
		where $\Omega$ and $\Sigma$ are defined as in Theorem \ref{thm2QMLE}.
	\end{lemma}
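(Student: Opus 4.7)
Both claims have the same architecture, so I would focus on (i) and indicate at the end how (ii) follows by the same route. Write $A_n(\bm\theta)=-n^{-1}\sum_{t=p+1}^{n}\partial^2\ell_{t,n}(\bm\theta)/\partial\bm\theta\partial\bm\theta'$ and decompose
\[
A_n(\bm\theta^{\ast})-\Sigma=\underbrace{[A_n(\bm\theta^{\ast})-A_n(\bm\theta_{0})]}_{\text{(I)}}+\underbrace{[A_n(\bm\theta_{0})-E_{\mathbb{P}_{n,h}}A_n(\bm\theta_{0})]}_{\text{(II)}}+\underbrace{[E_{\mathbb{P}_{n,h}}A_n(\bm\theta_{0})-\Sigma]}_{\text{(III)}},
\]
and argue each term is $o_p(1)$ under $\mathbb{P}_{n,h}$.

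\textbf{Handling (I).} By the mean value theorem,
$\|A_n(\bm\theta^{\ast})-A_n(\bm\theta_{0})\|\le \|\bm\theta^{\ast}-\bm\theta_{0}\|\cdot n^{-1}\sum_{t=p+1}^{n}\sup_{\bm\theta\in\Theta}\|\partial^3\ell_{t,n}(\bm\theta)/\partial\bm\theta^3\|$.
The third-derivative bound is obtained exactly as in Lemma \ref{boundedness}: under Assumption \ref{assumption_compact} the factors $\bm\beta'\bm X_{t-1,n}$ are bounded below by $\underline{\omega}$, and the numerators are polynomial in $\bm Y_{t-1,n}$ and $\bm X_{t-1,n}$, so a uniform dominating envelope exists whose expectation is finite by Assumption \ref{assum_y_tH1}. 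Since $\{y_{t,n}\}$ is stationary for each $n$ under $\mathbb{P}_{n,h}$, a triangular-array ergodic argument (e.g., Lemma \ref{triangular array lemma} applied to the envelope) yields $n^{-1}\sum_t\sup_{\bm\theta}\|\partial^3\ell_{t,n}\|=O_p(1)$, so (I) is $O_p(1)\cdot o_p(1)=o_p(1)$.

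\textbf{Handling (II).} For fixed $\bm\theta_{0}$, the summands $\partial^2\ell_{t,n}(\bm\theta_{0})/\partial\bm\theta\partial\bm\theta'$ form a stationary sequence (within each $n$) that inherits geometric $\beta$-mixing from the geometric ergodicity of $\{y_{t,n}\}$ in Assumption \ref{assum_y_tH1}; uniform $L^s$-boundedness for some $s>1$ follows from Lemma \ref{boundedness}-style envelope bounds, which only require the compactness of $\Theta$ and the positive constants $\underline{\omega}$, $\underline{\beta}$. Consequently, Lemma \ref{triangular array lemma} applied entrywise to the centered array forces $\|A_n(\bm\theta_{0})-E_{\mathbb{P}_{n,h}}A_n(\bm\theta_{0})\|\to 0$ in $L^1$, hence in probability.

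\textbf{Handling (III) — main obstacle.} This is the step that is genuinely different from the stationary analogue (Lemma \ref{three o_p}): because the data law itself depends on $n$, we must show $E_{\mathbb{P}_{n,h}}[-\partial^2\ell_{t,n}(\bm\theta_{0})/\partial\bm\theta\partial\bm\theta']\to\Sigma$ as $n\to\infty$. I would write $\bm\theta_n=\bm\theta_{0}+\bm h/\sqrt{n}$ and first replace the evaluation parameter inside the score by $\bm\theta_n$ (picking up an $O(1/\sqrt{n})$ error controlled by the same third-derivative envelope as in (I)). The resulting quantity $E_{\mathbb{P}_{n,h}}[-\partial^2\ell_{t,n}(\bm\theta_n)/\partial\bm\theta\partial\bm\theta']$ has a closed form in terms of moments of $\eta_t$ and the stationary distribution of $\bm Y_{t,n}$ under $\mathbb{P}_{n,h}$, and matches the formula \eqref{Sigma} for $\Sigma$ but with $\bm\theta_0$ replaced by $\bm\theta_n$. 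Convergence to $\Sigma$ then follows because the stationary distribution of $\bm Y_{t,n}$ under $\mathbb{P}_{n,h}$ converges weakly to that of $\bm Y_t$ under $\mathbb{P}_{n,0}$ as $\bm\theta_n\to\bm\theta_{0}$, and the integrand is continuous and uniformly integrable by the envelope bounds. Since $\mathbb{P}_{n,h}$ and $\mathbb{P}_{n,0}$ are not mutually contiguous (as noted after Theorem \ref{LAN}), Le Cam's third lemma is unavailable; the argument must proceed directly, using continuity of the driven SRE representation of $\bm Y_{t,n}$ in its parameters, in the spirit of \cite{Jiang2020}.

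\textbf{Part (ii).} Repeat the three-step decomposition with $A_n(\bm\theta)$ replaced by $B_n(\bm\theta)=n^{-1}\sum_t[\partial\ell_{t,n}(\bm\theta)/\partial\bm\theta][\partial\ell_{t,n}(\bm\theta)/\partial\bm\theta']$. Step (I) still uses a Taylor expansion, now in $\bm\theta$ around $\bm\theta_{0}$, with a dominating second-moment envelope on first-and-second derivatives; this is where $E(\eta_t^{4})<\infty$ enters (through $\kappa_2$ in $\Omega$). Steps (II) and (III) are structurally identical, noting that the stationary expectation in Step (III) produces precisely the matrix $\Omega$ defined before Theorem \ref{thm2QMLE}.
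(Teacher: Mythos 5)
Your proposal is correct, and its skeleton (envelope moment bounds in the style of Lemma \ref{boundedness}, the Andrews triangular-array LLN of Lemma \ref{triangular array lemma}, and an identification of the limiting expectation via convergence of the law of $y_{t,n}$ to that of $y_t$) matches the paper's ingredients; where you differ is in how the random evaluation point $\bm\theta^{\ast}$ is handled. The paper proves a \emph{uniform-in-$\bm\theta$} law of large numbers over the compact $\Theta$ (pointwise convergence via Lemma \ref{triangular array lemma} plus dominated convergence to pass from $E_{\mathbb{P}_{n,h}}$ to the stationary limit, upgraded to uniformity by the stochastic-equicontinuity device of Nelson (1991) already used for \eqref{uniformal converge}), and then simply plugs in $\bm\theta^{\ast}$; you instead work pointwise at $\bm\theta_0$ and transfer to $\bm\theta^{\ast}$ by a mean-value bound with a third-derivative envelope, splitting off the bias term (III) explicitly. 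Both routes are sound and cost about the same: your term (I) needs one extra derivative of $\ell_{t,n}$, while the paper's uniformity step needs an $O_p(1)$ bound on the averaged gradient of the quantity being uniformized, which amounts to the same envelope. Two things in your write-up are worth keeping even relative to the paper: first, your term (III) isolates and treats carefully the one genuinely non-stationary issue --- that $E_{\mathbb{P}_{n,h}}$ itself moves with $n$ --- via weak convergence of the stationary law of $\bm Y_{t,n}$ plus uniform integrability, whereas the paper compresses this into a one-line dominated-convergence appeal that tacitly requires a common-probability-space (coupled SRE) representation; second, your decomposition never asserts the literal statement $\sup_{\bm\theta\in\Theta}\|{-n^{-1}\sum_t\partial^2\ell_{t,n}(\bm\theta)/\partial\bm\theta\partial\bm\theta'}-\Sigma\|=o_p(1)$, which as written in the paper can only hold with $\Sigma$ replaced by the $\bm\theta$-dependent limit $\Sigma(\bm\theta)$ followed by a continuity argument at $\bm\theta_0$. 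Minor points to tighten: apply the mean value theorem entrywise (the Hessian is matrix-valued), and note that the segment from $\bm\theta^{\ast}$ to $\bm\theta_0$ lies in $\Theta$ only with probability tending to one, using that $\bm\theta_0$ is interior (Assumption \ref{assumption_compact}).
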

	\begin{proof}
		Under $\mathbb{P}_{n,h}$, we have
		\begin{align*}
		&\dfrac{\partial \ell_{t,n}(\bm\theta)}{\partial \bm\alpha}=\dfrac{\bm Y_{t-1,n}(y_{t,n}-\bm \alpha^\prime \bm Y_{t-1,n})}{(\bm \beta^\prime \bm X_{t-1,n})^2}, \; 
		\dfrac{\partial \ell_{t,n}(\bm \theta)}{\bm \beta}=-\dfrac{\bm X_{t-1,n}}{\bm \beta^\prime \bm X_{t-1,n}}\left[ 1-\dfrac{(y_{t,n}-\bm \alpha^\prime\bm X_{t-1,n})^2}{(\bm \beta^\prime \bm X_{t-1,n})^2} \right],\\
		&\dfrac{\partial^2 \ell_{t,n}(\bm\theta)}{\partial \bm\alpha \partial \bm\alpha^\prime}=-\dfrac{\bm Y_{t-1,n}\bm Y_{t-1,n}^\prime}{(\bm\beta^\prime \bm X_{t-1,n})^2}, \; 
		\dfrac{\partial^2 \ell_{t,n}(\bm\theta)}{\partial \bm\beta \partial \bm\beta^\prime}=\dfrac{\bm X_{t-1,n}\bm X_{t-1,n}^\prime}{(\bm\beta^\prime \bm X_{t-1,n})^2}\left[ 1-\dfrac{3(y_{t,n}-\bm\alpha^\prime \bm Y_{t-1,n})^2}{(\bm\beta^\prime \bm X_{t-1,n})^2} \right], \\
		&\dfrac{\partial^2 \ell_{t,n}(\bm\theta)}{\partial \bm\alpha \partial \bm\beta^\prime}=-\dfrac{2\bm Y_{t-1,n}\bm X_{t-1,n}^\prime(y_{t,n}-\bm\alpha^\prime \bm Y_{t-1,n})}{(\bm\beta^\prime \bm X_{t-1,n})^3}.
		\end{align*}
		By Assumption \ref{assum_y_tH1}, using similar arguments as for Lemma \ref{boundedness}, we can show that
		\[
		E\sup_{\bm \theta \in \Theta}\left|\ell_{t,n}(\bm \theta)\right|<\infty; \quad  E\sup_{\bm \theta \in \Theta} \left\| \dfrac{\partial \ell_{t,n}(\bm \theta)}{\partial \bm \theta } \right\|<\infty; \quad  E\sup _{\bm \theta \in \Theta}\left\|\dfrac{\partial^{2} \ell_{t,n}(\bm \theta)}{\partial \bm \theta \partial \bm \theta^{\prime}}\right\|<\infty.
		\]
		Hence, by the similar arguments as for \eqref{uniformal converge}, we can show that 
		\[\sup_{\bm \theta \in \Theta}\left\Arrowvert -\dfrac{1}{n} \sum_{t=p+1}^n\dfrac{\partial^2\ell_{t,n}(\bm \theta)}{\partial \bm\theta \partial \bm\theta^\prime}-\Sigma\right\Arrowvert=o_p(1), \; \sup_{\bm \theta \in \Theta}\left\Arrowvert \dfrac{1}{n} \sum_{t=p+1}^n\dfrac{\partial\ell_{t,n}(\bm \theta)}{\partial \bm\theta}\dfrac{\partial\ell_{t,n}(\bm\theta)}{\partial \bm \theta^\prime}-\Omega \right\Arrowvert=o_p(1),\]
		which implies that (i) and (ii) hold. This completes the proof of this lemma. 
	\end{proof}
	
	\begin{proof}[Proof of Theorem \ref{LAN}]
		Below we will show that, as $n\to \infty$, (i) $\widehat{\bm\theta}_{n,h}\to_p \bm\theta_0$ and (ii) $\sqrt{n}(\widehat{\bm\theta}_{n,h}-\bm\theta_0)\rightarrow_{\mathcal{L}}N(\bm h,\Xi)$, where $\bm h=(\bm h_{\alpha}^{\prime},\bm h_{\beta}^{\prime})^\prime$ and $\Xi=\Sigma^{-1}\Omega\Sigma^{-1}$.\\
		(i) Similar to the proof of Lemma \ref{boundedness}, by Assumptions \ref{assumption_compact} and \ref{assum_y_tH1}, we can prove that
		\begin{equation}\label{finite_ell^2_H1}
		E\sup_{\bm \theta \in \Theta}\ell^2_{t,n}(\bm\theta)<\infty.
		\end{equation}
		By Assumption \ref{assum_y_tH1}, $\{y_{t,n}\}$ is geometrically ergodic, and hence $\beta$-mixing. By Theorem 3.49 of \cite{White2001}, it follows that $\ell_{t,n}(\bm\theta)$ is $\beta$-mixing. Then, by Lemma \ref{triangular array lemma} with $s=2$, we have
		\begin{equation}\label{point converge to E}
		\lim _{n \rightarrow \infty} \dfrac{1}{n-p} \sum_{t=p+1}^{n}\left[ \ell_{t, n}(\bm \theta)-E\ell_{t, n}(\bm \theta)\right]=o_p(1).
		\end{equation}
		Furthermore, the stationarity of $\ell_{t,n}(\bm \theta)$ ensures that
		\[
		\dfrac{1}{n-p} \sum_{t=p+1}^{n} E\ell_{t, n}(\bm \theta)=E\ell_{t, n}(\bm\theta),
		\]
		and the dominated convergence theorem entails that
		\[
		\lim _{n \rightarrow \infty} E\ell_{t, n}(\bm\theta)=E \lim _{n \rightarrow \infty} \ell_{t, n}(\bm\theta)=E\ell_{t}(\bm\theta).
		\]
		These together with \eqref{point converge to E} imply that, for any $\bm \theta \in \Theta$,
		\begin{equation}\label{point converge}
		\dfrac{1}{n-p}\sum_{t=p+1}^{n}\ell_{t,n}(\bm\theta)-E\ell_{t}(\bm\theta)=o_p(1).
		\end{equation}
		Moreover, by similar arguments as for Lemma \ref{boundedness}(ii), we can prove that 
		$$\sup_{\bm \theta\in \Theta}\left\| \dfrac{1}{n-p}\sum_{t=p+1}^{n}\dfrac{\partial \ell_{t, n}(\bm \theta)}{\partial \bm\theta} \right\|=O_p(1).$$ 
		As a result, Assumption 3A in \cite{nelson1991conditional} holds. Then by \eqref{point converge} and Theorem 2.1 of \cite{nelson1991conditional}, we have
		\begin{equation}\label{uniformal converge}
		\sup_{\bm \theta \in \Theta}\left| \dfrac{1}{n-p}\sum_{t=p+1}^{n}\ell_{t, n}(\bm \theta)-E\ell_{t}(\bm\theta) \right|=o_p(1).
		\end{equation}
		Finally, since $E\ell_{t}(\bm\theta)$ attains its global maximum at $\bm\theta_0$ by Lemma \ref{maxmum at true parm}, (i) holds by \eqref{uniformal converge} and Theorem 4.1.1 in \cite{Amemiya1985}. \\
		(ii) By Taylor's expansion, we have
		\begin{equation*}
		\sum_{t=p+1}^{n}\dfrac{\partial \ell_{t,n}(\widehat{\bm\theta}_{n,h})}{\partial \bm\theta}
		=\sum_{t=p+1}^{n}\dfrac{\partial \ell_{t,n}({\bm\theta}_{n})}{\partial \bm\theta}
		+\sum_{t=p+1}^{n}\dfrac{\partial^2 \ell_{t,n}(\bm\theta^\ast)}{\partial \bm\theta \partial \bm\theta^\prime}(\widehat{\bm{\theta}}_{n,h}-\bm\theta_n),
		\end{equation*}
		where $\bm\theta^\ast$ is between $\widehat{\bm\theta}_{n,h}$ and $\bm\theta_n$. 
		Moreover, by the Martingale Central Limit Theorem and the Cram\'{e}r-Wold device, together with Lemma \ref{LAN-lemma2}(ii), we can show that, as $n\to \infty$, 
		\[\dfrac{1}{\sqrt{n}}\sum_{t=p+1}^{n}\dfrac{\partial \ell_{t,n}({\bm\theta}_{n})}{\partial \bm\theta} \to_\mathcal{L}N(0,\Xi).\] 
		Recall that $\bm\theta_{n}=\bm\theta_{0}+\bm h/\sqrt{n}$. Then by (i) and Lemma \ref{LAN-lemma2}(i), we have
		\begin{align*}
		\sqrt{n}(\widehat{\bm \theta}_{n,h}-\bm\theta_0)&=\bm h+\left[ -\dfrac{1}{n}\sum_{t=p+1}^{n}\dfrac{\partial^2\ell_{t,n}(\bm\theta^\ast)}{\partial\bm\theta \partial\bm\theta^\prime} \right]^{-1}\dfrac{1}{\sqrt{n}}\sum_{t=p+1}^{n}\dfrac{\partial \ell_{t, n}(\bm\theta_{n})}{\partial\bm\theta} \\
		& \to_\mathcal{L}N(\bm h,\Xi) \quad \text{as}\; n\to\infty.
		\end{align*}
		The proof of this theorem is accomplished. 
	\end{proof}
	
	\subsection{Proof of Theorem \ref{thm test asymmetryH1}}
	
	\begin{proof}
		Denote $\Xi(\bm\theta)=[\Sigma(\bm\theta)]^{-1}\Omega(\bm\theta)[\Sigma(\bm\theta)]^{-1}$, where
		\[\Sigma(\bm \theta)=-E\left[\dfrac{\partial^{2} \ell_{t}\left(\boldsymbol{\theta}\right)}{\partial \boldsymbol{\theta} \partial \boldsymbol{\theta}^{\prime}}\right], \; 
		\Omega(\bm \theta)=E\left[\frac{\partial \ell_{t}\left(\boldsymbol{\theta}\right)}{\partial \boldsymbol{\theta}} \frac{\partial \ell_{t}\left(\boldsymbol{\theta}\right)}{\partial \boldsymbol{\theta}^{\prime}}\right].\]
		Clearly, $\Sigma(\bm\theta_0)=\Sigma$, $\Sigma(\widehat{\bm \theta}_n)=\widehat{\Sigma}$, $\Omega(\bm\theta_0)=\Omega$, $\Omega(\widehat{\bm \theta}_n)=\widehat{\Omega}$, $\Xi(\bm\theta_0)=\Xi$, $\Xi(\widetilde{\bm \theta}_n)=\widetilde{\Xi}$ and $\Xi(\widehat{\bm \theta}_n)=\widehat{\Xi}$. 
		Based on Theorem \ref{LAN}, we show the limit distributions of the Wald, LM and QLR test statistics under $\mathbb{P}_{n,h}$, respectively.\\
		\textbf{(i) Wald test}
		
		Since $R\bm\theta_{0} =\bm 0_{p}$, under $\mathcal{P}_{n,h}$, by Theorem \ref{LAN}, we have
		\begin{equation}\label{Centerned QMLE H1}
		\left(R \Xi R^\prime\right)^{-1/2}\sqrt{n}R\widehat{\bm \theta}_{n,h}\rightarrow_\mathcal{L} N(\left(R \Xi R^\prime\right)^{-1/2}R\bm h,I_p) \quad \text{as}\; n\to\infty.
		\end{equation}
		Then by Slutsky's theorem and $\widehat{\bm\theta}_{n,h}\to_p \bm\theta_0$, it holds that
		\[
		W_n = n(R\widehat{\bm \theta}_{n,h})^{\prime}[R{\Xi(\widehat{\bm \theta}_{n,h})}R^{\prime}]^{-1}R\widehat{\bm \theta}_{n,h}=(\sqrt{n}R\widehat{\bm \theta}_{n,h})^{\prime}\left(R \Xi R^\prime\right)^{-1}(\sqrt{n}R\widehat{\bm \theta}_{n,h})+o_p(1)\rightarrow_\mathcal{L} \chi^2_p(\delta).
		\]
		where $\delta=\bm h^{\prime} R^{\prime}\left(R \Xi R^\prime\right)^{-1} R \bm h$.\\
		\textbf{(ii) Lagrange Multiplier test}
		
		Similar to the proof of Theorem \ref{thm test asymmetry}(ii) and Theorem \ref{LAN}, we can show that 
		\begin{equation}\label{Restricted QMLE H1}
		\dfrac{1}{\sqrt{n}}\dfrac{\partial L_{n,h}(\widetilde{\bm \theta}_{n,h})}{\partial \bm \theta}=R^{\prime}\Delta^{-1}\sqrt{n}R\widehat{\bm \theta}_{n,h}+o_p(1)
		\end{equation}
		as $n\to\infty$, where $\widetilde{\bm \theta}_{n,h}$ is the restricted QMLE under $\mathcal{P}_{n,h}$ such that $R\bm\theta_{0} =\bm 0_{p}$. 
		By Theorem \ref{LAN}, if $R\bm\theta_{0} =\bm 0_{p}$ holds, then we have $\widetilde{\bm \theta}_{n,h}\to_p \bm\theta_0$ as $n\to\infty$.
		Recall that $\Delta=R\Sigma^{-1}R^\prime$. Therefore, by Slutsky's theorem and \eqref{Centerned QMLE H1}, we have
		\begin{align*}
		L_n&=(\sqrt{n}R\widehat{\bm \theta}_{n,h})^\prime\Delta^{-1}R[\Sigma(\widetilde{\bm \theta}_{n,h})]^{-1}R^\prime[R\Xi(\widetilde{\bm \theta}_{n,h})R^{\prime}]^{-1}R[\Sigma(\widetilde{\bm \theta}_{n,h})]^{-1}R^{\prime}\Delta^{-1}(\sqrt{n}R\widehat{\bm \theta}_{n,h}) \\
		&=(\sqrt{n}R\widehat{\bm \theta}_{n,h})^\prime \left(R \Xi R^\prime\right)^{-1}(\sqrt{n}R\widehat{\bm \theta}_{n,h})+o_p(1)\rightarrow_\mathcal{L} \chi^2_p(\delta).
		\end{align*}
		\textbf{(iii) Quasi-likelihood ratio test}
		
		Similar to the proof of Theorem \ref{thm test asymmetry}(iii) and Theorem \ref{LAN}, together with the facts that $\partial L_{n,h}(\widehat{\bm \theta}_{n,h})/\partial \bm \theta=\bm 0$, $\sqrt{n}(\widetilde{\bm \theta}_{n,h}-\widehat{\bm \theta}_{n,h})=-\Sigma^{-1}n^{-1/2}\partial L_{n,h}(\widetilde{\bm\theta}_{n,h})/\partial \bm \theta+o_p(1)$, \eqref{Centerned QMLE H1}  and \eqref{Restricted QMLE H1}, it can be verified that
		\begin{align*}
		Q_n&=-2[L_{n,h}(\widetilde{\bm \theta}_{n,h})-L_{n,h}(\widehat{\bm \theta}_{n,h})] \\ 
		&=-2\dfrac{\partial L_{n,h}(\widehat{\bm \theta}_{n,h})}{\partial \bm \theta}(\widetilde{\bm \theta}_{n,h}-\widehat{\bm \theta}_{n,h})+n(\widetilde{\bm \theta}_{n,h}-\widehat{\bm \theta}_{n,h})^{\prime} \Sigma(\widetilde{\bm \theta}_{n,h}-\widehat{\bm \theta}_{n,h})+o_{p}(1)\\
		&=(\sqrt{n}R\widehat{\bm \theta}_{n,h})^\prime\Delta^{-1}(\sqrt{n}R\widehat{\bm \theta}_{n,h})+o_p(1) \\
		&=[\Gamma\left(R \Xi R^\prime\right)^{-1/2}\sqrt{n}R\widehat{\bm \theta}_{n,h}]^\prime \Gamma D \Gamma^{\prime} [\Gamma\left(R \Xi R^\prime\right)^{-1/2}\sqrt{n}R\widehat{\bm \theta}_{n,h}] \rightarrow_\mathcal{L} \sum_{j=1}^{p}e_jx_{j,v_j^2}
		\end{align*}
		as $n\to\infty$, where $D=\left(R \Xi R^\prime\right)^{1/2}\Delta^{-1}\left(R \Xi R^\prime\right)^{1/2}$, 
		$\Gamma$ is an orthogonal matrix such that $\Gamma D \Gamma^\prime=\diag\{e_1,\ldots,e_p\}$ with $e_j$ being the $j$-th eigenvalue of $D$, $v_j$ is the $j$-th component of $\bm v=\Gamma\left(R \Xi R^\prime\right)^{-1/2}R\bm h$, and $x_{j,v^2_j}$'s are independent random variables following the $\chi^2_{1}(v^2_j)$ distribution for $j=1, \ldots, p$.  
		The proof is completed.
	\end{proof}
	
	\subsection{Proof of Theorem \ref{thmACF}}
	
	\begin{proof}
		Denote $\eta_{t}(\bm\theta)=(y_t-\bm \alpha^\prime \bm Y_{t-1})/(\bm\beta^\prime \bm X_{t-1})$. 
		Recall that $\eta_{t}=\eta_{t}(\bm \theta_{0})$, $\widehat{\eta}_{t}=\eta_{t}(\widehat{\bm \theta}_n)$, $E(\eta_t)=0$ and $\var(\eta_t)=1$. 
		When model \eqref{model} is correctly specified, by the ergodic theorem and the dominated convergence theorem, it can be shown that, 
		as $n\to\infty$, 
		\[\bar{\eta}_1=\dfrac{1}{n-p}\sum_{t=p+1}^{n}\widehat{\eta}_t \to_p 0 \quad \text {and} \quad 
		\dfrac{1}{n}\sum_{t=p+1}^{n}\left(\widehat{\eta}_{t}-\bar{\eta}_1\right)^{2} \to_p 1,\]
		\[
		\bar{\eta}_2=\dfrac{1}{n-p}\sum_{t=p+1}^{n}|\widehat{\eta}_t| \to_p \tau_2 \quad \text {and} \quad 
		\dfrac{1}{n}\sum_{t=p+1}^{n}(|\widehat{\eta}_{t}|-\bar{\eta}_2)^2 \to_p {\sigma}^{2}_\xi.
		\]
		Hence, it follows that
		\begin{equation}\label{relationship between hat_acf and tilde_acf}
		\sqrt{n}\left(\widehat{\bm \rho}^{\prime}, \widehat{\bm \gamma}^{\prime}\right)^{\prime}=\sqrt{n}\left(\widetilde{\bm \rho}^{\prime}, \widetilde{\bm\gamma}^{\prime}\right)^{\prime}+o_{p}(1),
		\end{equation}
		where $\widetilde{\bm\rho}=(\widetilde{\rho}_{1},\ldots,\widetilde{\rho}_{M})^\prime$ and $\widetilde{\bm \gamma}=(\widetilde{\gamma}_{1},\ldots,\widetilde{\gamma}_{M})^\prime$ with $\widetilde{\rho}_{k}=n^{-1}\sum_{t=p+k+1}^{n}\widehat{\eta}_{t}\widehat{\eta}_{t-k}$ and $\widetilde{\gamma}_{k}=(n\sigma^2_\xi)^{-1}\sum_{t=p+k+1}^{n}(|\widehat{\eta}_{t}|-\tau_2)(|\widehat{\eta}_{t-k}|-\tau_2)$. 
		It can be verified that
		\begin{align*}
		\sqrt{n}\widetilde{\rho}_{k} &= \dfrac{1}{\sqrt{n}}\sum_{t=p+k+1}^{n}\eta_{t}\eta_{t-k} + \dfrac{1}{\sqrt{n}}\sum_{t=p+k+1}^{n}A_{1nt}+\dfrac{1}{\sqrt{n}}\sum_{t=p+k+1}^{n}A_{2nt} + \dfrac{1}{\sqrt{n}}\sum_{t=p+k+1}^{n}A_{3nt},  \\
		\sigma^2_\xi\sqrt{n}\widetilde{\gamma}_{k} &= \dfrac{1}{\sqrt{n}}\sum_{t=p+k+1}^{n}\xi_t\xi_{t-k} + \dfrac{1}{\sqrt{n}}\sum_{t=p+k+1}^{n}B_{1nt}+\dfrac{1}{\sqrt{n}}\sum_{t=p+k+1}^{n}B_{2nt} + \dfrac{1}{\sqrt{n}}\sum_{t=p+k+1}^{n}B_{3nt},
		\end{align*}
		where $\xi_t=|\eta_t|-\tau_2$, 
		\[A_{1nt}=\left(\widehat{\eta}_{t}-\eta_{t}\right)\eta_{t-k}, A_{2nt}=\eta_{t}\left(\widehat{\eta}_{t-k}-\eta_{t-k}\right), A_{3nt}=\left(\widehat{\eta}_{t}-\eta_{t}\right)(\widehat{\eta}_{t-k}-\eta_{t-k});\]
		\[B_{1nt}=\left(|\widehat{\eta}_{t}|-|\eta_t|\right)\xi_{t-k}, B_{2nt}=\xi_{t}(|\widehat{\eta}_{t-k}|-|\eta_{t-k}|), B_{3nt}=\left(|\widehat{\eta}_{t}|-|\eta_{t}|\right)(|\widehat{\eta}_{t-k}|-|\eta_{t-k}|).\]
		
		Note that $\partial\eta_t(\bm\theta)/\partial\bm\theta=(-\bm Y_{t-1}^\prime/(\bm\beta^\prime \bm X_{t-1}),-(y_t-\bm \alpha^\prime \bm Y_{t-1})\bm X_{t-1}^\prime/(\bm\beta^\prime \bm X_{t-1})^2)^{\prime}$ and $\sqrt{n}(\widehat{\bm\theta}_{n} - \bm\theta_{0})=O_p(1)$. Then by Taylor's expansion and the ergodic theorem, we have 
		\begin{align}\label{A1n}
		\dfrac{1}{\sqrt{n}}\sum_{t=p+k+1}^{n}A_{1nt}=&\dfrac{1}{n}\sum_{t=p+k+1}^{n}\eta_{t-k}\dfrac{\partial\eta_t(\bm \theta_0)}{\partial \bm \theta^\prime}\sqrt{n}(\widehat{\bm \theta}_n-\bm \theta_0)+o_p(1) \nonumber\\
		=& -\dfrac{1}{n}\sum_{t=p+k+1}^{n}\dfrac{\eta_{t-k}\bm Y_{t-1}^\prime}{\bm \beta_{0}^\prime \bm X_{t-1}}\sqrt{n}(\widehat{\bm\alpha}_n-\bm\alpha_{0}) \nonumber\\
		&-\dfrac{1}{n}\sum_{t=p+k+1}^{n}\dfrac{\eta_{t-k}\eta_t\bm X_{t-1}^\prime}{\bm \beta_{0}^\prime \bm X_{t-1}}\sqrt{n}(\widehat{\bm\beta}_n-\bm\beta_{0})+o_p(1) \nonumber\\
		=&\bm U_{\rho k}\sqrt{n}(\widehat{\bm \theta}_n-\bm \theta_0)+o_p(1).
		\end{align}
		Similarly, it can be shown that
		\begin{align}\label{A2nA3n}
		\dfrac{1}{\sqrt{n}}\sum_{t=p+k+1}^{n}A_{2nt}&=\dfrac{1}{n}\sum_{t=p+k+1}^{n}\eta_{t} \sqrt{n}\left[\eta_{t-k}(\widehat{\bm \theta}_{n})-\eta_{t-k}(\bm\theta_0)\right] = o_{p}(1), \; \text{and} \nonumber\\
		\dfrac{1}{\sqrt{n}}\sum_{t=p+k+1}^{n}A_{3nt}&=\dfrac{1}{n} \sum_{t=p+k+1}^{n}(\widehat{\eta}_{t}-\eta_{t})\sqrt{n}\left[\eta_{t-k}(\widehat{\bm \theta}_{n})-\eta_{t-k}(\bm\theta_0)\right] =o_{p}(1). 
		\end{align}
		As a result, we have
		\begin{equation}\label{rhok rep}
		\sqrt{n}\widetilde{\rho}_{k} = \dfrac{1}{\sqrt{n}}\sum_{t=p+k+1}^{n}\eta_{t}\eta_{t-k} + \bm U_{\rho k}\sqrt{n}(\widehat{\bm \theta}_n-\bm \theta_0)+o_p(1).
		\end{equation}
		Similar to the proof of \eqref{A1n}, we can show that 
		\[
		\dfrac{1}{\sqrt{n}}\sum_{t=p+k+1}^{n}B_{1nt}=\bm U_{\gamma k}\sqrt{n}(\widehat{\bm \theta}_n-\bm \theta_{0})+o_p(1).
		\]
		Note that $E(\xi_t)=0$. Similar to the proof of \eqref{A2nA3n}, we have
		$n^{-1/2}\sum_{t=p+k+1}^{n}B_{2nt}=o_p(1)$ and $n^{-1/2}\sum_{t=p+k+1}^{n}B_{3nt}=o_p(1)$. 
		Therefore, it holds that
		\begin{equation}\label{gammak rep}
		\sqrt{n}\widetilde{\gamma}_{k} = \dfrac{1}{\sqrt{n}}\sum_{t=p+k+1}^{n}\dfrac{\xi_{t}\xi_{t-k}}{\sigma^2_\xi} + \dfrac{\bm U_{\gamma k}}{\sigma^2_\xi}\sqrt{n}(\widehat{\bm \theta}_n-\bm \theta_0)+o_p(1).
		\end{equation}
		Combining \eqref{rhok rep} and \eqref{gammak rep}, together with \eqref{QMLE representation} and \eqref{relationship between hat_acf and tilde_acf}, we can obtain that
		\begin{equation}\label{relationship between hat_acf and true_acf}
		\sqrt{n}(\widehat{\bm \rho}^\prime,\widehat{\bm \gamma}^\prime)^\prime=V\dfrac{1}{\sqrt{n}}\sum_{t=p+k+1}^{n}\bm v_t+o_p(1),
		\end{equation}
		where $\bm v_t=\left(\eta_t\eta_{t-1},\ldots,\eta_{t}\eta_{t-M},\xi_{t}\xi_{t-1}/{\sigma}_\xi^{2},\ldots,\xi_{t}\xi_{t-M}/{\sigma}_\xi^{2},
		\partial \ell_t(\bm \theta_{0})/\partial \bm\theta^\prime\Sigma^{-1}\right)^\prime$ and
		\[
		V=\left(\begin{array}{ccc}
		I_{M} & 0 & U_{\rho} \\
		0 & I_{M} & U_{\gamma}/{\sigma}_\xi^{2}
		\end{array}\right).
		\]
		Then by the martingale central limit theorem and the Cram\'{e}r-Wold device, we have
		\begin{equation*}
		\sqrt{n}(\widehat{\bm\rho}^{\prime},\widehat{\bm\gamma}^{\prime})^{\prime}\rightarrow_{\mathcal{L}} N\left(0, V G V^{\prime}\right),
		\end{equation*}
		where $G=E(\bm v_{t} \bm v_{t}^{\prime})$. The proof of this theorem is accomplished. 
	\end{proof}

	\newpage
	\bibliography{AsymmetricLDAR}
	
	\clearpage
	\begin{figure}
		\centering
		\includegraphics[width=6.4in]{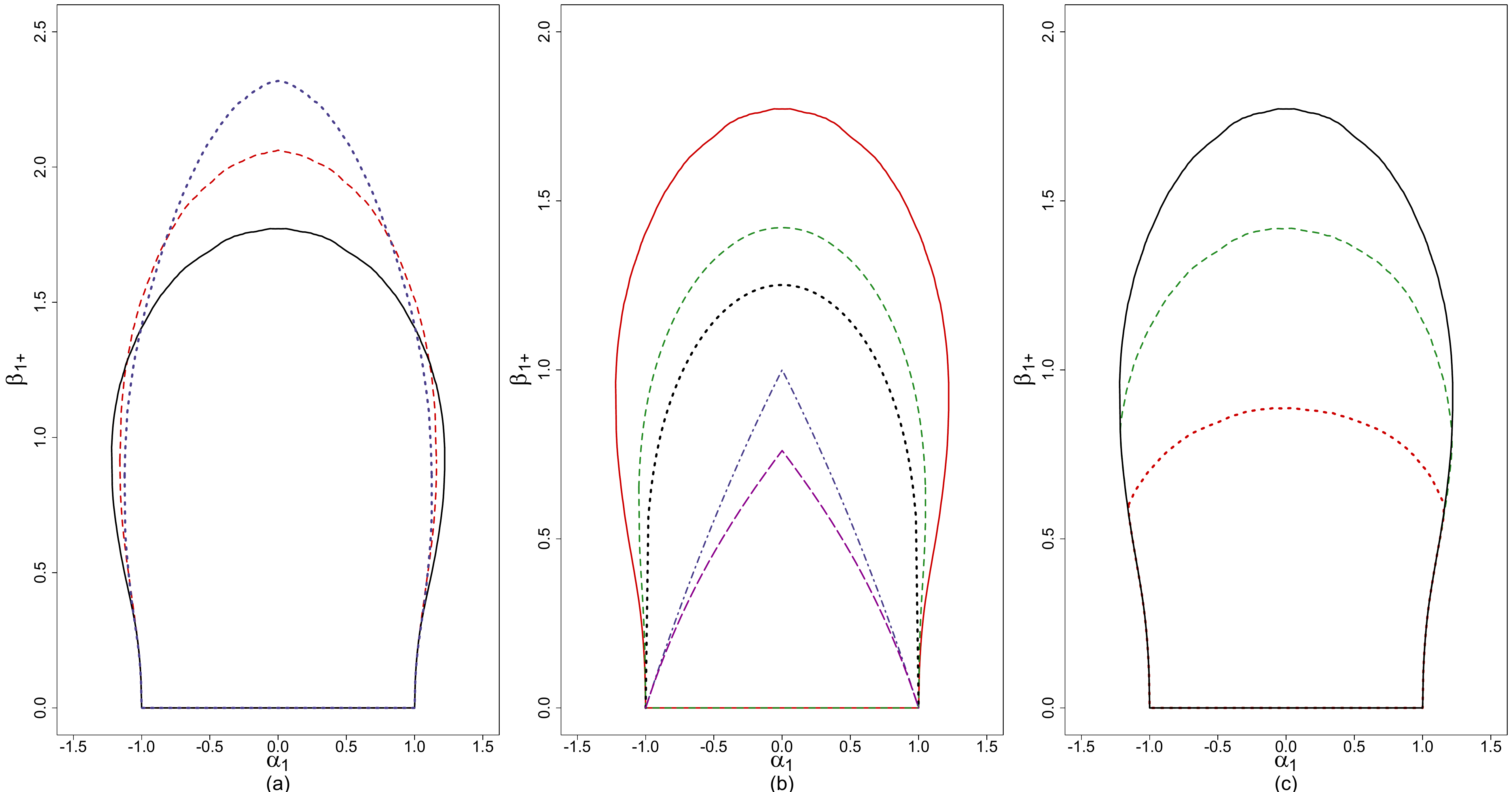}
		\caption{\label{fig1}  Stationarity regions of model \eqref{model} of order one.
			(a): $\beta_{1+}=\beta_{1-}$, $\kappa=0.1$ and $\eta_t$ follows the standard normal $N(0,1)$ (black solid line), Student's $t_5$ (red dashed line) or Laplace (blue dotted line) distribution.
			(b): $\beta_{1+}=\beta_{1-}$, $\eta_{t}\sim N(0,1)$ and $\kappa=0.1$ (red solid line), $0.6$ (green dashed line), $1$ (black dotted line), $2$ (blue dotdash line) or $4$ (purple longdash line).
			(c): $\eta_t\sim N(0,1)$, $\kappa$=0.1, and $\beta_{1+}=d\beta_{1-}$ with $d=0.5$ (red doted line), $d=0.8$ (green dashed line) and $d=1$ (black solid line). }
	\end{figure}

	\begin{table}
		\caption{\label{tableQMLE} Biases ($\times 10$), ESDs ($\times 10$) and ASDs ($\times 10$) of the QMLE $\widehat{\bm \theta}_n$ when the innovations follow the standard normal, standardized Student $t_{5}$ or standardized skewed Student $st_{5,-1.2}$ distribution.}
		\begin{center}
			\begin{tabular}{crrrrrrrrrrrr}
				\hline\hline
				&&\multicolumn{3}{c}{$N(0,1)$}&&\multicolumn{3}{c}{$t_5$}&&\multicolumn{3}{c}{$st_{5,-1.2}$}\\
				\cline{3-5}\cline{7-9}\cline{11-13}
				&$n$&\multicolumn{1}{c}{Bias}&\multicolumn{1}{c}{ESD}&\multicolumn{1}{c}{ASD}&&\multicolumn{1}{c}{Bias}&\multicolumn{1}{c}{ESD}&\multicolumn{1}{c}{ASD}&&\multicolumn{1}{c}{Bias}&\multicolumn{1}{c}{ESD}&\multicolumn{1}{c}{ASD}\\
				\hline
				$\alpha$ &	$500$	&	-0.024	&	0.515	&	0.520	&&	-0.022	&	0.570	&	0.548	&&	-0.024	 &	0.568	&	0.554\\
				&	$1000$	&	 0.022	&	0.371	&	0.369	&&	-0.019	&	0.378	&	0.392	&&	-0.001	 &  0.401	&	0.395\\
				&	$2000$  &	-0.022	&	0.257	&	0.262	&&	-0.004	&	0.274	&	0.278	&&	-0.014   &	0.291	&	0.280\\
				$\omega$ &	$500$	&	 0.040	&	0.271	&	0.264	&&	0.017	&	0.426	&	0.386	&&	 0.009	 &	0.427	&	0.403\\
				&	$1000$	&	 0.016	&	0.193	&	0.187	&&	0.016	&	0.324	&	0.290	&&	 0.008	 &  0.333	&	0.311\\
				&	$2000$  &	 0.010	&	0.129	&	0.132	&& -0.001	&	0.214	&	0.213	&&	-0.003   &	0.231	&	0.223\\
				$\beta^+$  &	$500$	&	-0.088	&	0.640	&	0.587	&& -0.013   &	1.049	&	0.957	&&  -0.039   &	1.191	&	1.031\\
				&	$1000$	&	-0.051	&	0.418	&	0.416	&& -0.006   &	0.819	&	0.724	&&  -0.022	 &	0.860	&	0.793\\
				&	$2000$  &	-0.020	&	0.295	&	0.295	&&  0.001	&	0.565	&	0.534	&&  -0.008	 &	0.619	&	0.570\\
				$\beta_-$  &	$500$	&	-0.147	&	0.736	&	0.699	&& -0.109   &	1.262	&	1.145	&&  -0.156	 &	1.274	&	1.182\\
				&	$1000$	&	-0.045	&	0.509	&	0.498	&& -0.059	&	0.896	&	0.860	&&  -0.075	 &	0.962	&	0.914\\
				&	$2000$  &	-0.037	&	0.352	&	0.352	&& -0.037	&	0.687	&	0.636	&&  -0.049   &  0.725	&	0.660\\
				\hline
			\end{tabular}
		\end{center}
	\end{table}

	\begin{table}
		\caption{\label{tableBIC} Percentages of underfitted, correctly selected and overfitted models by BIC$_1$ and BIC$_2$ when the innovations follow the standard normal, standardized Student $t_{5}$ or standardized skewed Student $st_{5,-1.2}$ distribution.}
		\begin{center}
			\begin{tabular}{lrrrrrrrrrrrr}
				\hline\hline
				&&\multicolumn{3}{c}{$N(0,1)$}&&\multicolumn{3}{c}{$t_5$}&&\multicolumn{3}{c}{$st_{5,-1.2}$}\\
				\cline{3-5}\cline{7-9}\cline{11-13}	
				& $n$	&	Under &	Exact	& Over && Under &	Exact	& Over &&	Under &	Exact	& Over\\\hline
				BIC$_1$ &  200 &0   &49.9&50.1 &&0.3 &55.2&44.5 &&  0.6 & 53.6&45.8\\
				&  500 &0.1 &94.7&5.2  &&2.1 &91.4&6.5  &&  2.2 & 91.1&6.7\\
				& 1000 &0   &100 &0    &&1.8 &98.1&0.1  &&  2.0 & 98.0&0\\
				BIC$_2$ &  200 &1.1 &89.8&9.1  && 2.7&88.3&9.0  && 3.8 &86.6& 9.6 \\
				&  500 &0.4 &99.5&0.1  && 5.1&94.6&0.3  && 5.4 &93.6& 0.1 \\
				& 1000 &0   &100 &0    && 4.4&95.6&0    && 6.4 &93.6& 0 \\
				\hline
			\end{tabular}	
		\end{center}
	\end{table}
	
	\begin{table} 
		\caption{\label{tablesize} Empirical sizes of three tests $W_n$, $L_n$ and $Q_n$ at $5\%$ significance level, where the innovations follow the standard normal, standardized Student $t_{5}$ or standardized skewed Student $st_{5,-1.2}$ distribution.}	
		\begin{center}
			\begin{tabular}{rrrrrrrrrrrr}
				\hline\hline
				&\multicolumn{3}{c}{$N(0,1)$}&&\multicolumn{3}{c}{$t_5$}&&\multicolumn{3}{c}{$st_{5,-1.2}$}\\
				\cline{2-4}\cline{6-8}\cline{10-12}
				$n$&\multicolumn{1}{c}{$W_n$}&\multicolumn{1}{c}{$L_n$}&\multicolumn{1}{c}{$Q_n$}&&\multicolumn{1}{c}{$W_n$}&\multicolumn{1}{c}{$L_n$}&\multicolumn{1}{c}{$Q_n$}&&\multicolumn{1}{c}{$W_n$}&\multicolumn{1}{c}{$L_n$}&\multicolumn{1}{c}{$Q_n$}\\
				\hline
				$500$   &	0.062	&	0.054	&	0.061	&&  0.065	&	0.038	&	0.063	&&	0.061	 &	0.043	&	0.060\\
				$1000$  &	0.061	&	0.058	&	0.059	&&	0.056	&	0.041	&	0.058	&&	0.060	 &	0.040	&	0.059\\
				$2000$  &	0.048	&	0.047	&	0.047	&&	0.052   &	0.047	&	0.053	&&	0.053    &	0.048	&	0.054\\
				\hline
			\end{tabular}
		\end{center}
	\end{table}
	
	\begin{figure}
		\centering
		\includegraphics[width=6in]{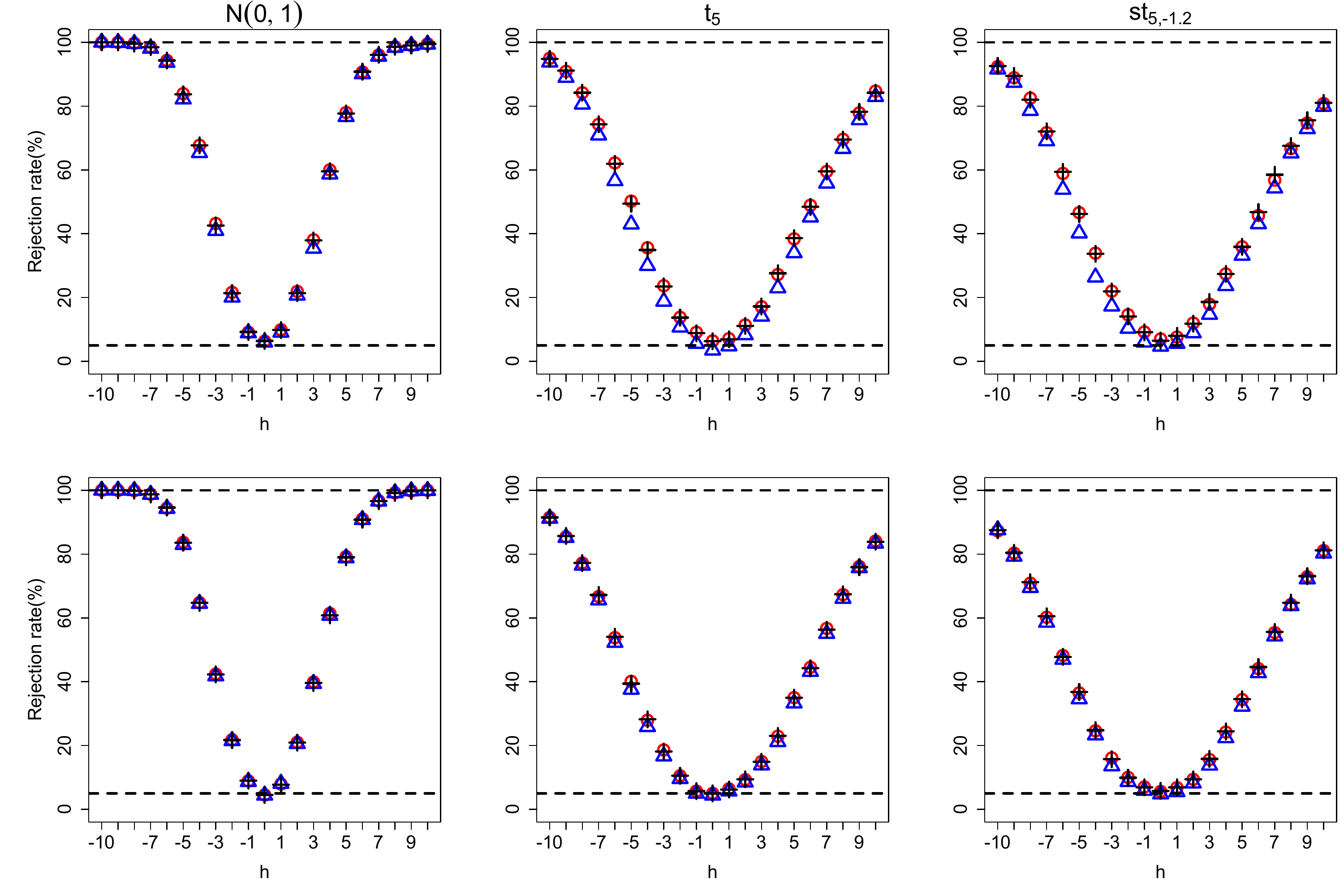}
		\caption{\label{fig2} Local power comparison at the $5\%$ significance level. Upper: $n=500$; bottom: $n=2000$. Circle($\circ$): $W_n$; triangle($\triangle$): $L_n$; cross($+$): $Q_n$.}
	\end{figure}
	
	\begin{table}
		\caption{\label{tablechecking} Rejection rates of the tests $Q(6)$ at the 5\% significance level, where the innovations follow the standard normal, standardized $t_{5}$ or standardized $st_{5,-1.2}$ distribution.}	
		\begin{center}	
			\begin{tabular}{rrccccccccccc}
				\hline\hline
				&&\multicolumn{3}{c}{$N(0,1)$}&&\multicolumn{3}{c}{$t_5$}&&\multicolumn{3}{c}{$st_{5,-1.2}$}\\
				\cline{3-5}\cline{7-9}\cline{11-13}
				$c_1$ & $c_2$ &500&1000&2000&&500&1000&2000&&500&1000&2000\\
				\hline		
				0.0	& 0.0 &  0.059 & 0.057 & 0.050 && 0.067 & 0.062 & 0.057 && 0.061 & 0.058 & 0.057 \\
				0.1 & 0.0 &  0.214 & 0.471 & 0.855 && 0.240 & 0.440 & 0.806 && 0.222 & 0.433 & 0.827 \\
				0.3 & 0.0 &  0.997 & 1.000 & 1.000 && 0.994 & 1.000 & 1.000 && 0.999 & 1.000 & 1.000 \\
				0.0 & 0.1 &  0.085 & 0.201 & 0.455 && 0.072 & 0.120 & 0.276 && 0.076 & 0.125 & 0.264 \\
				0.0 & 0.3 &  0.655 & 0.989 & 1.000 && 0.428 & 0.862 & 0.997 && 0.418 & 0.886 & 0.999 \\
				\hline
			\end{tabular}
		\end{center}
	\end{table}		
	
	\begin{figure}[htp]
		\centering
		\includegraphics[width=6.4in]{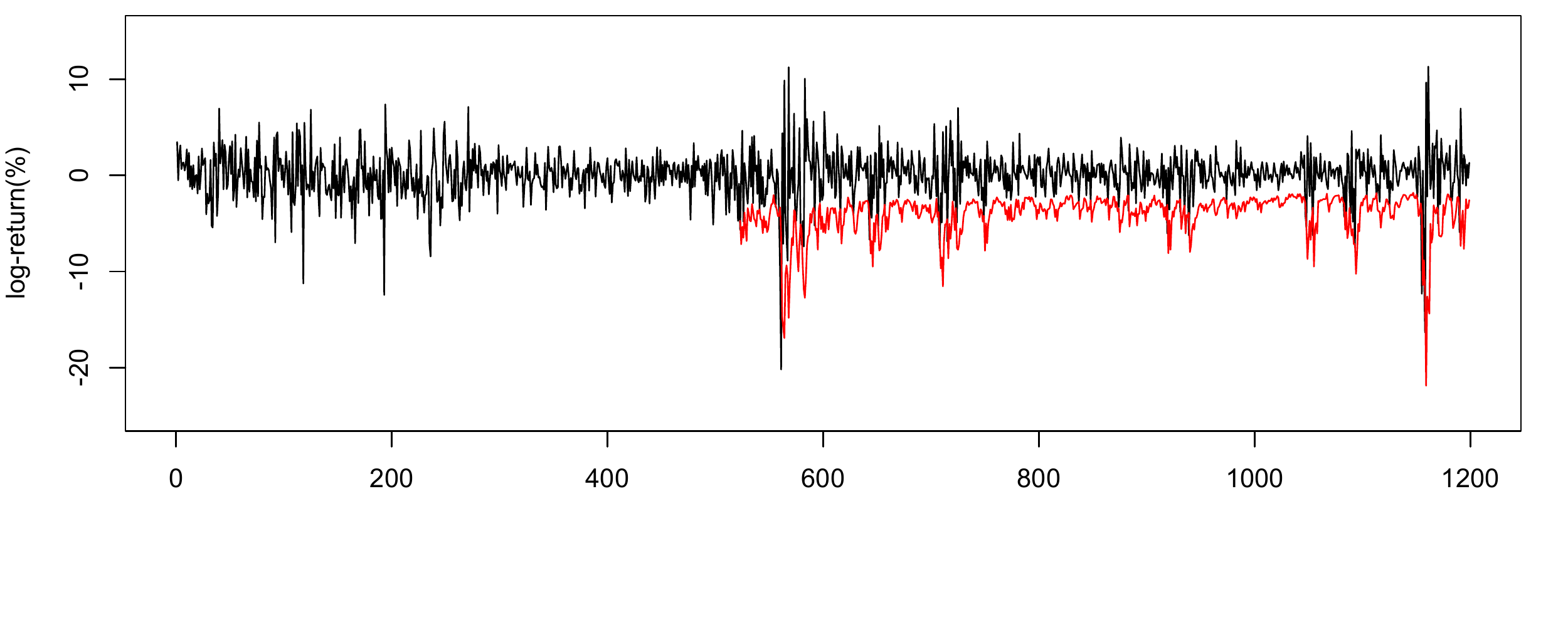}
		\caption{\label{fig_real_data}  Time plot for centered weekly log returns in percentage (black line) of S\&P500 Index from January 1998 to December 2021, with one-week negative VaR forecasts at the level of 5\% (red line) from January 2008 to December 2020.}
	\end{figure}
	
	\begin{table}
		\caption{\label{table_real_data_stat} Summary statistics for S\&P500 returns.}
		\begin{center}
			\begin{tabular}{ccccccc}
				\hline \hline
				Mean& Median & Std.Dev. & Skewness & Kurtosis & Min & Max\\
				\hline
				0.00 & 0.12 & 2.53 & -0.90 & 7.17&-20.20&11.30\\
				\hline
			\end{tabular}
		\end{center}
	\end{table}
	
	\begin{figure}[htp]
		\centering
		\includegraphics[width=6.4in]{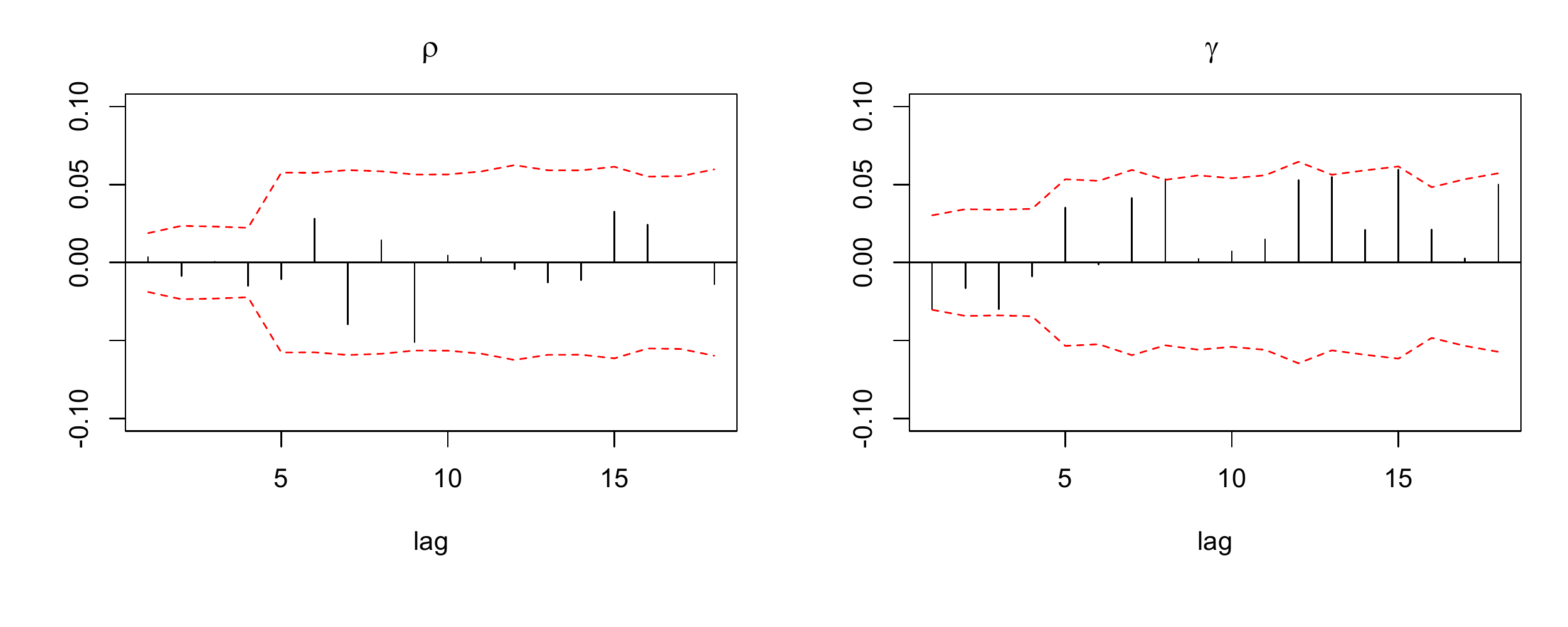}
		\caption{\label{fig_real_data_portmanteau_test}  Residual ACF plots for $\widehat{\rho}_l$ (left panel) and $\widehat{\gamma}_l$ (right panel), where the red dashed lines are the corresponding 95\% confidence bounds. }
	\end{figure}
	
	\begin{table}
		\caption{\label{table_real_data_forecast} Empirical coverage rate (\%) and $p$-values of two VaR backtests of three models at the 1\%, 5\%, 95\% and 99\% conditional quantiles. M1, M2 and M3 represent the ALDAR, LDAR and AR-TGARCH models, respectively. }
		\begin{center}
			\begin{tabular}{crrrrrrrrrrrrrrr}
				\hline\hline 
				& \multicolumn{3}{c}{$\tau=1\%$} && \multicolumn{3}{c}{$\tau=5\%$} && \multicolumn{3}{c}{$\tau=95\%$} && \multicolumn{3}{c}{$\tau=99\%$}\\
				\cline{2-4}\cline{6-8}\cline{10-12}\cline{14-16}
				&\multicolumn{1}{c}{ECR}&\multicolumn{1}{c}{CC}&\multicolumn{1}{c}{DQ}&&\multicolumn{1}{c}{ECR}&\multicolumn{1}{c}{CC}&\multicolumn{1}{c}{DQ}&&\multicolumn{1}{c}{ECR}&\multicolumn{1}{c}{CC}&\multicolumn{1}{c}{DQ}&&\multicolumn{1}{c}{ECR}&\multicolumn{1}{c}{CC}&\multicolumn{1}{c}{DQ}\\
				\hline
				M1	   &	1.48	&	0.16	&	0.10	&&	5.76	&	0.58	&	0.89	&&	94.68	 &	0.93	&	0.98	&&	98.82	 &	0.82	&	0.96\\
				M2	   &	3.54	&	0.00	&	0.00	&&	10.04	&	0.00	&	0.00	&&	88.18	 &  0.00	&	0.00	&&	93.80	 &	0.00	&	0.00\\
				M3 	   &	1.33	&	0.19	&	0.06	&&	5.02	&	0.26	&	0.33	&&	94.68    &	0.35	&	0.06	&&	98.22	 &	0.15	&	0.02\\
				\hline
			\end{tabular}
		\end{center}
	\end{table}
	
\end{document}